\documentclass[twoside,11pt]{article}
\usepackage[utf8]{inputenc}
\usepackage{amsfonts}
\usepackage{amstext}
\usepackage{amsmath,amsthm}
\usepackage[preprint]{jmlr2e}

\usepackage[linesnumbered,ruled]{algorithm2e}
\usepackage{booktabs}
\usepackage{microtype}
\usepackage{lastpage}

\hypersetup{hidelinks}
\graphicspath{{figures/}}

\newtheorem{problem}[theorem]{Problem}

\newcommand{\dist}{\mathtt{dist}}
\newcommand{\Cost}{\mathtt{Cost}}
\newcommand{\Cen}{\mathtt{Cen}}
\newcommand{\CS}{\mathtt{CS}}
\newcommand{\supp}{\mathtt{supp}}
\newcommand{\OPT}{\mathrm{OPT}}
\newcommand{\LP}{\mathrm{LP}}
\newcommand{\Sopt}{S_{\mathrm{opt}}}

\title{A Sub-4 Approximation for Fair $k$-Means}

\author{\name Kangke Cheng\textsuperscript{1,\(\dagger\)}
       \email ke314159@mail.ustc.edu.cn \\
       \name Guanlin Mo\textsuperscript{1,\(\dagger\)}
       \email moguanlin@mail.ustc.edu.cn \\
       \name Shihong Song\textsuperscript{2}
       \email S.Song-29@sms.ed.ac.uk \\
       \name Hu Ding\textsuperscript{1,*}
       \email huding@ustc.edu.cn \\
       \addr \textsuperscript{1}University of Science and Technology of China, Hefei, China\\
       \textsuperscript{2}School of Informatics, University of Edinburgh, Edinburgh, UK}

\ShortHeadings{A Sub-4 Approximation for Fair $k$-Means}{Cheng, Mo, Song, and Ding}
\firstpageno{1}

\begin{document}

\maketitle
\thispagestyle{plain}

\begin{center}
\small \textsuperscript{\(\dagger\)}Kangke Cheng and Guanlin Mo contribute equally.\qquad
\textsuperscript{*}Corresponding author.
\end{center}

\begin{abstract}
Fairness in clustering has attracted sustained research interest, motivated by the need to ensure equitable representation of protected groups in machine learning applications.
We study fair $k$-means clustering in Euclidean space, where the proportion of each protected group in every cluster must lie within specified lower and upper bounds.
These constraints make it challenging to determine both cluster centers and point assignments.
We propose an approximation algorithm that combines a linear programming relaxation with geometric transformations of the input to construct candidate center sets.
Given a $\rho$-approximate algorithm for weighted $k$-means and any $\epsilon>0$, our algorithm returns a fractional solution whose cost is at most $1+(3-1/\Gamma)\rho+O(\epsilon)$ times the optimal integral fair cost, where $\Gamma\approx6.357$ is an upper bound on the integrality gap of the standard Euclidean $k$-means LP.
With a PTAS as the subroutine, the approximation ratio becomes $3.8427+O(\epsilon)$, improving the previous factor of $5+O(\epsilon)$ to below $4$.
The solution satisfies all fairness constraints exactly and can be rounded to an integral assignment with a bounded additive violation of fairness and no increase in cost.
The same approximation guarantee extends to the $k$-sparse Wasserstein barycenter problem.
\end{abstract}

\begin{keywords}
fair clustering, $k$-means, approximation algorithms, linear programming, Wasserstein barycenter
\end{keywords}

\section{Introduction}
\label{sec-introduction}

\subsection{Background and Motivation}
\label{sec-background}

Clustering is a basic primitive in machine learning, and $k$-means is among its most studied formulations~\citep{DBLP:journals/prl/Jain10}, with applications in feature engineering~\citep{glassman2014feature,alelyani2018feature}, image processing~\citep{coleman1979image,chang2017deep}, and bioinformatics~\citep{ronan2016avoiding,yuan2023index,Zhang2023,nugent2010overview}. Given a set $P$ of $n$ points in $\mathbb{R}^d$, the problem asks for a set $S$ of at most $k$ centers minimizing the sum over $p\in P$ of the squared distance from $p$ to its nearest center.

When clustering informs decisions about people, the composition of the clusters matters as well, and a growing body of work asks for clusterings that treat protected groups even-handedly~\citep{chierichetti2017fair,bera2019fair,huang2019coresets,chen2019proportionally,ghadiri2021socially}. We work with the group-balance notion of fair $k$-means introduced by \citet{chierichetti2017fair} and generalized by \citet{bera2019fair}. The input carries $m$ protected groups $P^{(1)},\ldots,P^{(m)}$, which need not be disjoint, together with bound vectors $\alpha,\beta\in[0,1]^m$. The $(\alpha,\beta)$-fairness constraint requires the fraction of group $P^{(i)}$ in every cluster to lie between $\beta_i$ and $\alpha_i$.

Fairness costs the problem its locality~\citep{ding2020unified,bhattacharya2018faster}. In $k$-means without constraints, every point is served by its nearest center, so the clusters are the cells of a Voronoi diagram and a set of centers determines the clustering. A Voronoi cell need not have the required group proportions, so a fair solution must send some points to cells other than their own. The clusters are then no longer determined by the centers, and it becomes much harder to say where the optimal centers can lie.

An approximate centroid set is a finite set of locations that copes with this uncertainty: it contains a point close to the centroid of every subset of the input, and hence a representative of every cluster a fair solution could form, not only of the clusters a Voronoi diagram produces.

The same loss of locality appears in the $k$-sparse Wasserstein barycenter problem. A barycenter is a probability distribution that is central for a given collection of input distributions, with applications in image processing~\citep{bonneel2015sliced,cuturi2014fast}, data analysis~\citep{rabin2012wasserstein}, and machine learning~\citep{backhoff2022bayesian,metelli2019propagating}; restricting its support to at most $k$ points keeps the representation compact. Reading each input distribution as a protected group of weighted points turns a transport plan into an assignment in which every support point receives the same proportion of every group, which is exactly a fairness constraint. Tools for fair clustering therefore apply to $k$-sparse barycenters, and our guarantees transfer to that problem.

\subsection{Overview and Main Result}
\label{sec-overview}

For fractional fair $k$-means, the main difficulty is to select at most $k$ centers; once their locations are fixed, an optimal fair assignment can be computed by linear programming. An approximate centroid set provides candidate locations, but solving the assignment problem over this set may use many more than $k$ centers. \citet{song2025relax} merge the resulting fractional clusters with a $\rho$-approximate weighted $k$-means algorithm and obtain the ratio $1+4\rho+O(\epsilon)$. Our analysis focuses on the additional cost incurred in reducing the number of centers to $k$.

Our starting observation is that the merge cost becomes controllable once the first stage is asked for more than fairness. We also require it to open at most $k$ candidates, but only \emph{fractionally}. Adding an opening variable for every candidate and a budget of $k$ on their sum turns the relaxation into a fair version of the standard center-opening LP for discrete $k$-means. In Euclidean space that LP has integrality gap $\Gamma\approx6.357$~\citep{ahmadian2020better}, and the gap is itself a statement about merging: a fractional choice of $k$ centers can be turned into an actual one while paying a bounded factor. The gap does not apply to our relaxation as it stands, since its clusters are constrained to be fair instead of being nearest-center clusters. We show instead that the relaxation induces an auxiliary $k$-means instance to which the gap does apply, and whose optimum bounds the merge cost.

This bound degrades in one regime, namely when the first stage moves the input points far from the cluster means that serve them. In that regime the input points themselves say a good deal about where the optimal centers lie, and we use them: the weighted $k$-means subroutine is called not only on the clusters produced by the first stage, but also on a family of shifted copies of the input, one per shift in a finite grid. Comparing each of the resulting candidate center sets with the optimal fair solution gives an upper bound on the cost of the solution the algorithm returns. No single bound is good in every regime, but the regimes in which they degrade are different, and a convex combination chosen to cancel the two quantities the algorithm cannot control yields the guarantee below. Throughout, $\OPT$ is the minimum cost of a solution with continuous centers and an integral assignment satisfying every fairness constraint exactly.

For a nonempty center set $S$, the assignment matrix $\phi_S$ specifies the fraction $\phi_S(p,s)\ge0$ of each point $p\in P$ assigned to $s\in S$, with $\sum_{s\in S}\phi_S(p,s)=1$. Its entries are in $\{0,1\}$ for an integral assignment. Writing $\lVert p-s\rVert$ for Euclidean distance, let
\[
\Cost(P,S,\phi_S)
=\sum_{p\in P}\sum_{s\in S}\phi_S(p,s)\lVert p-s\rVert^2
\]
denote its clustering cost.

\begin{theorem}[Fractional fair $k$-means]
\label{thm-main-fractional}
Suppose that a polynomial-time $\rho$-approximate algorithm is available for weighted Euclidean $k$-means, where $\rho\ge1$. For every $\epsilon>0$, there is a randomized polynomial-time algorithm that, with high probability, returns a center set $S$ and a fractional assignment $\phi_S$ such that $|S|\le k$, $\phi_S$ satisfies all $(\alpha,\beta)$-fairness constraints exactly, and
\begin{equation}
\Cost(P,S,\phi_S)
\le
\left(1+\left(3-\frac1\Gamma\right)\rho+O(\epsilon)\right)
\OPT,
\label{eq-main-guarantee}
\end{equation}
where $\Gamma\approx6.357$ is the Euclidean $k$-means opening-LP gap of \citet{ahmadian2020better}. In particular, if $\rho=1+O(\epsilon)$, the approximation factor is
\[
4-\frac1\Gamma+O(\epsilon)
\approx3.8427+O(\epsilon).
\]
For $\epsilon\in(0,1]$, the direct implementation in fixed dimension $d$ runs in
\[
\mathcal{T}_{\mathrm{LP}}
+O\!\left(
n\log n
+n^2d\left(\frac{\rho\Gamma}{\epsilon}\right)^d
\log\frac{\rho\Gamma}{\epsilon}
+\frac{\rho\Gamma}{\epsilon}\bigl(nkd+\mathcal{T}_{\mathrm{means}}\bigr)
\right).
\]
Here $\mathcal{T}_{\mathrm{LP}}$ is the total time for forming and solving the LPs, and $\mathcal{T}_{\mathrm{means}}$ bounds one invocation of the given weighted $k$-means algorithm. For unrestricted dimension, applying dimension reduction gives the running time
\[
\mathcal{T}_{\mathrm{LP}}
+O\!\left(
ndk+nd\left(\frac{\rho\Gamma}{\epsilon}\right)^2\log n
+\frac{\rho\Gamma}{\epsilon}\mathcal{T}_{\mathrm{means}}
\right)
+n^{O((\rho/\epsilon)^2\log(2\rho/\epsilon))}.
\]
Both implementations use $O(\rho/\epsilon)$ weighted $k$-means calls and fair assignment LPs, in addition to one fair center-opening LP. Section~\ref{sec-running-time} specifies the sizes of these subproblems and proves the bounds, which are polynomial in the input size for fixed $\epsilon$ and $\rho$ under the assumed polynomial-time subroutine.
\end{theorem}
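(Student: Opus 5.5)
The plan follows the three-stage structure sketched in the overview. First we build an approximate centroid set $C$ with $|C|$ polynomial, containing for every subset of $P$ a point within $(1+\epsilon)$-relative distance of its mean, so that its $k$-means cost is $(1+\epsilon)$ times that of the true mean. One option is the Matoušek-style grid, or sampling after dimension reduction. Over $C$ we solve the \emph{fair center-opening LP}: opening variables $y_c\in[0,1]$ with $\sum_c y_c\le k$, assignment variables $x_{pc}\le y_c$ with $\sum_c x_{pc}=1$, and the $(\alpha,\beta)$-constraints written per candidate as $\beta_i\sum_p x_{pc}\le \sum_{p\in P^{(i)}}x_{pc}\le\alpha_i\sum_p x_{pc}$. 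Snapping each optimal cluster mean to $C$ gives a feasible solution, so $\LP\le(1+\epsilon)\OPT$.

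From an LP solution $(x,y)$ we form, for each $c$, the fractional cluster with mass $w_c=\sum_p x_{pc}$ and mean $\mu_c$. We run the $\rho$-approximate weighted $k$-means algorithm on $\{(\mu_c,w_c)\}$ to get a center set $S_1$. We also run it on shifted copies $P+\lambda(\cdot)$, one for each $\lambda$ in a grid of $O(\rho/\epsilon)$ values, which gives sets $S_\lambda$. Concretely, the copy moves each point $p$ toward its LP-average mean $\bar\mu_p=\sum_c x_{pc}\mu_c$, namely $p+\lambda(\bar\mu_p-p)$. For each candidate set we solve the fair assignment LP and output the cheapest result. Feasibility of the assignment LP is immediate: transporting each fair fractional cluster of the LP wholesale to one center of $S$ preserves fairness, because a union of fair clusters is fair. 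This also gives the key upper bound
\[
\Cost(P,S,\phi_S)\le \LP_{\mathrm{assign}} + \sum_c w_c\,\dist(\mu_c,S)^2 ,
\]
since the within-cluster cost and the displacement of the means add by the Pythagorean identity.

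The analysis produces two bounds. \textbf{Bound A} (merge via the LP gap) concerns the weighted instance $\{(\mu_c,w_c)\}$. The LP restricted to it is a fractional $k$-means opening solution whose connection cost is the "movement" term $D=\sum_{p,c}x_{pc}\lVert p-\mu_c\rVert^2$ relative to nearest-center fractional assignment. The auxiliary instance is built so that its LP value is at most $\LP-D$ plus cross terms. Its integral optimum is then at most $\Gamma$ times this, and the $\rho$-approximation gives $\sum_c w_c\dist(\mu_c,S_1)^2\le\rho\Gamma(\cdot)$. A separate comparison with $\OPT$'s centers, via the triangle inequality with squared distances, gives a bound of the form $\rho(\OPT+\text{something involving }D)$. \textbf{Bound B} (shifted inputs) handles the regime where $D$ is large, i.e.\ where points move far from the means serving them. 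Here the optimal fair centers $\Sopt$ are near the shifted points, and the relaxed triangle inequality $\lVert a-b\rVert^2\le(1+t)\lVert a-z\rVert^2+(1+1/t)\lVert z-b\rVert^2$, with $t$ matched to the grid parameter $\lambda$, bounds the cost as $\OPT+\rho(\ldots)$ with the $D$-dependence running the opposite way.

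Finally we take a convex combination of Bound A and the best Bound B. The weight is chosen so that the two uncontrolled quantities cancel: $D$ itself, and the LP-versus-optimum slack. Optimizing the weight should yield $1+(3-1/\Gamma)\rho+O(\epsilon)$. The grid of $O(\rho\Gamma/\epsilon)$ shift values guarantees that some grid point is $\epsilon$-close to the ideal one. The $n\log n$ and grid terms in the running time come from constructing $C$.

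The main obstacle is Bound A. The LP clusters are fair rather than Voronoi, so $\Gamma$ does not apply directly. We must exhibit an auxiliary instance whose fractional opening solution, inherited from $y$, has cost controlled by $\LP$ minus exactly the movement that Bound B pays for. Our first attempt will be to define the auxiliary clients as the points $p$ relocated to $\bar\mu_p$ and to use $y$ with reassignment $x$. We will then verify by Jensen that this LP costs at most $\LP-D$, and that the $\rho$-approximate centers on the $\mu_c$ instance compare to the auxiliary optimum up to an additive $D$-type term.
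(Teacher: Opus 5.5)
There is a genuine gap. Your setup matches the paper: the fair opening LP over a centroid set, the weighted centroids $\mu_c$, wholesale fair routing, and auxiliary clients at $\bar\mu_p$ handled by the opening-LP gap. But the decisive last step fails. With the shift you chose, no convex combination of your bounds reaches $1+(3-1/\Gamma)\rho$.

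\textbf{Wrong quantities to cancel.} The uncontrolled quantities are not $D$ and the LP slack. As you define it, $D=\sum_{p,c}x_{pc}\lVert p-\mu_c\rVert^2$ is the paper's $C_T$. This is harmless: it enters every bound with a positive coefficient, so $C_T\le(1+\delta)\OPT$ suffices. The auxiliary LP value is exactly $C_T-C_\nu$, where $C_\nu=\sum_p\lVert p-\bar\mu_p\rVert^2$ is the actual movement. The two quantities that must cancel are $C_\nu$ and the signed cross term $C_e=\sum_j\sum_{p\in\widetilde C_j}(\bar\mu_p-p)^\top(p-\widetilde s_j)$. This term must be kept exactly. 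The comparison of the $\mu_c$ instance with $\Sopt$ is the expansion $\OPT_\pi\le C_T+\OPT+2C_e$, not a relaxed triangle inequality.

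\textbf{Your shift has the wrong sign.} Your copy $p+\lambda(\bar\mu_p-p)$ moves \emph{toward} $\bar\mu_p$. It therefore gives $\Cost(P_\lambda,\Sopt,\phi_{\Sopt})=\OPT+\lambda^2C_\nu+2\lambda C_e$, with $C_e$ carrying the same sign as in the $S_\pi$ comparison. Young's inequality, as you planned, would erase the sign altogether. Take $C_T=\OPT$, $C_\nu=u^2\OPT$ and $C_e=u\OPT$ with $u=(\Gamma-1)/(\Gamma+1)$, which respects $\lvert C_e\rvert\le\sqrt{C_\nu\OPT}$. At this point the gap bound $C_T+\rho(\Gamma+1)(C_T-C_\nu)$ and the $\Sopt$ comparison $C_T+\rho(C_T+\OPT+2C_e)$ both equal $(1+4\Gamma\rho/(\Gamma+1))\OPT$. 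Every shifted bound of routing form is at least its $\lambda=0$ value $(3+2\rho)\OPT$. So for $\rho=1$ your bounds certify nothing below about $4.456$, which is only the two-bound ratio $1+4\Gamma\rho/(\Gamma+1)$.

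\textbf{The fix: over-correct and use three bounds.} Shift \emph{away}: $p_\lambda=p-\lambda(\bar\mu_p-p)$. The exact expansion then gives $\Cost(P_\lambda,\Sopt,\phi_{\Sopt})=\OPT+\lambda^2C_\nu-2\lambda C_e$. Route each exactly fair optimal cluster wholesale to the center of $S_\lambda$ nearest $\widetilde s_j$. Bound $\lvert\widetilde C_j\rvert\dist^2(\widetilde s_j,S_\lambda)$ through the points $p_\lambda$ with the factor-$2$ triangle inequality. This yields $\Cost(P,S_\lambda,\phi_{S_\lambda}^*)\le\OPT+4\rho(\OPT+\lambda^2C_\nu-2\lambda C_e)$.

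Cancelling two independent quantities needs two free weights, so you need three bounds, not a single-weight combination. In the gap bound, the $\Sopt$ comparison and the shifted bound, $C_\nu$ has coefficients $-\rho(\Gamma+1)$, $0$, $4\rho\lambda^2$, and $C_e$ has $0$, $2\rho$, $-8\rho\lambda$. Take $\lambda^*=(\Gamma+1)/(2(\Gamma-1))\in(1/2,1)$. Take $\theta_1,\theta_2,\theta_3$ equal to $\Gamma+1$, $2(\Gamma^2-1)$, $(\Gamma-1)^2$, each divided by $\Gamma(3\Gamma-1)$. Both quantities cancel, and the constants give $1+\rho\bigl((\Gamma+1)\theta_1+2\theta_2+4\theta_3\bigr)=1+(3-1/\Gamma)\rho$. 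Replacing $\lambda^*$ by a grid point within $O(\delta)$ costs $O(\rho\delta)$, using $C_\nu\le2\OPT$ and $\lvert C_e\rvert\le\sqrt{C_\nu\OPT}$.

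Your plan also leaves the running-time claims essentially unaddressed. These are the LP sizes, the $O(\rho/\epsilon)$ subroutine calls, and the projection to $O((\rho\Gamma/\epsilon)^2\log n)$ dimensions followed by lifting the centers.
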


The gain over the previous ratio $1+4\rho$ lies in the coefficient of $\rho$, which drops from $4$ to $3-1/\Gamma\approx2.843$; with a PTAS this takes the ratio from $5+O(\epsilon)$ to below $4$. Table~\ref{tab-comparison} places the guarantee among the known ones. It also lists the constant $\Gamma_0\approx6.129$ coming from the refined LP-relative bound of \citet{grandoni2022refined}, discussed in Appendix~\ref{app-grandoni}.

\begin{table}[ht]
\centering
\small
\begin{tabular}{llll}
\toprule
Algorithm & Ratio & When $\rho=1+O(\epsilon)$ & Setting\\
\midrule
\citet{bera2019fair} & $(2+\sqrt\rho)^2$ & $9+O(\epsilon)$ & general case\\
\citet{schmidt2020fair} & $5.5\rho+1$ & $6.5+O(\epsilon)$ & two groups only\\
\citet{bohm2021algorithms} & $(2+\sqrt\rho)^2$ & $9+O(\epsilon)$ & strictly fair, no violation\\
\citet{yang2024approximate} & $(2+\sqrt\rho)^2$ & $9+O(\epsilon)$ & $k$-sparse WB\\
\citet{song2025relax} & $1+4\rho$ & $5+O(\epsilon)$ & general case\\
Theorem~\ref{thm-main-fractional} & $1+(3-1/\Gamma)\rho$ & $3.8427+O(\epsilon)$ & general case\\
Corollary~\ref{cor-grandoni} & $1+(3-1/\Gamma_0)\rho$ & $3.8368+O(\epsilon)$ & general case\\
\bottomrule
\end{tabular}
\caption{Approximation ratios for fair $k$-means and $k$-sparse Wasserstein barycenters, up to an additive $O(\epsilon)$ in the ratio. The general case covers $(\alpha,\beta)$-fair $k$-means with overlapping groups and $k$-sparse Wasserstein barycenters.}
\label{tab-comparison}
\end{table}

The assignment of Theorem~\ref{thm-main-fractional} is fractional, so a point may be split among several clusters. Keeping the centers fixed, it can be rounded to an integral assignment without any increase in cost, at the price of an additive violation of the fairness constraints. When the groups overlap, the violation depends on $\Delta$, the largest number of protected groups containing a single input point.

\begin{corollary}[Overlapping protected groups, \citealp{bera2019fair}]
\label{cor-overlapping-rounding}
The center set $S$ in Theorem~\ref{thm-main-fractional} admits an integral assignment $\phi'_S$, computable in time polynomial in $n$, $k$, and $m$, with additive violation at most $4\Delta+3$ and
\begin{equation}
\Cost(P,S,\phi'_S)
\le \Cost(P,S,\phi_S)
\le
\left(1+\left(3-\frac1\Gamma\right)\rho+O(\epsilon)\right)
\OPT.
\label{eq-overlapping-rounding-cost}
\end{equation}
\end{corollary}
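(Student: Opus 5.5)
The corollary follows from the rounding scheme of \citet{bera2019fair}, applied with the centers $S$ from Theorem~\ref{thm-main-fractional} held fixed. Write $S=\{s_1,\dots,s_{k'}\}$ with $k'\le k$. I would set up the fair assignment LP over $S$ with variables $x(p,s)\ge0$ and the constraints
\[
\sum_{s\in S}x(p,s)=1 \quad\text{for all } p\in P,
\]
\[
\beta_i\sum_{p\in P}x(p,s)\le\sum_{p\in P^{(i)}}x(p,s)\le\alpha_i\sum_{p\in P}x(p,s) \quad\text{for all } i \text{ and } s,
\]
and objective $\sum_{p,s}x(p,s)\lVert p-s\rVert^2$. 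The fractional assignment $\phi_S$ from Theorem~\ref{thm-main-fractional} is feasible for this LP. Hence an optimal fractional fair assignment $\phi^*$ over $S$, computable in polynomial time, has
\[
\Cost(P,S,\phi^*)\le\Cost(P,S,\phi_S).
\]
Any integral assignment whose cost is at most that of $\phi^*$ therefore satisfies the claimed chain \eqref{eq-overlapping-rounding-cost}.

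Next I would apply Bera et al.'s iterative rounding to $\phi^*$. Their method replaces the proportional constraints by count constraints. For each center $s$ the cluster size $\sum_p\phi^*(p,s)$ is fixed to within a floor or ceiling. For each group and center, the group count $\sum_{p\in P^{(i)}}\phi^*(p,s)$ is fixed the same way. Every $p$ must be fully assigned. This gives a polytope containing $\phi^*$. At a vertex, each point lies in at most $\Delta$ group constraints plus one size constraint per center. A counting (token) argument in the style of Király--Lau--Singh degree-bounded rounding shows that if some variable is fractional, there is a count constraint touching few fractional variables, roughly $4\Delta+3$ of them. Dropping that constraint and re-solving never increases the cost. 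After iteration the result is an integral assignment $\phi'_S$ with cost at most $\Cost(P,S,\phi^*)$.

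The remaining step is to translate the additive errors in the group counts and cluster sizes into the additive violation of the $(\alpha,\beta)$ bounds, giving the stated $4\Delta+3$. The main obstacle is the token-counting argument that bounds how much each dropped constraint can be violated. I would import it verbatim from \citet{bera2019fair} (their Theorem~8 for overlapping groups) rather than reprove it, since the centers are fixed and $k'\le k$, so their hypotheses hold unchanged. The running time is polynomial in $n$, $k$, $m$ because there are at most $nk$ iterations, each solving an LP with $nk$ variables and $O(n+mk)$ constraints.
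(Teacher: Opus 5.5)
Your proposal is correct and matches the paper. The paper gives no separate proof: it likewise applies the cost-nonincreasing rounding of \citet{bera2019fair} to the fixed center set of Theorem~\ref{thm-main-fractional}, reducing the fair assignment LP to minimum degree-bounded matroid basis and solving it by iterated LP relaxation.

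One slip in your sketch: each variable $x(p,s)$ lies in at most $\Delta+1$ count constraints, so the Kir\'aly--Lau--Singh relaxation threshold and the error on each count are about $2\Delta$, not ``roughly $4\Delta+3$.'' The bound $4\Delta+3$ only appears after the group-count and cluster-size errors are combined into the proportional bounds. This does not affect your argument, since you import Bera et al.'s result rather than reprove it.
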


\begin{corollary}[Protected groups forming a partition, \citealp{song2025relax}]
\label{cor-disjoint-rounding}
If every point belongs to exactly one protected group, the center set $S$ in Theorem~\ref{thm-main-fractional} admits an integral assignment $\phi'_S$, computable in $O(n^3k^2)$ time, with additive violation at most $2$ and cost satisfying Equation~\eqref{eq-overlapping-rounding-cost}.
\end{corollary}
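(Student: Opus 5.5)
The plan is to keep the center set $S$ from Theorem~\ref{thm-main-fractional} fixed and round the fractional assignment $\phi_S$ using a min-cost flow network with integral lower and upper capacities, in the spirit of \citet{bera2019fair} and \citet{song2025relax}. Disjointness of the groups is what lets every point sit in a single group node, so the fairness counts can be encoded as capacities on a network rather than as general linear constraints. For each group $i$ and center $s$, define the fractional group mass and the fractional cluster size
\[
f_{i,s}=\sum_{p\in P^{(i)}}\phi_S(p,s),
\qquad
F_s=\sum_i f_{i,s}.
\]
The exact fairness of $\phi_S$ gives $\beta_iF_s\le f_{i,s}\le\alpha_iF_s$.

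The network is built as follows.
\begin{itemize}
\item A source sends one unit to each point $p$.
\item A point $p\in P^{(i)}$ has an arc to the node $(i,s)$ for every $s\in S$, with cost $\lVert p-s\rVert^2$ and capacity $1$.
\item Each node $(i,s)$ has an arc to the node $s$ with lower bound $\lfloor f_{i,s}\rfloor$ and upper bound $\lceil f_{i,s}\rceil$.
\item Each node $s$ has an arc to the sink with lower bound $\lfloor F_s\rfloor$ and upper bound $\lceil F_s\rceil$.
\end{itemize}
Sending flow $\phi_S(p,s)$ along the path $p\to(i,s)\to s$ gives a feasible flow of value $n$ whose cost equals $\Cost(P,S,\phi_S)$. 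All capacities and demands are integers, so the integrality theorem for flows with lower bounds yields an integral min-cost flow. Its cost is at most that of the fractional flow, and it defines an integral assignment $\phi'_S$ with $\Cost(P,S,\phi'_S)\le\Cost(P,S,\phi_S)$. Combined with Theorem~\ref{thm-main-fractional}, this gives Equation~\eqref{eq-overlapping-rounding-cost}.

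For the violation bound, let $C'_s$ be the resulting cluster of $s$. The capacities give $\bigl||C'_s\cap P^{(i)}|-f_{i,s}\bigr|<1$ and $\bigl||C'_s|-F_s\bigr|<1$. The upper side then follows from
\[
|C'_s\cap P^{(i)}|< f_{i,s}+1\le \alpha_iF_s+1<\alpha_i(|C'_s|+1)+1\le \alpha_i|C'_s|+2,
\]
using $\alpha_i\le1$. The lower bound with $\beta_i$ is symmetric, so the additive violation is at most $2$.

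For the running time, the network has $O(nk)$ nodes and $O(nk)$ arcs once the groups are merged into the point layer; in particular $m\le n$ for a partition. The lower bounds are eliminated by the standard circulation reduction. A strongly polynomial min-cost flow algorithm, for example successive shortest paths with $n$ augmentations each costing $O(nk\log n)$ or a bound of that order, gives $O(n^3k^2)$ comfortably.

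The main obstacle I expect is bookkeeping rather than ideas. One point is matching the exact definition of ``additive violation'' used in the paper (counts versus fractions). The other is verifying that the chosen flow algorithm, run on the reduced network, indeed fits the claimed $O(n^3k^2)$ bound. If a direct count is awkward, I would cite \citet{song2025relax} for the time bound.
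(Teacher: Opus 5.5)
Your proposal is correct and follows essentially the paper's route, taken from \citet{song2025relax}: a min-cost circulation that sends each group through its own hub $(i,s)$, bounds the hub-to-center arc by $[\lfloor w^{(i)}(s)\rfloor,\lceil w^{(i)}(s)\rceil]$ and the center-to-sink arc by $[\lfloor w(s)\rfloor,\lceil w(s)\rceil]$, uses integrality of the circulation to get a cost-nonincreasing integral assignment, and charges one unit of violation to each of the two roundings. Your running-time sketch also holds: after removing the zero-cost lower-bounded arcs, at most $O(n)$ augmentations are needed on a network with $O(nk)$ nodes and arcs, and even an $O(n^2k^2)$ Bellman--Ford bound per augmentation gives $O(n^3k^2)$.
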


Both roundings keep the centers fixed and never raise the cost, so the guarantee of Equation~\eqref{eq-main-guarantee} carries over to the integral assignment and only the fairness degrades. The first reduces the fair assignment problem to minimum degree-bounded matroid basis and solves it by iterated linear programming. The second routes each group through its own hub in a minimum-cost circulation, so that a group weight becomes the flow on a single arc; giving that arc the interval $[\lfloor w^{(i)}(s)\rfloor,\lceil w^{(i)}(s)\rceil]$ and the arc into the sink the interval $[\lfloor w(s)\rfloor,\lceil w(s)\rceil]$ leaves all bounds integral, so the circulation has an integral optimum, and each of the two roundings costs one unit of violation.

The constant $\Gamma$ is used at exactly one point in the analysis. Appendix~\ref{app-grandoni} shows that the refined LP-relative result of \citet{grandoni2022refined} applies at that point too, which replaces $\Gamma$ by $\Gamma_0\approx6.129$ in every bound above and gives $3.8368+O(\epsilon)$ when $\rho=1+O(\epsilon)$.

Section~\ref{sec-wasserstein-barycenter} carries these guarantees over to $k$-sparse Wasserstein barycenters. Once the input distributions are read as protected groups of labeled atoms, transport plans are fair assignments of those atoms, and the analysis needs only one new ingredient: a candidate set covering the weighted centroids that an optimal transport plan induces, whose weights are fractional. Indexed copies of the atoms supply it. A fixed-support LP then converts the support returned by the algorithm into a barycenter.

\subsection{Related Work}
\label{sec-related-work}

\paragraph{Euclidean $k$-means and its opening LP.}
Even without fairness constraints, $k$-means is NP-hard in the plane when $k$ is part of the input~\citep{DBLP:journals/tcs/MahajanNV12} and APX-hard in Euclidean space of unrestricted dimension~\citep{lee2017improved}, so approximation is necessary; in fixed dimension local search gives a PTAS~\citep{cohen2019local,friggstad2019local}, which is the source of the subroutine with $\rho=1+O(\epsilon)$ used in our special case. For unrestricted dimension, \citet{charikar2026nonmonotone} obtain a $(4+\epsilon)$-approximation via non-monotone dual fitting. A recent preprint by \citet{anand2026spectral} improves this ratio to $3+\ln 2+\epsilon$ using spectral analysis.

These approximation ratios are measured against the integral optimum; our merging analysis requires a guarantee relative to the opening LP. We use the standard center-opening LP for discrete $k$-means, whose integrality gap in Euclidean space is at most $\Gamma\approx6.357$~\citep{ahmadian2020better}. \citet{grandoni2022refined} sharpened the LP-relative guarantee to $\Gamma_0\approx6.129$ at the cost of restricting the facility domain, and Appendix~\ref{app-grandoni} shows that this restriction can be met in our setting.

\paragraph{Fair clustering.}
The group-balance model begins with fairlets~\citep{chierichetti2017fair} and was extended by \citet{bera2019fair} to overlapping protected groups with both lower and upper representation bounds, the setting we work in; their framework also supplies the cost-nonincreasing rounding with bounded additive violation that we invoke for overlapping groups. Later work has developed fair clustering algorithms~\citep{bohm2021algorithms,schmidt2020fair}, coresets~\citep{huang2019coresets,braverman2022power,DBLP:journals/jcss/BandyapadhyayFS24}, and alternative notions of fairness such as proportional and social fairness~\citep{chen2019proportionally,micha2020proportionally,ghadiri2021socially}.

Recent results extend group-balance fairness to changing datasets and imperfect information. \citet{cohenaddad2025sliding} show that any finite-factor approximation satisfying exact fairness in the sliding-window model requires linear space, and construct small coresets when multiplicative slack in the fairness bounds is allowed. For fair $k$-center, \citet{duppala2025robust} address uncertainty in group memberships and give worst-case robustness guarantees that trade off robustness against clustering quality.

Additional representation requirements have also received attention. For discrete $k$-median and $k$-means with disjoint groups, \citet{funk2026constant} give constant-factor approximations that enforce group quotas on the selected centers while allowing additive violation at most $2$ in the cluster representation bounds. In the continuous Euclidean setting considered here, \citet{song2025relax} obtain the ratio $1+4\rho+O(\epsilon)$. We use their approximate-centroid-set relaxation, weighted centroid set, and fair routing argument, and improve the coefficient of $\rho$ to $3-1/\Gamma$.

\paragraph{Wasserstein barycenters.}
Barycenters in Wasserstein space were introduced by \citet{agueh2011barycenters}, and their discrete and sparse variants have been studied from computational and approximation angles alike~\citep{anderes2016discrete,altschuler2021wasserstein,borgwardt2021computational,yang2024approximate}. Once the support is fixed, the masses and transports that remain can be recovered by linear programming or related methods~\citep{claici2018stochastic,cuturi2014fast,cuturi2016smoothed,lin2020fixed}, which is how our algorithm produces a barycenter from the support it returns.

\section{Preliminaries}
\label{sec-preliminaries}

For a positive integer $r$, let $[r]=\{1,2,\ldots,r\}$. The input $P$ is an indexed set of $n$ points in $\mathbb{R}^d$, and the protected groups $P^{(1)},\ldots,P^{(m)}$ are subsets of $P$ that may overlap. A point set is \emph{weighted} if every $q$ in it carries a weight $w(q)\ge0$; the input points have unit weight unless stated otherwise. All center sets are nonempty. For $z,z'\in\mathbb{R}^d$ we write $\lVert z-z'\rVert$ for the Euclidean distance and $z^\top z'$ for the inner product, where $\top$ denotes transpose. For a nonempty finite set $Q\subset\mathbb{R}^d$, let
\[
\dist(z,Q)=\min_{q\in Q}\lVert z-q\rVert,
\qquad
\mathcal{N}(z,Q)\in\arg\min_{q\in Q}\lVert z-q\rVert,
\]
where ties in the argmin are broken by a fixed rule, and let $\Cen(Q)$ denote the centroid of $Q$.

\paragraph{Assignments and costs.}
For a weighted point set $Q$ and a nonempty center set $S$, an \emph{assignment matrix} is a map $\phi_S:Q\times S\to\mathbb{R}_{\ge0}$ with $\sum_{s\in S}\phi_S(q,s)=1$ for every $q\in Q$; it is \emph{integral} if every entry belongs to $\{0,1\}$ and \emph{fractional} otherwise. The cost of assigning $Q$ to $S$ by $\phi_S$ is
\begin{equation}
\Cost(Q,S,\phi_S)
=\sum_{q\in Q}\sum_{s\in S}
w(q)\phi_S(q,s)\lVert q-s\rVert^2 ,
\label{eq-assignment-cost}
\end{equation}
and its minimum over all assignments, attained by serving every point from its nearest center, is
\[
\Cost(Q,S)
=\min_{\phi_S}\Cost(Q,S,\phi_S)
=\sum_{q\in Q}w(q)\dist^2(q,S).
\]
The two-argument form is used whenever no constraint ties the assignment, as in $k$-means without fairness. When $Q$ is the input $P$, the weight that $s\in S$ receives in total and from group $P^{(i)}$ is
\begin{equation}
w(s)=\sum_{p\in P}w(p)\phi_S(p,s),
\qquad
w^{(i)}(s)=\sum_{p\in P^{(i)}}w(p)\phi_S(p,s).
\label{eq-cluster-group-weight}
\end{equation}
Let $\alpha,\beta\in[0,1]^m$. The assignment $\phi_S$ is \emph{$(\alpha,\beta)$-fair} if
\begin{equation}
\beta_i\,w(s)\le w^{(i)}(s)\le\alpha_i\,w(s),
\qquad \forall s\in S,\ i\in[m],
\label{eq-fairness}
\end{equation}
and it has \emph{additive violation} $\xi\ge0$ if the two bounds hold after being relaxed by $\xi$:
\begin{equation}
\beta_i\,w(s)-\xi
\le w^{(i)}(s)
\le\alpha_i\,w(s)+\xi,
\qquad \forall s\in S,\ i\in[m].
\label{eq-additive-violation}
\end{equation}
The violation of our rounded solutions depends on
\[
\Delta=\max_{p\in P}\bigl|\{i\in[m]:p\in P^{(i)}\}\bigr|,
\]
the maximum number of protected groups containing a single point.

\begin{problem}[$(\alpha,\beta)$-fair $k$-means~\citep{bera2019fair}]
\label{prob-fair-kmeans}
Given an point set $P$, its protected groups, $k$, $\alpha$, and $\beta$, find a center set $S\subset\mathbb{R}^d$ with $|S|\le k$ and an $(\alpha,\beta)$-fair assignment $\phi_S$ minimizing
\[
\Cost(P,S,\phi_S)
=\sum_{p\in P}\sum_{s\in S}
\phi_S(p,s)\lVert p-s\rVert^2.
\]
\end{problem}

Both the cost \eqref{eq-assignment-cost} and the constraints \eqref{eq-fairness} are linear in the entries of $\phi_S$, so for a fixed center set $S$ a minimum-cost fractional fair assignment is the optimum of a linear program in the $|P|\cdot|S|$ variables $\phi_S(p,s)$~\citep{song2025relax}. We call it the \emph{fair assignment LP} of $S$ and write $\phi_S^*$ for an optimal solution whenever it is feasible.

We assume that Problem~\ref{prob-fair-kmeans} admits an exactly fair integral solution. Fix an optimal one, with center set $\Sopt=\{\widetilde s_1,\ldots,\widetilde s_k\}$ and integral assignment $\phi_{\Sopt}$, and write
\begin{equation}
\OPT=\Cost(P,\Sopt,\phi_{\Sopt})
\label{eq-opt}
\end{equation}
for its cost. Since the assignment is integral, it partitions $P$ into the clusters
$\widetilde C_j=\{p\in P:\phi_{\Sopt}(p,\widetilde s_j)=1\}$, and we may take each center to be the centroid of its cluster,
\[
\widetilde s_j=\Cen(\widetilde C_j)
\qquad\text{for every nonempty }\widetilde C_j,
\]
because moving a center to the centroid of its cluster leaves the assignment and its fairness unchanged and does not increase the cost. Sums over $P$ that involve the optimal centers are therefore written as sums over the clusters, as in
$\OPT=\sum_{j}\sum_{p\in \widetilde C_j}\lVert p-\widetilde s_j\rVert^2$.

\paragraph{Centroid identity and approximate centroid sets.}
Let $Q$ be a finite weighted indexed point set with total weight $w(Q)=\sum_{q\in Q}w(q)>0$ and weighted centroid $\Cen(Q)=\frac{1}{w(Q)}\sum_{q\in Q}w(q)\,q$. For every $z\in\mathbb{R}^d$,
\[
\sum_{q\in Q}w(q)\lVert q-z\rVert^2
=\sum_{q\in Q}w(q)\lVert q-\Cen(Q)\rVert^2
+w(Q)\lVert\Cen(Q)-z\rVert^2 ,
\]
a folklore identity we refer to as the \emph{centroid identity}: replacing the centroid by any other point $z$ raises the cost by exactly the total weight times the squared displacement. Following \citet{matouvsek2000approximate}, for $\delta>0$ a set $\CS_\delta(P)$ is a \emph{$\delta$-approximate centroid set} of $P$ if, for every nonempty $Q\subseteq P$, some $v\in\CS_\delta(P)$ satisfies
\begin{equation}
\lVert v-\Cen(Q)\rVert
\le\frac{\delta}{3}
\sqrt{\frac1{|Q|}\sum_{q\in Q}
\lVert q-\Cen(Q)\rVert^2}.
\label{eq-approximate-centroid-set}
\end{equation}
The right-hand side is $\delta/3$ times the root-mean-square radius of $Q$, so the requirement is strongest for tightly concentrated subsets. Because it quantifies over all of them, such a set covers the centroid of every cluster a fair solution could form, and not merely the centroids arising from a Voronoi partition. Appendix~\ref{app-dimension-reduction} addresses its construction and size.

\paragraph{Weighted $k$-means and its opening LP.}
A \emph{$\rho$-approximate weighted $k$-means algorithm} takes a weighted point set $Q$ and returns at most $k$ centers $S$ with $\Cost(Q,S)\le\rho\cdot\min_{|S'|\le k}\Cost(Q,S')$; we always take $\rho\ge1$.

For finite demand and facility sets $D,F\subset\mathbb{R}^d$, the \emph{center-opening LP} of discrete $k$-means assigns the demands to the facilities by a matrix $\phi_F$ as above and opens facility $f\in F$ to the extent $y_f$, collected in the opening vector $y=(y_f)_{f\in F}$:
\begin{equation}
\begin{aligned}
\min_{\phi_F,y}\quad
&\Cost(D,F,\phi_F)\\
\text{s.t.}\quad
&\sum_{f\in F}\phi_F(x,f)=1,
&&\forall x\in D,\\
&0\le\phi_F(x,f)\le y_f\le1,
&&\forall x\in D,\ f\in F,\\
&\sum_{f\in F}y_f\le k.
\end{aligned}
\label{eq-standard-opening-lp}
\end{equation}
Let $\LP(D,F)$ be its optimal value and $\OPT_{\mathrm{disc}}(D,F)$ the minimum cost of serving $D$ by at most $k$ facilities of $F$. \citet{ahmadian2020better} proved that the integrality gap of LP~\eqref{eq-standard-opening-lp} is bounded by a constant in Euclidean space:
\begin{equation}
\OPT_{\mathrm{disc}}(D,F)\le\Gamma\,\LP(D,F),
\qquad \Gamma\approx6.357.
\label{eq-ahmadian-gap}
\end{equation}
Two features of Equation~\eqref{eq-ahmadian-gap} matter later. The demand set and the facility set need not coincide, and since continuous centers are unrestricted while the discrete optimum must select its centers from $F$, the gap also bounds the continuous optimum: $\min_{|S|\le k}\Cost(D,S)\le \OPT_{\mathrm{disc}}(D,F)$.

\section{The Algorithm}
\label{sec-algorithm}

The algorithm proceeds in three stages. It first solves a fair center-opening LP over an approximate centroid set. It then replaces every candidate that received positive weight by the weighted centroid of the points assigned to it, and builds one shifted copy of the input for each value in a finite grid. Finally it runs the given weighted $k$-means algorithm on the weighted centroids and on every shifted copy, and returns the cheapest fair assignment over the resulting center sets.

\paragraph{Fair center-opening LP.}
Set $\delta=\epsilon/(\rho\Gamma)$ and let $T=\CS_\delta(P)$. Each candidate $t\in T$ plays the role of a potential facility, and we look for an assignment of $P$ to $T$ that is fair and opens at most $k$ candidates fractionally. Imposing the fairness constraints \eqref{eq-fairness} together with the opening variables of LP~\eqref{eq-standard-opening-lp} gives
\begin{equation}
\begin{aligned}
\min_{\phi_T,y}\quad
&\Cost(P,T,\phi_T)\\
\text{s.t.}\quad
&\sum_{t\in T}\phi_T(p,t)=1,
&&\forall p\in P,\\
&0\le\phi_T(p,t)\le y_t\le1,
&&\forall p\in P,\ t\in T,\\
&\sum_{t\in T}y_t\le k,\\
&\beta_i\sum_{p\in P}\phi_T(p,t)
\le\sum_{p\in P^{(i)}}\phi_T(p,t),
&&\forall t\in T,\ i\in[m],\\
&\sum_{p\in P^{(i)}}\phi_T(p,t)
\le\alpha_i\sum_{p\in P}\phi_T(p,t),
&&\forall t\in T,\ i\in[m].
\end{aligned}
\label{eq-fair-center-opening-lp}
\end{equation}
Write $(\phi_T^*,y)$ for an optimal solution of LP~\eqref{eq-fair-center-opening-lp}, reserving the letter $\phi_T^*$ for this assignment matrix throughout, so that it is never confused with the assignments $\phi_S$ to a center set of size at most $k$. Because $T$ may contain many more than $k$ candidates, $\phi_T^*$ is not yet a solution of Problem~\ref{prob-fair-kmeans}. Its opening variables and their budget are what let the analysis bound the cost of merging it down to $k$ centers.

\paragraph{Weighted centroids and weighted means.}
For $t\in T$, the weight received by $t$ and the weighted centroid of the points assigned to it are
\begin{equation}
w(t)=\sum_{p\in P}\phi_T^*(p,t),
\qquad
\pi(t)=\frac{1}{w(t)}\sum_{p\in P}\phi_T^*(p,t)\,p
\quad\text{if }w(t)>0.
\label{eq-weighted-centroid}
\end{equation}
For $t\in T$ with $w(t)=0$, set $\pi(t)=t$. Define the \emph{weighted centroid set} $\pi(T)$ to contain one point $\pi(t)$ with weight $w(\pi(t))=w(t)$ for each $t\in T$ with $w(t)>0$, retaining the index $t$ even when centroids coincide.

The remaining quantity we need records how far the relaxation moves each input point. For $p\in P$, let
\begin{equation}
\nu_p=\sum_{t\in T}\phi_T^*(p,t)\,\pi(t)
\label{eq-weighted-mean}
\end{equation}
be the weighted mean of the centroids $\pi(t)$ that serve $p$, with the weights $\phi_T^*(p,t)$. For $\lambda\in[1/2,1]$, the \emph{shifted instance} $P_\lambda$ consists of the unit-weight indexed points
\begin{equation}
p_\lambda=p-\lambda(\nu_p-p),
\qquad
P_\lambda=\{p_\lambda:p\in P\},
\label{eq-shifted-instance}
\end{equation}
each obtained by moving $p$ away from $\nu_p$ by a $\lambda$ fraction of the displacement $\nu_p-p$. The shift is an over-correction: it pushes the input in the direction opposite to the one in which the relaxation displaced it. Section~\ref{sec-bounds-slambda} shows why this is the right family of instances to cluster.

\paragraph{Candidate center sets.}
Let $\Lambda_\delta\subset[1/2,1]$ be a grid of spacing $O(\delta)$ containing both endpoints. Running the $\rho$-approximate weighted $k$-means algorithm on $\pi(T)$ yields a center set $S_\pi$, and running it on $P_\lambda$ for each $\lambda\in\Lambda_\delta$ yields a center set $S_\lambda$; every call returns at most $k$ centers. We then solve the fair assignment LP on each of these center sets and keep the cheapest of the resulting solutions. Algorithm~\ref{alg-fractional-fair-kmeans} collects the steps.

\begin{algorithm}[H]
\caption{Fractional fair $k$-means}
\label{alg-fractional-fair-kmeans}
\KwIn{$P$, $P^{(1)},\ldots,P^{(m)}$, $k$, $\alpha$, $\beta$, $\epsilon\in(0,1]$, and a $\rho$-approximate weighted $k$-means algorithm}
Set $\delta=\epsilon/(\rho\Gamma)$ and construct $T=\CS_\delta(P)$\;
Solve LP~\eqref{eq-fair-center-opening-lp} to obtain $(\phi_T^*,y)$\;
Compute $w(t)$, $\pi(t)$, and $\nu_p$ by Equations~\eqref{eq-weighted-centroid} and~\eqref{eq-weighted-mean}\;
Run the weighted $k$-means algorithm on $\pi(T)$ to obtain $S_\pi$, and set $\mathcal{S}\leftarrow\{S_\pi\}$\;
\ForEach{$\lambda\in\Lambda_\delta$}{
  Run the weighted $k$-means algorithm on $P_\lambda$ to obtain $S_\lambda$, and add $S_\lambda$ to $\mathcal{S}$\;
}
\ForEach{$S\in\mathcal{S}$}{
  Solve the fair assignment LP of $S$ to obtain $\phi_S^*$\;
}
\Return{the pair $(S,\phi_S^*)$ with $S\in\mathcal{S}$ minimizing $\Cost(P,S,\phi_S^*)$}
\end{algorithm}

Since the grid holds $O(1/\delta)$ values, for fixed $\delta$ the algorithm performs polynomially many LP solves and weighted $k$-means calls. Section~\ref{sec-analysis} shows that this choice of $\delta$ makes the total error $O(\epsilon)$, and Section~\ref{sec-running-time} derives the running time. Appendix~\ref{app-dimension-reduction} keeps the candidate set polynomial in size and supplies a subroutine with $\rho=1+O(\epsilon)$ in fixed dimension.

\section{Approximation Analysis}
\label{sec-analysis}

This section proves Theorem~\ref{thm-main-fractional}. Section~\ref{sec-candidate-decomposition} bounds the optimal value of LP~\eqref{eq-fair-center-opening-lp} against $\OPT$, and shows that routing each fractional cluster to the center nearest its centroid produces a fair assignment to an arbitrary center set $S$, whose cost splits into a term free of $S$ and the $k$-means cost of covering $\pi(T)$ by $S$. What remains is to find a center set that covers $\pi(T)$ cheaply. Section~\ref{sec-weighted-mean} builds the auxiliary instance out of the points $\nu_p$ and uses the opening-LP gap to bound the optimal $k$-means cost of $\pi(T)$. Sections~\ref{sec-bounds-spi} and~\ref{sec-bounds-slambda} then derive two bounds for $S_\pi$ and one for each $S_\lambda$, and Section~\ref{sec-convex-combination} combines the three so that the quantities the algorithm cannot control drop out.

Throughout, $\delta$ is the accuracy parameter of Algorithm~\ref{alg-fractional-fair-kmeans}, $(\phi_T^*,y)$ is an optimal solution of LP~\eqref{eq-fair-center-opening-lp}, and $w(t)$, $\pi(t)$, $\nu_p$, and $P_\lambda$ are as in Equations~\eqref{eq-weighted-centroid}--\eqref{eq-shifted-instance}.

\subsection{Candidate-Set Cost and Fair Routing}
\label{sec-candidate-decomposition}

The first thing to check is that adding the opening variables has not made the relaxation expensive. It has not: the approximate centroid set holds a good representative of every cluster of the optimal fair solution, and since that solution uses at most $k$ clusters, the chosen representatives can be opened integrally within the budget.

\begin{lemma}[Candidate-set cost]
\label{lem-candidate-set-cost}
$\Cost(P,T,\phi_T^*)\le(1+\delta^2/9)\,\OPT$.
\end{lemma}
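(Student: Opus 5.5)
The plan is to exhibit a feasible solution of LP~\eqref{eq-fair-center-opening-lp} whose cost is at most $(1+\delta^2/9)\OPT$. Since $(\phi_T^*,y)$ is optimal, the lemma then follows. The feasible solution comes from the fixed optimal fair solution $(\Sopt,\phi_{\Sopt})$ with clusters $\widetilde C_1,\ldots,\widetilde C_k$ and centers $\widetilde s_j=\Cen(\widetilde C_j)$.

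\textbf{Choosing representatives.} For each nonempty cluster $\widetilde C_j$, apply the defining property~\eqref{eq-approximate-centroid-set} of $T=\CS_\delta(P)$ with $Q=\widetilde C_j$. This gives a candidate $t_j\in T$ with
\[
\lVert t_j-\widetilde s_j\rVert^2\le\frac{\delta^2}{9}\cdot\frac{1}{|\widetilde C_j|}\sum_{p\in\widetilde C_j}\lVert p-\widetilde s_j\rVert^2 .
\]
Distinct clusters may pick the same candidate. We handle this by merging: set $y_t=1$ for every $t$ in the image $\{t_j\}$ and $y_t=0$ otherwise. Then $\sum_t y_t\le k$.

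\textbf{Assignment and feasibility.} Set $\phi_T(p,t_j)=1$ for $p\in\widetilde C_j$. The assignment constraints and $\phi_T(p,t)\le y_t$ hold trivially. For fairness, each opened $t$ receives the union of the optimal clusters mapped to it. Each such cluster satisfies $\beta_i|\widetilde C_j|\le|\widetilde C_j\cap P^{(i)}|\le\alpha_i|\widetilde C_j|$, and these linear inequalities are preserved under summation over the clusters merged at $t$. Unopened candidates receive zero weight, so their constraints read $0\le0$. This additivity check is the only step needing care, since coinciding representatives are the one subtlety.

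\textbf{Cost.} Apply the centroid identity to each cluster with $z=t_j$:
\[
\sum_{p\in\widetilde C_j}\lVert p-t_j\rVert^2=\sum_{p\in\widetilde C_j}\lVert p-\widetilde s_j\rVert^2+|\widetilde C_j|\,\lVert \widetilde s_j-t_j\rVert^2\le\Bigl(1+\frac{\delta^2}{9}\Bigr)\sum_{p\in\widetilde C_j}\lVert p-\widetilde s_j\rVert^2 .
\]
Summing over $j$ bounds the cost of this feasible solution by $(1+\delta^2/9)\OPT$. Since $\Cost(P,T,\phi_T^*)$ is the LP optimum, it is no larger, which proves the lemma. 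I expect no real obstacle here; the argument is a direct feasibility-plus-centroid-identity computation.
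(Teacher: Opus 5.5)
Your proposal is correct and follows the paper's proof essentially step for step. Both arguments pick a candidate $t_j\in T$ for each nonempty optimal cluster via the approximate centroid set, open the selected candidates integrally within the budget $k$, and verify fairness by summing the exact-fairness inequalities of clusters that share a candidate; they then bound the cost with the centroid identity and conclude by optimality of $(\phi_T^*,y)$.
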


\begin{proof}
Let $\widetilde C_j$ be a nonempty optimal cluster. Since $\widetilde s_j=\Cen(\widetilde C_j)$, Equation~\eqref{eq-approximate-centroid-set} applied to $Q=\widetilde C_j$ provides a candidate $t_j\in T$ with
\begin{equation}
\lVert t_j-\widetilde s_j\rVert^2
\le\frac{\delta^2}{9}\cdot\frac1{|\widetilde C_j|}
\sum_{p\in \widetilde C_j}\lVert p-\widetilde s_j\rVert^2 ,
\label{eq-candidate-close}
\end{equation}
and the centroid identity with $Q=\widetilde C_j$, unit weights, and $z=t_j$ gives
\begin{equation}
\sum_{p\in \widetilde C_j}\lVert p-t_j\rVert^2
=\sum_{p\in \widetilde C_j}\lVert p-\widetilde s_j\rVert^2
+|\widetilde C_j|\cdot\lVert \widetilde s_j-t_j\rVert^2
\le\left(1+\frac{\delta^2}{9}\right)
\sum_{p\in \widetilde C_j}\lVert p-\widetilde s_j\rVert^2 ,
\label{eq-cluster-candidate-cost}
\end{equation}
where the inequality substitutes Equation~\eqref{eq-candidate-close}. Now assign every point to the candidate chosen by its optimal cluster and open every chosen candidate fully:
\[
\widehat\phi_T(p,t)=
\begin{cases}
1,&t=t_j\text{ for the index }j\text{ with }p\in \widetilde C_j,\\
0,&\text{otherwise},
\end{cases}
\qquad
\widehat y_t=
\begin{cases}
1,&t=t_j\text{ for some }j,\\
0,&\text{otherwise}.
\end{cases}
\]
Every point lies in exactly one optimal cluster, so
\[
\sum_{t\in T}\widehat\phi_T(p,t)=1,
\qquad
0\le\widehat\phi_T(p,t)\le\widehat y_t\le1,
\qquad
\sum_{t\in T}\widehat y_t
\le|\Sopt|\le k .
\]
If several optimal clusters select the same candidate $t$, the column of $t$ is the sum of their indicator columns, so its total and group weights are
\[
\sum_{p\in P}\widehat\phi_T(p,t)
=\sum_{j:t_j=t}|\widetilde C_j|,
\qquad
\sum_{p\in P^{(i)}}\widehat\phi_T(p,t)
=\sum_{j:t_j=t}\bigl|\widetilde C_j\cap P^{(i)}\bigr| .
\]
Each optimal cluster is exactly fair, that is, $\beta_i|\widetilde C_j|\le|\widetilde C_j\cap P^{(i)}|\le\alpha_i|\widetilde C_j|$. Summing these inequalities over $\{j:t_j=t\}$ gives
\[
\beta_i\sum_{p\in P}\widehat\phi_T(p,t)
\le\sum_{p\in P^{(i)}}\widehat\phi_T(p,t)
\le\alpha_i\sum_{p\in P}\widehat\phi_T(p,t),
\]
so $(\widehat\phi_T,\widehat y)$ is feasible for LP~\eqref{eq-fair-center-opening-lp}. Summing Equation~\eqref{eq-cluster-candidate-cost} over the optimal clusters and using the optimality of $\phi_T^*$,
\begin{align*}
\Cost(P,T,\phi_T^*)
&\le\Cost(P,T,\widehat\phi_T)
=\sum_{j}\sum_{p\in \widetilde C_j}
\lVert p-t_j\rVert^2\\
&\le\left(1+\frac{\delta^2}{9}\right)
\sum_{j}\sum_{p\in \widetilde C_j}
\lVert p-\widetilde s_j\rVert^2
=\left(1+\frac{\delta^2}{9}\right)\OPT .
\end{align*}
\end{proof}

The next lemma turns the relaxation into a solution on any center set $S$ of size at most $k$, at a price that separates into two parts. Since the points of $\pi(T)$ are indexed by $T$, the matrix $\phi_T^*$ is also an assignment of $P$ to $\pi(T)$, and its cost
\begin{equation}
C_T=\Cost(P,\pi(T),\phi_T^*)
=\sum_{p\in P}\sum_{t\in T}
\phi_T^*(p,t)\lVert p-\pi(t)\rVert^2
\label{eq-ct}
\end{equation}
is a fixed quantity, independent of $S$. Following \citet{song2025relax}, the second part is $\Cost(\pi(T),S)$, the cost of covering the weighted centroids by $S$, and this is the only part the choice of $S$ influences.

\begin{lemma}[Fair routing]
\label{lem-fair-routing}
For every nonempty center set $S$, the fair assignment LP of $S$ is feasible and
\[
\Cost(P,S,\phi_S^*)
\le C_T+\Cost(\pi(T),S).
\]
Moreover, $0\le C_T\le\Cost(P,T,\phi_T^*)\le(1+\delta^2/9)\,\OPT$.
\end{lemma}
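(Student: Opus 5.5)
The plan is to route each fractional cluster of $\phi_T^*$ to the center of $S$ nearest its centroid. Concretely, for $t\in T$ with $w(t)>0$ let $\sigma(t)=\mathcal{N}(\pi(t),S)$, and define
\[
\phi_S(p,s)=\sum_{t\in T:\,w(t)>0,\ \sigma(t)=s}\phi_T^*(p,t).
\]
Points with $\phi_T^*(p,t)>0$ automatically have $w(t)>0$. So every row of $\phi_S$ sums to $\sum_t\phi_T^*(p,t)=1$, and $\phi_S$ is an assignment matrix.

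\textbf{Fairness.} For each $s$, the column of $\phi_S$ is the sum of the columns of $\phi_T^*$ over $\{t:\sigma(t)=s\}$. The total weight $w(s)$ and each group weight $w^{(i)}(s)$ are therefore the corresponding sums of $w(t)$ and $\sum_{p\in P^{(i)}}\phi_T^*(p,t)$. Each $t$ satisfies the fairness constraints of LP~\eqref{eq-fair-center-opening-lp}, and these constraints are linear and homogeneous, so summing them over $t$ preserves them, exactly as in the proof of Lemma~\ref{lem-candidate-set-cost}. Hence $\phi_S$ is $(\alpha,\beta)$-fair and the fair assignment LP of $S$ is feasible. Its optimum $\phi_S^*$ then costs at most $\Cost(P,S,\phi_S)$.

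\textbf{Cost.} By the triangle-free expansion,
\[
\Cost(P,S,\phi_S)=\sum_{t:\,w(t)>0}\sum_{p}\phi_T^*(p,t)\lVert p-\sigma(t)\rVert^2 .
\]
For each fixed $t$, apply the centroid identity to the weighted set $\{p\}$ with weights $\phi_T^*(p,t)$, whose centroid is $\pi(t)$ by \eqref{eq-weighted-centroid}, taking $z=\sigma(t)$. This gives
\[
\sum_p\phi_T^*(p,t)\lVert p-\pi(t)\rVert^2+w(t)\dist^2(\pi(t),S).
\]
Summing over $t$, the first terms give $C_T$ and the second terms give $\sum_t w(t)\dist^2(\pi(t),S)=\Cost(\pi(T),S)$. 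This step is the crux, though it is routine. The main care needed is the bookkeeping for $w(t)=0$: those $t$ contribute nothing to either side and are excluded from $\pi(T)$.

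\textbf{Bounds on $C_T$.} Nonnegativity of $C_T$ is immediate. For the upper bound, apply the centroid identity again for each $t$ with $z=t$:
\[
\sum_p\phi_T^*(p,t)\lVert p-t\rVert^2\ge\sum_p\phi_T^*(p,t)\lVert p-\pi(t)\rVert^2 .
\]
Summing over $t$ gives $C_T\le\Cost(P,T,\phi_T^*)$. The final inequality $\Cost(P,T,\phi_T^*)\le(1+\delta^2/9)\OPT$ is Lemma~\ref{lem-candidate-set-cost}.
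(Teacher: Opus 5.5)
Your proposal is correct and follows the paper's proof essentially step for step: route each column to $\mathcal N(\pi(t),S)$, get fairness by summing the homogeneous constraints, get the cost from the centroid identity with $z=\mathcal N(\pi(t),S)$, and bound $C_T$ by the centroid identity with $z=t$ together with Lemma~\ref{lem-candidate-set-cost}. Your explicit restriction to candidates with $w(t)>0$ is only a bookkeeping refinement, since zero-weight columns of $\phi_T^*$ vanish identically; the paper handles the same point implicitly.
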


\begin{proof}
Route every candidate to the center of $S$ nearest its centroid, carrying along the whole column of weight that the candidate received:
\[
\widehat\phi_S(p,s)
=\sum_{t:\mathcal{N}(\pi(t),S)=s}\phi_T^*(p,t).
\]
Each column of $\widehat\phi_S$ is a sum of columns of $\phi_T^*$, so summing over $s\in S$ regroups the row of $p$ in $\phi_T^*$ and gives
\[
\sum_{s\in S}\widehat\phi_S(p,s)
=\sum_{t\in T}\phi_T^*(p,t)=1 .
\]
Writing $w^{(i)}(t)=\sum_{p\in P^{(i)}}\phi_T^*(p,t)$, the same regrouping gives the total and group weights
\[
\sum_{p\in P}\widehat\phi_S(p,s)
=\sum_{t:\mathcal N(\pi(t),S)=s}w(t),
\qquad
\sum_{p\in P^{(i)}}\widehat\phi_S(p,s)
=\sum_{t:\mathcal N(\pi(t),S)=s}w^{(i)}(t).
\]
Since LP~\eqref{eq-fair-center-opening-lp} enforces $\beta_iw(t)\le w^{(i)}(t)\le\alpha_iw(t)$ at every $t$, summing these inequalities over the candidates routed to $s$ shows that $\widehat\phi_S$ is fair, so the fair assignment LP of $S$ is feasible. Each group is handled on its own here, which is why overlapping groups cause no difficulty.

For the cost, apply the centroid identity to the column of $t$, that is, to the points of $P$ carrying the weights $\phi_T^*(p,t)$, whose weighted centroid is $\pi(t)$ and whose total weight is $w(t)$, taking $z=\mathcal N(\pi(t),S)$:
\[
\sum_{p\in P}\phi_T^*(p,t)
\lVert p-\mathcal{N}(\pi(t),S)\rVert^2
=\sum_{p\in P}\phi_T^*(p,t)
\lVert p-\pi(t)\rVert^2
+w(t)\dist^2(\pi(t),S).
\]
Summing over $t\in T$ and using the optimality of $\phi_S^*$ gives
\[
\Cost(P,S,\phi_S^*)
\le\Cost(P,S,\widehat\phi_S)
=\sum_{t\in T}\sum_{p\in P}\phi_T^*(p,t)
\lVert p-\mathcal N(\pi(t),S)\rVert^2
=C_T+\Cost(\pi(T),S).
\]
Finally, the centroid identity applied to the same column with $z=t$ gives
\[
\sum_{p\in P}\phi_T^*(p,t)\lVert p-t\rVert^2
=\sum_{p\in P}\phi_T^*(p,t)\lVert p-\pi(t)\rVert^2
+w(t)\lVert\pi(t)-t\rVert^2
\ge\sum_{p\in P}\phi_T^*(p,t)\lVert p-\pi(t)\rVert^2 ,
\]
and summing over $t$ yields $C_T\le\Cost(P,T,\phi_T^*)$, to which Lemma~\ref{lem-candidate-set-cost} applies.
\end{proof}

\subsection{The Auxiliary Instance of Weighted Means}
\label{sec-weighted-mean}

Let $P_\nu=\{\nu_p:p\in P\}$ be the unit-weight indexed set of weighted means defined in Equation~\eqref{eq-weighted-mean}. Define the total squared displacement from the input points by
\begin{equation}
C_\nu=\sum_{p\in P}\lVert p-\nu_p\rVert^2
\label{eq-cnu}
\end{equation}
and let
\[
\OPT_\pi=\min_{|S|\le k}\Cost(\pi(T),S),
\qquad
\OPT_\nu=\min_{|S|\le k}\Cost(P_\nu,S)
\]
denote the optimal $k$-means costs of $\pi(T)$ and $P_\nu$, respectively.
The following lemma relates the cost of the auxiliary instance to $C_T$ and $C_\nu$.

\begin{lemma}[Weighted-mean identity and opening-LP bound]
\label{lem-ahmadian-mean}
The weighted means satisfy
\begin{equation}
\sum_{p\in P}\sum_{t\in T}\phi_T^*(p,t)
\lVert\pi(t)-\nu_p\rVert^2
=C_T-C_\nu ,
\qquad
0\le C_\nu\le C_T .
\label{eq-weighted-mean-identity}
\end{equation}
Moreover, $\OPT_\nu\le\Gamma\,(C_T-C_\nu)$. The same conclusions hold for rationally weighted input when each sum over input points includes its input weight.
\end{lemma}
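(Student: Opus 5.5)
The plan is to prove the identity one row at a time using the centroid identity, and then to get the bound on $\OPT_\nu$ by exhibiting a feasible solution of the center-opening LP~\eqref{eq-standard-opening-lp} built directly from $(\phi_T^*,y)$.

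\textbf{The identity.} Fix $p\in P$ and consider the row of $\phi_T^*$ at $p$. The weights $\phi_T^*(p,t)$ are nonnegative and sum to $1$, and by Equation~\eqref{eq-weighted-mean} the weighted centroid of the points $\{\pi(t)\}_{t\in T}$ with these weights is exactly $\nu_p$. Applying the centroid identity to this weighted set with $z=p$ gives
\[
\sum_{t\in T}\phi_T^*(p,t)\lVert \pi(t)-p\rVert^2
=\sum_{t\in T}\phi_T^*(p,t)\lVert \pi(t)-\nu_p\rVert^2+\lVert \nu_p-p\rVert^2 .
\]
Summing over $p$, the left side is $C_T$ by Equation~\eqref{eq-ct} and the last term sums to $C_\nu$. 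This yields Equation~\eqref{eq-weighted-mean-identity}. Since both $C_\nu$ and the double sum are sums of nonnegative terms, $0\le C_\nu\le C_T$ follows.

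\textbf{The bound on $\OPT_\nu$.} Take the demand set $D=P_\nu$ and the facility set $F=\pi(T)$, both indexed. Serve demand $\nu_p$ from facility $\pi(t)$ with fraction $\phi_T^*(p,t)$, and keep the openings $y_t$ from LP~\eqref{eq-fair-center-opening-lp}. The constraints of LP~\eqref{eq-standard-opening-lp} are inherited directly:
\begin{itemize}
\item each row sums to $1$;
\item $0\le\phi_T^*(p,t)\le y_t\le 1$;
\item $\sum_t y_t\le k$.
\end{itemize}
The fairness constraints are simply dropped. Candidates with $w(t)=0$ receive no demand and may be discarded or kept harmlessly. The cost of this solution is precisely the double sum in Equation~\eqref{eq-weighted-mean-identity}. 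Hence $\LP(D,F)\le C_T-C_\nu$. Equation~\eqref{eq-ahmadian-gap}, together with the remark that the continuous optimum is at most the discrete one, then gives
\[
\OPT_\nu\le\OPT_{\mathrm{disc}}(D,F)\le\Gamma\,(C_T-C_\nu).
\]

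\textbf{Main obstacle.} I expect the main difficulty to be the mismatch between indexed multisets and the sets $D,F$ in Equation~\eqref{eq-ahmadian-gap}.

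\emph{Coinciding facilities.} Distinct $t$ may have equal $\pi(t)$. I would merge coinciding facilities into one point with opening $\min\{1,\sum y_t\}$ and with the merged assignment equal to the sum of the corresponding columns. Each merged entry is at most $1$, because a row sums to $1$, and at most $\sum y_t$. So feasibility is preserved, the budget does not increase, and the cost is unchanged.

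\emph{Coinciding demands.} Repeated points $\nu_p$ correspond to integer demand multiplicities. Multiplicities are harmless: the Ahmadian et al.\ bound holds for demand multisets, since duplicated demands can be perturbed infinitesimally and continuity closes the argument. Equivalently, the multiset can be viewed as a weighted demand set.

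\emph{Rational weights.} For rationally weighted input, I would multiply all weights by a common denominator and replicate each point accordingly. Both the identity and the LP argument then go through verbatim with weights inserted, and dividing back by the common factor preserves all inequalities.
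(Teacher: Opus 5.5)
Your proposal is correct and follows the paper's proof essentially step for step. You apply the centroid identity row by row with $z=p$, reuse $(\phi_T^*,y)$ as a feasible solution of the standard opening LP on $(P_\nu,\pi(T))$ after dropping fairness, and invoke the gap; rational weights are handled by replication with co-located unit demands, exactly as in the paper. Your extra remarks on coinciding facilities (merging with opening $\min\{1,\sum y_t\}$) and coinciding demands are sound refinements that the paper leaves implicit here; it performs the same coalescing explicitly only in Appendix~\ref{app-grandoni}.
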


\begin{proof}
The point $\nu_p$ of Equation~\eqref{eq-weighted-mean} is the weighted centroid of the points $\pi(t)$, $t\in T$, under the weights $\phi_T^*(p,t)$, which sum to one. The centroid identity applied to this weighted set with $z=p$ therefore gives, for every $p\in P$,
\[
\sum_{t\in T}\phi_T^*(p,t)\lVert p-\pi(t)\rVert^2
=\sum_{t\in T}\phi_T^*(p,t)\lVert\pi(t)-\nu_p\rVert^2
+\lVert p-\nu_p\rVert^2 .
\]
Summing over $p\in P$ and using the definitions of $C_T$ and $C_\nu$ proves the equality in Equation~\eqref{eq-weighted-mean-identity}. The left-hand side is nonnegative, so $C_\nu\le C_T$; also, $C_\nu\ge0$ by definition.

Consider LP~\eqref{eq-standard-opening-lp} with demands $P_\nu$ and facilities $\pi(T)$. Use the same assignment matrix $\phi_T^*$ and opening vector $y$, with row $p$ representing $\nu_p$ and column $t$ representing $\pi(t)$.
The constraints of LP~\eqref{eq-fair-center-opening-lp} include
\begin{equation}
\sum_{t\in T}\phi_T^*(p,t)=1,
\qquad
0\le\phi_T^*(p,t)\le y_t\le1,
\qquad
\sum_{t\in T}y_t\le k,
\label{eq-induced-opening-feasibility}
\end{equation}
which are exactly the constraints of LP~\eqref{eq-standard-opening-lp} for this pair, so the solution is feasible. By Equation~\eqref{eq-weighted-mean-identity} its objective value is
\[
\Cost(P_\nu,\pi(T),\phi_T^*)
=\sum_{p\in P}\sum_{t\in T}\phi_T^*(p,t)
\lVert\nu_p-\pi(t)\rVert^2
=C_T-C_\nu ,
\]
and therefore $\LP(P_\nu,\pi(T))\le C_T-C_\nu$. Combining the two inequalities that follow Equation~\eqref{eq-ahmadian-gap},
\[
\OPT_\nu
\le \OPT_{\mathrm{disc}}(P_\nu,\pi(T))
\le\Gamma\,\LP(P_\nu,\pi(T))
\le\Gamma\,(C_T-C_\nu).
\]

For rational input weights $w(p)$, multiplying the centroid identity for each $p$ by $w(p)$ gives the weighted identity, and the same opening solution remains feasible. Choose an integer $L$ with $Lw(p)$ integral for every $p$, and replace each demand $\nu_p$ by $Lw(p)$ co-located unit demands with its assignment row. Every center-set cost and the feasible LP objective scale by $L$, so the unit-demand gap gives
\[
L\OPT_\nu\le\Gamma L(C_T-C_\nu)
\quad\Longrightarrow\quad
\OPT_\nu\le\Gamma(C_T-C_\nu).
\]
This replication is only used in the proof; computations use the rational weights directly.
\end{proof}

\begin{lemma}[Transfer to $\pi(T)$]
\label{lem-cost-transfer-pi}
For every center set $S$,
\begin{equation}
\Cost(\pi(T),S)
\le C_T-C_\nu+\Cost(P_\nu,S).
\label{eq-transfer-for-any-s}
\end{equation}
Consequently, $\OPT_\pi\le C_T-C_\nu+\OPT_\nu\le(\Gamma+1)(C_T-C_\nu)$.
\end{lemma}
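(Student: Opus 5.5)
Since $\Cost(\pi(T),S)$ is a sum over $t$ of $w(t)\dist^2(\pi(t),S)$, I will write it as $\sum_{t\in T}\sum_{p\in P}\phi_T^*(p,t)\dist^2(\pi(t),S)$, using $w(t)=\sum_p\phi_T^*(p,t)$ from Equation~\eqref{eq-weighted-centroid}. Candidates with $w(t)=0$ contribute nothing on either side, so they can be ignored. The plan is then to bound each summand pointwise by a triangle-type inequality through $\nu_p$ and to sum.

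For a fixed pair $(p,t)$, let $s_p=\mathcal N(\nu_p,S)$. Then $\dist^2(\pi(t),S)\le\lVert\pi(t)-s_p\rVert^2$. Expanding gives
\[
\lVert\pi(t)-s_p\rVert^2=\lVert\pi(t)-\nu_p\rVert^2+2(\pi(t)-\nu_p)^\top(\nu_p-s_p)+\lVert\nu_p-s_p\rVert^2.
\]
The key point is to fix $p$ first and sum over $t$ with weights $\phi_T^*(p,t)$. Because $\nu_p=\sum_t\phi_T^*(p,t)\pi(t)$ and $\sum_t\phi_T^*(p,t)=1$, the cross term vanishes. This is the centroid identity for the weighted set $\{\pi(t)\}$ with $z=s_p$. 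The result is
\[
\sum_t\phi_T^*(p,t)\dist^2(\pi(t),S)\le\sum_t\phi_T^*(p,t)\lVert\pi(t)-\nu_p\rVert^2+\dist^2(\nu_p,S).
\]
Summing over $p$ and applying Equation~\eqref{eq-weighted-mean-identity} from Lemma~\ref{lem-ahmadian-mean} to the first double sum gives exactly $C_T-C_\nu+\Cost(P_\nu,S)$. This proves Equation~\eqref{eq-transfer-for-any-s}. The only subtle step is choosing the center $s_p$ per demand $p$ rather than per candidate $t$; this choice is what kills the cross term with no loss of a factor $2$.

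For the consequence, take $S$ to be an optimal $k$-means center set for $P_\nu$, which exists with $|S|\le k$. Then $\OPT_\pi\le\Cost(\pi(T),S)\le C_T-C_\nu+\OPT_\nu$. Lemma~\ref{lem-ahmadian-mean} gives $\OPT_\nu\le\Gamma(C_T-C_\nu)$, which yields $(\Gamma+1)(C_T-C_\nu)$. If $S$ is empty or attainment is a concern, one can instead use any near-optimal $S$ and take the infimum. No step appears to be a real obstacle; the only thing to check is the bookkeeping with indices when centroids coincide.
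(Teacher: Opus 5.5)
Your proof is correct and follows the paper's argument essentially step for step. The paper likewise fixes $s_p=\mathcal N(\nu_p,S)$ per demand, bounds $w(t)\dist^2(\pi(t),S)$ by the $\phi_T^*(p,t)$-weighted sum of $\lVert\pi(t)-s_p\rVert^2$, applies the centroid identity for each $p$ together with Equation~\eqref{eq-weighted-mean-identity}, and then obtains the consequence by plugging in an optimal center set for $P_\nu$ and invoking Lemma~\ref{lem-ahmadian-mean}.
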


\begin{proof}
Fix $S$ and write $s_p=\mathcal N(\nu_p,S)$ for each $p\in P$. Since $s_p\in S$, for every $t\in T$ and $p\in P$ we have
\[
\dist^2(\pi(t),S)
=\min_{s\in S}\lVert\pi(t)-s\rVert^2
\le\lVert\pi(t)-s_p\rVert^2 .
\]
Now fix $t$ with $w(t)>0$. The coefficients $\phi_T^*(p,t)/w(t)$ are nonnegative, and Equation~\eqref{eq-weighted-centroid} gives
\[
\sum_{p\in P}\frac{\phi_T^*(p,t)}{w(t)}
=\frac{1}{w(t)}\sum_{p\in P}\phi_T^*(p,t)=1.
\]
Multiplying the distance inequality by these coefficients and summing over $p$ gives
\begin{align*}
\dist^2(\pi(t),S)
&=\sum_{p\in P}\frac{\phi_T^*(p,t)}{w(t)}
\dist^2(\pi(t),S)\\
&\le\sum_{p\in P}\frac{\phi_T^*(p,t)}{w(t)}
\lVert\pi(t)-s_p\rVert^2 .
\end{align*}
Multiplying by $w(t)$ and summing over the positive-weight candidates gives
\begin{align*}
\Cost(\pi(T),S)
&=\sum_{t\in T}w(t)\dist^2(\pi(t),S)\\
&\le\sum_{p\in P}\sum_{t\in T}
\phi_T^*(p,t)\lVert\pi(t)-s_p\rVert^2 .
\end{align*}
The sums may include all $t\in T$: if $w(t)=0$, then $\phi_T^*(p,t)=0$ for every $p$, so that candidate contributes zero to both sides.

For each $p$, $\nu_p$ is the centroid of the points $\pi(t)$ under the weights $\phi_T^*(p,t)$, which sum to one. Applying the centroid identity and summing over $p$ in the preceding bound gives
\begin{align*}
\Cost(\pi(T),S)
&\le\sum_{p\in P}\sum_{t\in T}\phi_T^*(p,t)
\lVert\pi(t)-\nu_p\rVert^2
+\sum_{p\in P}\lVert\nu_p-s_p\rVert^2\\
&=C_T-C_\nu+\Cost(P_\nu,S).
\end{align*}
Here the first sum is $C_T-C_\nu$ by Equation~\eqref{eq-weighted-mean-identity}, and the second is $\Cost(P_\nu,S)$ because $s_p$ is the nearest center to $\nu_p$. This proves Equation~\eqref{eq-transfer-for-any-s}.

Finally, let $S_\nu^\star$ be an optimal center set for $P_\nu$ with $|S_\nu^\star|\le k$. It is also an admissible center set in the definition of $\OPT_\pi$, so
\begin{align*}
\OPT_\pi
&\le\Cost(\pi(T),S_\nu^\star)\\
&\le C_T-C_\nu+\Cost(P_\nu,S_\nu^\star)
=C_T-C_\nu+\OPT_\nu\\
&\le(\Gamma+1)(C_T-C_\nu),
\end{align*}
where the last inequality is Lemma~\ref{lem-ahmadian-mean}.
\end{proof}

\subsection{Two Bounds for \texorpdfstring{$S_\pi$}{S-pi}}
\label{sec-bounds-spi}

The approximation guarantee $\Cost(\pi(T),S_\pi)\le\rho\,\OPT_\pi$ and Lemma~\ref{lem-fair-routing} imply
\[
\Cost(P,S_\pi,\phi_{S_\pi}^*)\le C_T+\rho\,\OPT_\pi.
\]
We derive two upper bounds on $\OPT_\pi$. Lemma~\ref{lem-cost-transfer-pi} gives $\OPT_\pi\le(\Gamma+1)(C_T-C_\nu)$. The second bound follows by comparing $\pi(T)$ with the optimal fair center set $\Sopt$.

For the latter comparison, we expand $\lVert\pi(t)-\widetilde s_j\rVert^2$ using $\pi(t)-\widetilde s_j=(\pi(t)-p)+(p-\widetilde s_j)$ for $p\in\widetilde C_j$. Since $\sum_t\phi_T^*(p,t)(\pi(t)-p)=\nu_p-p$, summing the cross terms with weights $\phi_T^*(p,t)$ yields $2C_e$, where
\begin{equation}
C_e=\sum_{j}\sum_{p\in \widetilde C_j}
(\nu_p-p)^\top(p-\widetilde s_j) .
\label{eq-ce-definition}
\end{equation}
The quantity $C_e$ may be positive or negative. We retain it explicitly in the following bounds and eliminate it through a convex combination in Section~\ref{sec-convex-combination}.

\begin{lemma}[Bounds for $S_\pi$]
\label{lem-spi-bounds}
The fair assignment computed on $S_\pi$ satisfies both
\begin{align}
\Cost(P,S_\pi,\phi_{S_\pi}^*)
&\le C_T
+\rho\,\OPT_\pi
\le C_T
+\rho(\Gamma+1)(C_T-C_\nu)
\label{eq-spi-first-bound}
\intertext{and}
\Cost(P,S_\pi,\phi_{S_\pi}^*)
&\le C_T
+\rho\left(C_T+\OPT+2C_e\right).
\label{eq-spi-second-bound}
\end{align}
\end{lemma}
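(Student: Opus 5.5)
Both bounds start from the same observation. Lemma~\ref{lem-fair-routing} applied to $S=S_\pi$ gives $\Cost(P,S_\pi,\phi_{S_\pi}^*)\le C_T+\Cost(\pi(T),S_\pi)$. Since $|S_\pi|\le k$ and $S_\pi$ is the output of the $\rho$-approximate weighted $k$-means algorithm on $\pi(T)$, we also have $\Cost(\pi(T),S_\pi)\le\rho\,\OPT_\pi$. This proves the first inequality of both \eqref{eq-spi-first-bound} and \eqref{eq-spi-second-bound} up to the bound on $\OPT_\pi$. For \eqref{eq-spi-first-bound} I will then substitute $\OPT_\pi\le(\Gamma+1)(C_T-C_\nu)$ from Lemma~\ref{lem-cost-transfer-pi}, which finishes that case.

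For \eqref{eq-spi-second-bound} it suffices to show $\OPT_\pi\le C_T+\OPT+2C_e$. I will take $\Sopt$ as the test center set, which is admissible since $|\Sopt|\le k$. The argument has three steps.

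\begin{enumerate}
\item Bound the distance of each weighted centroid to $\Sopt$ by averaging over the points that serve it. Fix $t$ with $w(t)>0$. For each $p\in P$, let $j(p)$ be the index of the optimal cluster containing $p$. Averaging over $p$ with weights $\phi_T^*(p,t)/w(t)$, exactly as in the proof of Lemma~\ref{lem-cost-transfer-pi}, gives $w(t)\dist^2(\pi(t),\Sopt)\le\sum_p\phi_T^*(p,t)\lVert\pi(t)-\widetilde s_{j(p)}\rVert^2$.
\item Sum over $t$ and regroup by $p$:
\[
\Cost(\pi(T),\Sopt)\le\sum_j\sum_{p\in\widetilde C_j}\sum_t\phi_T^*(p,t)\lVert\pi(t)-\widetilde s_j\rVert^2 .
\]
\item Expand $\pi(t)-\widetilde s_j=(\pi(t)-p)+(p-\widetilde s_j)$. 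The three resulting terms are handled as follows.
\begin{itemize}
\item The first square sums to $\sum_p\sum_t\phi_T^*(p,t)\lVert\pi(t)-p\rVert^2=C_T$.
\item Because $\sum_t\phi_T^*(p,t)=1$, the second square sums to $\sum_j\sum_{p\in\widetilde C_j}\lVert p-\widetilde s_j\rVert^2=\OPT$.
\item In the cross term, $\sum_t\phi_T^*(p,t)(\pi(t)-p)=\nu_p-p$, so it sums to exactly $2C_e$ as defined in \eqref{eq-ce-definition}.
\end{itemize}
\end{enumerate}

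These three pieces give an equality, so $\OPT_\pi\le\Cost(\pi(T),\Sopt)\le C_T+\OPT+2C_e$. Multiplying by $\rho$ and adding $C_T$ yields \eqref{eq-spi-second-bound}.

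The only delicate points are bookkeeping. First, the averaging in step~1 requires $w(t)>0$. Candidates with $w(t)=0$ are absent from $\pi(T)$ and have zero columns, so they contribute zero on both sides and the restriction costs nothing. Second, the optimal clusters must partition $P$, so that regrouping by $j$ is legitimate. The main care is to keep $C_e$ as an identity-level quantity rather than bounding it, since it may be negative and is cancelled only later.
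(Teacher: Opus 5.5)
Your proposal is correct and follows the paper's proof essentially step for step. Both bounds come from Lemma~\ref{lem-fair-routing} plus the $\rho$-approximation. For Equation~\eqref{eq-spi-first-bound} you invoke Lemma~\ref{lem-cost-transfer-pi}; for Equation~\eqref{eq-spi-second-bound} you use $\Sopt$ as the test set with the weighted averaging over $p$ and the expansion $\pi(t)-\widetilde s_j=(\pi(t)-p)+(p-\widetilde s_j)$, which yields $C_T+\OPT+2C_e$. Your handling of zero-weight candidates matches the paper's as well.
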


\begin{proof}
Lemma~\ref{lem-fair-routing} with $S=S_\pi$ and the approximation guarantee of the weighted $k$-means call give
\begin{equation}
\Cost(P,S_\pi,\phi_{S_\pi}^*)
\le C_T+\Cost(\pi(T),S_\pi)
\le C_T+\rho\,\OPT_\pi ,
\label{eq-spi-generic}
\end{equation}
By Lemma~\ref{lem-cost-transfer-pi},
\[
\Cost(P,S_\pi,\phi_{S_\pi}^*)
\le C_T+\rho\,\OPT_\pi
\le C_T+\rho(\Gamma+1)(C_T-C_\nu),
\]
which is Equation~\eqref{eq-spi-first-bound}.

To prove \eqref{eq-spi-second-bound}, we compare $\pi(T)$ with $\Sopt$. Fix $t\in T$ with $w(t)>0$. For every $p\in\widetilde C_j$,
\[
\dist^2(\pi(t),\Sopt)
=\min_{s\in\Sopt}\lVert\pi(t)-s\rVert^2
\le\lVert\pi(t)-\widetilde s_j\rVert^2.
\]
The optimal clusters partition $P$, so the nonnegative weights satisfy
\[
\sum_j\sum_{p\in\widetilde C_j}\frac{\phi_T^*(p,t)}{w(t)}
=\frac1{w(t)}\sum_{p\in P}\phi_T^*(p,t)=1.
\]
Taking the weighted average of the preceding inequality gives
\begin{align*}
\dist^2(\pi(t),\Sopt)
&=\sum_j\sum_{p\in\widetilde C_j}
\frac{\phi_T^*(p,t)}{w(t)}\dist^2(\pi(t),\Sopt)\\
&\le\sum_j\sum_{p\in\widetilde C_j}
\frac{\phi_T^*(p,t)}{w(t)}\lVert\pi(t)-\widetilde s_j\rVert^2.
\end{align*}
Multiplying by $w(t)$ and summing over the positive-weight candidates, with $|\Sopt|\le k$, yields
\begin{align*}
\OPT_\pi
&\le\Cost(\pi(T),\Sopt)
=\sum_{t\in T:w(t)>0}w(t)\dist^2(\pi(t),\Sopt)\\
&\le\sum_{t\in T:w(t)>0}\sum_j\sum_{p\in\widetilde C_j}
\phi_T^*(p,t)\lVert\pi(t)-\widetilde s_j\rVert^2\\
&=\sum_j\sum_{p\in\widetilde C_j}\sum_{t\in T}
\phi_T^*(p,t)\lVert\pi(t)-\widetilde s_j\rVert^2.
\end{align*}
The last equality uses $\phi_T^*(p,t)=0$ for all $p$ whenever $w(t)=0$.
Writing $\pi(t)-\widetilde s_j=(\pi(t)-p)+(p-\widetilde s_j)$ and expanding the squared norm,
\begin{align*}
\sum_{j}\sum_{p\in \widetilde C_j}\sum_{t\in T}
\phi_T^*(p,t)\lVert\pi(t)-\widetilde s_j\rVert^2
&=\sum_{p\in P}\sum_{t\in T}\phi_T^*(p,t)
\lVert\pi(t)-p\rVert^2
+\sum_{j}\sum_{p\in \widetilde C_j}
\lVert p-\widetilde s_j\rVert^2\\
&\qquad+2\sum_{j}\sum_{p\in \widetilde C_j}
\Bigl(\sum_{t\in T}\phi_T^*(p,t)(\pi(t)-p)\Bigr)^{\!\top}
(p-\widetilde s_j)\\
&=C_T+\OPT+2C_e ,
\end{align*}
where the second term used the row sums $\sum_t\phi_T^*(p,t)=1$. The first two terms are $C_T$ and $\OPT$ by Equations~\eqref{eq-ct} and~\eqref{eq-opt}, and the cross term is $2C_e$ by Equation~\eqref{eq-ce-definition}, since
\[
\sum_{t\in T}\phi_T^*(p,t)(\pi(t)-p)
=\nu_p-p .
\]
We have therefore established $\OPT_\pi\le C_T+\OPT+2C_e$. Equation~\eqref{eq-spi-generic} now gives
\[
\Cost(P,S_\pi,\phi_{S_\pi}^*)
\le C_T+\rho\,\OPT_\pi
\le C_T+\rho\left(C_T+\OPT+2C_e\right),
\]
which proves Equation~\eqref{eq-spi-second-bound}.
\end{proof}

\subsection{A Bound for \texorpdfstring{$S_\lambda$}{S-lambda}}
\label{sec-bounds-slambda}

We now bound the cost of the fair assignment on each center set $S_\lambda$ obtained from a shifted instance $P_\lambda$.

\begin{lemma}[Bound for $S_\lambda$]
\label{lem-slambda-bound}
For every $\lambda\in\Lambda_\delta$, the fair assignment computed on $S_\lambda$ satisfies
\begin{align}
\Cost(P,S_\lambda,\phi_{S_\lambda}^*)
&\le \OPT+2\left(\OPT+\lambda^2C_\nu-2\lambda C_e\right)
+2\,\Cost(P_\lambda,S_\lambda)\notag\\
&\le \OPT+4\rho\left(
\OPT+\lambda^2C_\nu-2\lambda C_e\right).
\label{eq-slambda-final-bound}
\end{align}
\end{lemma}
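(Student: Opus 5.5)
The plan is to build an explicit fair assignment to $S_\lambda$ out of the optimal fair clustering, and to bound its cost by comparison through the shifted points. Take the integral optimal assignment $\phi_{\Sopt}$, with clusters $\widetilde C_j$. Map each optimal center $\widetilde s_j$ to a center $\sigma_j\in S_\lambda$, and assign every $p\in\widetilde C_j$ wholly to $\sigma_j$. Each column of this assignment is then a union of whole optimal clusters. Each optimal cluster is exactly fair, so summing their fairness inequalities, exactly as in the proof of Lemma~\ref{lem-candidate-set-cost}, shows the assignment is fair. By optimality of $\phi_{S_\lambda}^*$,
\[
\Cost(P,S_\lambda,\phi_{S_\lambda}^*)\le\sum_j|\widetilde C_j|\,\lVert\widetilde s_j-\sigma_j\rVert^2+\OPT,
\]
using the centroid identity on $\widetilde C_j$ with $\widetilde s_j=\Cen(\widetilde C_j)$.

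For the choice of $\sigma_j$, I would route each point via its shifted copy. For $p\in\widetilde C_j$ let $s(p)=\mathcal N(p_\lambda,S_\lambda)$. By the triangle inequality and $(a+b)^2\le2a^2+2b^2$,
\[
\lVert\widetilde s_j-s(p)\rVert^2\le2\lVert\widetilde s_j-p_\lambda\rVert^2+2\dist^2(p_\lambda,S_\lambda).
\]
Averaging over $p\in\widetilde C_j$ and choosing $\sigma_j$ to be the $s(p)$ minimizing the left side gives
\[
|\widetilde C_j|\lVert\widetilde s_j-\sigma_j\rVert^2\le 2\sum_{p\in\widetilde C_j}\lVert p_\lambda-\widetilde s_j\rVert^2+2\sum_{p\in\widetilde C_j}\dist^2(p_\lambda,S_\lambda).
\]
Summing over $j$ produces the term $2\,\Cost(P_\lambda,S_\lambda)$ in the first line of \eqref{eq-slambda-final-bound}.

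Next I would expand the first sum using $p_\lambda-\widetilde s_j=(p-\widetilde s_j)-\lambda(\nu_p-p)$:
\[
\sum_j\sum_{p\in\widetilde C_j}\lVert p_\lambda-\widetilde s_j\rVert^2=\OPT+\lambda^2C_\nu-2\lambda C_e,
\]
by \eqref{eq-cnu} and \eqref{eq-ce-definition}. This gives the first inequality.

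For the second inequality, note that $\Sopt$ has at most $k$ centers. So $\Cost(P_\lambda,S_\lambda)\le\rho\min_{|S|\le k}\Cost(P_\lambda,S)\le\rho\,\Cost(P_\lambda,\Sopt)$. Nearest-center cost is at most the cost of sending $p_\lambda$ to $\widetilde s_j$, which by the same expansion is at most $\OPT+\lambda^2C_\nu-2\lambda C_e$. Then
\[
2(\cdot)+2\rho(\cdot)\le4\rho(\cdot),
\]
since $\rho\ge1$ and the bracket is nonnegative as a sum of squares.

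The main obstacle is ensuring the routing step keeps fairness while still charging distances to the shifted points. Sending all of $\widetilde C_j$ to a single center $\sigma_j$ is what preserves fairness, and the min-over-members (averaging) argument is what makes this cheap. The remaining steps are the routine expansion that makes the $\lambda$-dependent terms match $C_\nu$ and $C_e$ with the right signs.
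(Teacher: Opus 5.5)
Your proposal is correct and follows the paper's proof step for step: the whole-cluster fair routing, the averaged squared triangle inequality through the shifted points, the expansion giving $\OPT+\lambda^2C_\nu-2\lambda C_e$, and the final use of $2+2\rho\le4\rho$ with a nonnegative bracket. The only difference is the routing target. You send $\widetilde C_j$ to the best of the centers $\mathcal N(p_\lambda,S_\lambda)$, $p\in\widetilde C_j$, whereas the paper sends it to $\mathcal N(\widetilde s_j,S_\lambda)$; that center is at least as close to $\widetilde s_j$, so the same averaging bound applies to it verbatim.
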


\begin{proof}
Fix $\lambda\in\Lambda_\delta$. Use the assignment $\phi_{\Sopt}$ on $P_\lambda$, assigning $p_\lambda$ to $\widetilde s_j$ whenever $p\in\widetilde C_j$. Since $p_\lambda-\widetilde s_j=(p-\widetilde s_j)-\lambda(\nu_p-p)$ by Equation~\eqref{eq-shifted-instance}, expanding the squared norm and using Equations~\eqref{eq-opt}, \eqref{eq-cnu}, and~\eqref{eq-ce-definition} gives
\begin{align}
\Cost(P_\lambda,\Sopt,\phi_{\Sopt})
&=\sum_{j}\sum_{p\in \widetilde C_j}\lVert p_\lambda-\widetilde s_j\rVert^2\notag\\
&=\sum_{j}\sum_{p\in \widetilde C_j}\lVert p-\widetilde s_j\rVert^2
+\lambda^2\sum_{p\in P}\lVert\nu_p-p\rVert^2
-2\lambda\sum_{j}\sum_{p\in \widetilde C_j}(\nu_p-p)^\top(p-\widetilde s_j)\notag\\
&=\OPT+\lambda^2C_\nu-2\lambda C_e .
\label{eq-shifted-comparison-cost}
\end{align}
Since $\Sopt$ has at most $k$ centers, the $k$-means call on $P_\lambda$ therefore returns a center set $S_\lambda$ with
\begin{equation}
\Cost(P_\lambda,S_\lambda)
\le\rho\min_{|S|\le k}\Cost(P_\lambda,S)
\le\rho\left(\OPT+\lambda^2C_\nu-2\lambda C_e\right).
\label{eq-slambda-kmeans}
\end{equation}

Next, route every optimal cluster as a whole to the center of $S_\lambda$ nearest to its centroid, that is, let
\[
\widehat\phi_{S_\lambda}(p,s)=
\begin{cases}
1,&s=\mathcal N(\widetilde s_j,S_\lambda)\text{ for the index }j\text{ with }p\in \widetilde C_j,\\
0,&\text{otherwise}.
\end{cases}
\]
This is an integral assignment, and the cluster of each $s\in S_\lambda$ is the union of the optimal clusters with $\mathcal N(\widetilde s_j,S_\lambda)=s$. Each of those is exactly fair, so summing their fairness inequalities as in Lemma~\ref{lem-candidate-set-cost} shows that $\widehat\phi_{S_\lambda}$ is fair.

To bound the cost of this assignment, we first bound the distance from each optimal centroid $\widetilde s_j$ to $S_\lambda$ using the shifted points $p_\lambda$ with $p\in\widetilde C_j$. Fix a nonempty optimal cluster $\widetilde C_j$. Since $\mathcal N(p_\lambda,S_\lambda)\in S_\lambda$, we have $\dist^2(\widetilde s_j,S_\lambda)\le\lVert \widetilde s_j-\mathcal N(p_\lambda,S_\lambda)\rVert^2$ for each $p\in \widetilde C_j$. Summing over $p\in \widetilde C_j$ and applying the squared triangle inequality to $\widetilde s_j-\mathcal N(p_\lambda,S_\lambda)=(\widetilde s_j-p_\lambda)+(p_\lambda-\mathcal N(p_\lambda,S_\lambda))$,
\begin{equation}
|\widetilde C_j|\cdot\dist^2(\widetilde s_j,S_\lambda)
\le\sum_{p\in \widetilde C_j}\lVert \widetilde s_j-\mathcal N(p_\lambda,S_\lambda)\rVert^2
\le2\sum_{p\in \widetilde C_j}\lVert \widetilde s_j-p_\lambda\rVert^2
+2\sum_{p\in \widetilde C_j}\lVert p_\lambda-\mathcal N(p_\lambda,S_\lambda)\rVert^2 .
\label{eq-shifted-cluster-bound}
\end{equation}
Since $\widetilde s_j=\Cen(\widetilde C_j)$, the centroid identity with unit weights and $z=\mathcal N(\widetilde s_j,S_\lambda)$ gives
\begin{equation}
\sum_{p\in \widetilde C_j}
\lVert p-\mathcal{N}(\widetilde s_j,S_\lambda)\rVert^2
=\sum_{p\in \widetilde C_j}\lVert p-\widetilde s_j\rVert^2
+|\widetilde C_j|\cdot\dist^2(\widetilde s_j,S_\lambda).
\label{eq-shifted-cluster-centroid}
\end{equation}
Substituting Equation~\eqref{eq-shifted-cluster-bound} into Equation~\eqref{eq-shifted-cluster-centroid} and summing over the optimal clusters,
\begin{align*}
\Cost(P,S_\lambda,\widehat\phi_{S_\lambda})
&=\sum_{j}\sum_{p\in \widetilde C_j}
\lVert p-\mathcal N(\widetilde s_j,S_\lambda)\rVert^2\\
&\le\sum_{j}\sum_{p\in \widetilde C_j}\lVert p-\widetilde s_j\rVert^2
+2\sum_{j}\sum_{p\in \widetilde C_j}\lVert \widetilde s_j-p_\lambda\rVert^2
+2\sum_{p\in P}\lVert p_\lambda-\mathcal N(p_\lambda,S_\lambda)\rVert^2\\
&=\OPT+2\left(\OPT+\lambda^2C_\nu-2\lambda C_e\right)
+2\,\Cost(P_\lambda,S_\lambda),
\end{align*}
by Equations~\eqref{eq-opt} and~\eqref{eq-shifted-comparison-cost} and because $\lVert p_\lambda-\mathcal N(p_\lambda,S_\lambda)\rVert=\dist(p_\lambda,S_\lambda)$. Since $\widehat\phi_{S_\lambda}$ is fair, the optimal fair assignment satisfies
\begin{align*}
\Cost(P,S_\lambda,\phi_{S_\lambda}^*)
&\le\Cost(P,S_\lambda,\widehat\phi_{S_\lambda})\\
&\le\OPT+2\left(\OPT+\lambda^2C_\nu-2\lambda C_e\right)
+2\,\Cost(P_\lambda,S_\lambda)\\
&\le\OPT+2(1+\rho)\left(\OPT+\lambda^2C_\nu-2\lambda C_e\right)\\
&\le\OPT+4\rho\left(\OPT+\lambda^2C_\nu-2\lambda C_e\right).
\end{align*}
The third line uses Equation~\eqref{eq-slambda-kmeans}. The last line uses $2(1+\rho)\le4\rho$ and
$\OPT+\lambda^2C_\nu-2\lambda C_e=\Cost(P_\lambda,\Sopt,\phi_{\Sopt})\ge0$.
This proves both inequalities in Equation~\eqref{eq-slambda-final-bound}.
\end{proof}

\subsection{Convex Combination}
\label{sec-convex-combination}

The three bounds of Lemmas~\ref{lem-spi-bounds} and~\ref{lem-slambda-bound} all involve $C_\nu$ and $C_e$, quantities determined by the relaxation rather than chosen by the algorithm. What makes them combine is the pattern of signs: $C_\nu$ appears with a negative coefficient in the first bound and a positive one in the third, and $C_e$ with a positive coefficient in the second and a negative one in the third. A convex combination with suitable weights therefore eliminates both, leaving a bound in $\OPT$ alone. Since the algorithm returns the cheapest of its candidates, its cost is at most any convex combination of the three, so no such combination needs to be computed.

The choice of weights and of the grid value $\lambda$ is a short optimization, and the opening-LP gap enters it only through the factor $\Gamma+1$ in Equation~\eqref{eq-spi-first-bound}. We therefore state the combination for a generic constant $\gamma$ in place of $\Gamma$, which lets Appendix~\ref{app-grandoni} reuse it with $\gamma=\Gamma_0$.

\begin{lemma}[Convex combination]
\label{lem-convex-combination}
Let $\gamma>3$, and suppose that $\OPT_\pi\le(\gamma+1)(C_T-C_\nu)$ and that $C_T\le(1+\delta)\OPT$. Then the solution $(S,\phi_S^*)$ returned by Algorithm~\ref{alg-fractional-fair-kmeans} satisfies
\[
\Cost(P,S,\phi_S^*)
\le\left(1+\left(3-\frac1\gamma\right)\rho+O(\rho\gamma\delta)\right)\OPT .
\]
\end{lemma}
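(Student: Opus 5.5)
The plan is to use that the returned solution is the cheapest candidate. Its cost is therefore at most any convex combination $a\,B_1+b\,B_2+c\,B_3$ of three bounds: $B_1$, the first bound of Lemma~\ref{lem-spi-bounds} with $\Gamma$ replaced by $\gamma$ via the hypothesis $\OPT_\pi\le(\gamma+1)(C_T-C_\nu)$; $B_2$, the second bound of Lemma~\ref{lem-spi-bounds}; and $B_3$, the bound of Lemma~\ref{lem-slambda-bound} at a single grid value $\lambda\in\Lambda_\delta$. I will pick $a,b,c\ge0$ with $a+b+c=1$ so that the uncontrolled quantities $C_\nu$ and $C_e$ cancel exactly, for the chosen grid value $\lambda$.

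The coefficient of $C_\nu$ in the combination is $-a\rho(\gamma+1)+4c\rho\lambda^2$, and that of $C_e$ is $2b\rho-8c\rho\lambda$. Both vanish if
\[
b=4c\lambda,\qquad a=\frac{4c\lambda^2}{\gamma+1},\qquad c=\Bigl(1+4\lambda+\frac{4\lambda^2}{\gamma+1}\Bigr)^{-1}.
\]
These weights are nonnegative and sum to one. What remains is
\[
\bigl(a+b+\rho a(\gamma+1)+\rho b\bigr)C_T+\bigl(b\rho+c+4c\rho\bigr)\OPT .
\]
The coefficient of $C_T$ is nonnegative, so I may substitute $C_T\le(1+\delta)\OPT$. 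Ignoring the $\delta$ part, the coefficient of $\OPT$ collapses, using $a+b+c=1$, to
\[
1+\rho\bigl(4c\lambda^2+8c\lambda+4c\bigr)=1+\rho\,f(\lambda),
\qquad
f(\lambda)=\frac{4(1+\lambda)^2}{1+4\lambda+4\lambda^2/(\gamma+1)} .
\]
The $\delta$ part contributes $\delta\bigl(a+b+\rho a(\gamma+1)+\rho b\bigr)\OPT=O(\rho\delta)\OPT$, since $a(\gamma+1)=4c\lambda^2\le4$ and $b\le4$.

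Next I minimize $f$. Setting the logarithmic derivative to zero gives $2(1+4\lambda+4\lambda^2/g)=(1+\lambda)(4+8\lambda/g)$ with $g=\gamma+1$. This simplifies to
\[
\lambda^\star=\frac{\gamma+1}{2(\gamma-1)} .
\]
The hypothesis $\gamma>3$ is exactly what puts $\lambda^\star$ inside $(1/2,1)$, so it lies in the range covered by the grid. Substituting, one has $1+\lambda^\star=(3\gamma-1)/(2(\gamma-1))$, and the denominator of $f$ equals $\gamma(3\gamma-1)/(\gamma-1)^2$. Hence $f(\lambda^\star)=(3\gamma-1)/\gamma=3-1/\gamma$.

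The main obstacle is that $\lambda^\star$ need not lie in $\Lambda_\delta$. I handle this by taking the nearest grid point $\lambda$, with $|\lambda-\lambda^\star|=O(\delta)$, and defining the weights from that grid $\lambda$ itself. The cancellation of $C_\nu$ and $C_e$ is then still exact, and only the value $f(\lambda)$ is perturbed. On $[1/2,1]$ the function $f$ is smooth with a bounded derivative, because its denominator is at least $3$. Since $\lambda^\star$ is an interior critical point, $f(\lambda)-f(\lambda^\star)=O(\delta)$, and in fact $O(\delta^2)$. Multiplied by $\rho$, this is absorbed into the error term. Combining with the $O(\rho\delta)$ contribution from $C_T$ yields
\[
\Cost(P,S,\phi_S^*)\le\Bigl(1+\bigl(3-\tfrac1\gamma\bigr)\rho+O(\rho\gamma\delta)\Bigr)\OPT ,
\]
which is the stated bound; the factor $\gamma$ in the error term is a safe overestimate.
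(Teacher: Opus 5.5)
Your proposal is correct. Its core matches the paper: the same three bounds, the same cancellation equations for the weights, and the same shift $\lambda^\star=(\gamma+1)/(2(\gamma-1))$, which gives the constant $3-1/\gamma$.

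Where you depart is the discretization step.

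\textbf{The paper's route.} It fixes the weights $\theta_1,\theta_2,\theta_3$ at $\lambda^\star$. It then bounds how much the third bound changes when $\lambda^\star$ is replaced by a nearby grid value. That perturbation argument has two costs:
\begin{itemize}
\item it divides by $\OPT$, so the case $\OPT=0$ must be handled separately;
\item it needs a priori control of the quantities the algorithm cannot control, namely $C_\nu/\OPT\le 2$ from the weighted-mean identity and $|C_e|/\OPT\le\sqrt2$ from Cauchy--Schwarz.
\end{itemize}

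\textbf{Your route.} You solve the cancellation equations at the grid value itself. Then $C_\nu$ and $C_e$ drop out exactly, and only the scalar function $f$ is perturbed. You control $f$ by its bounded derivative on $[1/2,1]$, since its denominator is at least $3$. This gains three things:
\begin{itemize}
\item it needs no Cauchy--Schwarz estimate;
\item it never divides by $\OPT$, so the degenerate case is covered automatically;
\item you keep the weights inside the $\delta$-error, using $a(\gamma+1)=4c\lambda^2\le4$ instead of bounding each weight by one, so you get a $\gamma$-free error $O(\rho\delta)$, sharper than the stated $O(\rho\gamma\delta)$.
\end{itemize}

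One small point of attribution: the $O(\delta)$ bound on $f(\lambda)-f(\lambda^\star)$ comes from the bounded derivative alone. The critical-point observation only improves it to $O(\delta^2)$ and is not needed.

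The paper's version has closed-form weights and a fully explicit check of the constant. Yours additionally explains where $\lambda^\star$ comes from, as the minimizer of $f$.
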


\begin{proof}
If $\OPT=0$, the second hypothesis gives $C_T=0$, hence $C_\nu=0$ by Equation~\eqref{eq-weighted-mean-identity}, and then the first inequality of Equation~\eqref{eq-spi-first-bound} with the first hypothesis gives $\Cost(P,S_\pi,\phi_{S_\pi}^*)\le0+\rho(\gamma+1)\cdot0=0$; the returned solution is at least as cheap. Assume now $\OPT>0$. By Equation~\eqref{eq-weighted-mean-identity} and the second hypothesis, $0\le C_\nu/\OPT\le1+\delta\le2$, and the Cauchy--Schwarz inequality applied to Equation~\eqref{eq-ce-definition} gives
\[
|C_e|
\le\Bigl(\sum_{p\in P}\lVert\nu_p-p\rVert^2\Bigr)^{1/2}
\Bigl(\sum_{j}\sum_{p\in \widetilde C_j}\lVert p-\widetilde s_j\rVert^2\Bigr)^{1/2}
=\sqrt{C_\nu\cdot \OPT} ,
\]
hence $|C_e|/\OPT\le\sqrt{C_\nu/\OPT}\le\sqrt2$.

Divide Equations~\eqref{eq-spi-first-bound}, \eqref{eq-spi-second-bound}, and~\eqref{eq-slambda-final-bound} by $\OPT$, using the first hypothesis in place of Lemma~\ref{lem-cost-transfer-pi} and replacing every positive occurrence of $C_T$ by $(1+\delta)\OPT$. For the first two bounds, the substitution gives explicitly
\begin{align*}
\frac{C_T+\rho(\gamma+1)(C_T-C_\nu)}{\OPT}
&\le(1+\delta)+\rho(\gamma+1)
\left(1+\delta-\frac{C_\nu}{\OPT}\right)\\
&=1+\rho(\gamma+1)\left(1-\frac{C_\nu}{\OPT}\right)
+\delta\bigl(1+\rho(\gamma+1)\bigr),\\
\frac{C_T+\rho(C_T+\OPT+2C_e)}{\OPT}
&\le(1+\delta)+\rho\left(2+\delta+2\frac{C_e}{\OPT}\right)\\
&=1+\rho\left(2+2\frac{C_e}{\OPT}\right)+\delta(1+\rho).
\end{align*}
Both error terms are $O(\rho\gamma\delta)$. Thus, for all $\lambda\in\Lambda_\delta$,
\begin{align}
\frac{\Cost(P,S_\pi,\phi_{S_\pi}^*)}{\OPT}
&\le1+\rho(\gamma+1)\left(1-\frac{C_\nu}{\OPT}\right)
+O(\rho\gamma\delta),\notag\\
\frac{\Cost(P,S_\pi,\phi_{S_\pi}^*)}{\OPT}
&\le1+\rho\left(2+2\frac{C_e}{\OPT}\right)
+O(\rho\gamma\delta),\notag\\
\frac{\Cost(P,S_\lambda,\phi_{S_\lambda}^*)}{\OPT}
&\le1+4\rho\left(1+\lambda^2\frac{C_\nu}{\OPT}
-2\lambda\frac{C_e}{\OPT}\right).
\label{eq-normalized-bounds}
\end{align}
To cancel the coefficients of $C_\nu/\OPT$ and $C_e/\OPT$, respectively, the weights must satisfy
\[
-(\gamma+1)\theta_1+4\lambda^2\theta_3=0,
\qquad
2\theta_2-8\lambda\theta_3=0,
\qquad
\theta_1+\theta_2+\theta_3=1.
\]
For a fixed $\lambda$, these equations give
\[
\theta_1=\frac{4\lambda^2}{\gamma+1}\theta_3,
\qquad
\theta_2=4\lambda\theta_3,
\qquad
\theta_3=\left(1+4\lambda+\frac{4\lambda^2}{\gamma+1}\right)^{-1}.
\]
We use the following shift and its corresponding weights:
\[
\lambda^*=\frac{\gamma+1}{2(\gamma-1)},
\qquad
\theta_1=\frac{\gamma+1}{\gamma(3\gamma-1)},
\qquad
\theta_2=\frac{2(\gamma^2-1)}{\gamma(3\gamma-1)},
\qquad
\theta_3=\frac{(\gamma-1)^2}{\gamma(3\gamma-1)} .
\]
Since $\gamma>3$, the three weights are positive and $\lambda^*\in(1/2,1)$, so $\lambda^*$ is covered by the grid $\Lambda_\delta$. The weights sum to one, because
\[
(\gamma+1)+2(\gamma^2-1)+(\gamma-1)^2
=\gamma+1+2\gamma^2-2+\gamma^2-2\gamma+1
=3\gamma^2-\gamma
=\gamma(3\gamma-1).
\]
Take the $(\theta_1,\theta_2,\theta_3)$-combination of the three right-hand sides of Equation~\eqref{eq-normalized-bounds}, omitting the error terms and setting $\lambda=\lambda^*$. The coefficient of $C_\nu/\OPT$ is $\rho$ times $-(\gamma+1)\theta_1+4(\lambda^*)^2\theta_3$, and that of $C_e/\OPT$ is $\rho$ times $2\theta_2-8\lambda^*\theta_3$. Both vanish:
\begin{align*}
4(\lambda^*)^2\theta_3
&=4\cdot\frac{(\gamma+1)^2}{4(\gamma-1)^2}
\cdot\frac{(\gamma-1)^2}{\gamma(3\gamma-1)}
=\frac{(\gamma+1)^2}{\gamma(3\gamma-1)}
=(\gamma+1)\,\theta_1,\\
8\lambda^*\theta_3
&=8\cdot\frac{\gamma+1}{2(\gamma-1)}
\cdot\frac{(\gamma-1)^2}{\gamma(3\gamma-1)}
=\frac{4(\gamma+1)(\gamma-1)}{\gamma(3\gamma-1)}
=2\,\theta_2 .
\end{align*}
Only the constant terms remain, so this combination equals
\begin{align}
&1+\rho\left((\gamma+1)\theta_1+2\theta_2+4\theta_3\right)\notag\\
&=1+\rho\cdot
\frac{(\gamma+1)^2+4(\gamma^2-1)+4(\gamma-1)^2}
{\gamma(3\gamma-1)}\notag\\
&=1+\rho\cdot\frac{9\gamma^2-6\gamma+1}{\gamma(3\gamma-1)}
=1+\rho\cdot\frac{(3\gamma-1)^2}{\gamma(3\gamma-1)}\notag\\
&=1+\left(3-\frac1\gamma\right)\rho ,
\label{eq-convex-combination-value}
\end{align}
where the numerator was expanded as
$(\gamma^2+2\gamma+1)+(4\gamma^2-4)+(4\gamma^2-8\gamma+4)=9\gamma^2-6\gamma+1$.

The algorithm clusters $P_\lambda$ only for $\lambda$ in the grid, so $\lambda^*$ must be replaced by a nearby grid value $\widetilde\lambda$ with $|\widetilde\lambda-\lambda^*|=O(\delta)$. Using $\widetilde\lambda+\lambda^*\le2$ together with the bounds $C_\nu/\OPT\le2$ and $|C_e|/\OPT\le\sqrt2$ established above, the third right-hand side of Equation~\eqref{eq-normalized-bounds} changes by at most
\begin{align*}
&4\rho\left|
\bigl(\widetilde\lambda^2-(\lambda^*)^2\bigr)\frac{C_\nu}{\OPT}
-2(\widetilde\lambda-\lambda^*)\frac{C_e}{\OPT}
\right|\\
&\quad\le4\rho\,|\widetilde\lambda-\lambda^*|
\left((\widetilde\lambda+\lambda^*)\frac{C_\nu}{\OPT}
+2\frac{|C_e|}{\OPT}\right)\\
&\quad\le4\rho\,|\widetilde\lambda-\lambda^*|(4+2\sqrt2)
=O(\rho\delta).
\end{align*}
Let $B_1,B_2,B_3(\widetilde\lambda)$ denote the three right-hand sides of Equation~\eqref{eq-normalized-bounds}, including the error terms. Since the algorithm returns the cheapest candidate,
\begin{align*}
\frac{\Cost(P,S,\phi_S^*)}{\OPT}
&\le\min\left\{
\frac{\Cost(P,S_\pi,\phi_{S_\pi}^*)}{\OPT},
\frac{\Cost(P,S_{\widetilde\lambda},\phi_{S_{\widetilde\lambda}}^*)}{\OPT}
\right\}\\
&\le\min\{B_1,B_2,B_3(\widetilde\lambda)\}\\
&\le\theta_1B_1+\theta_2B_2+\theta_3B_3(\widetilde\lambda).
\end{align*}
The first two bounds contribute at most
$(\theta_1+\theta_2)O(\rho\gamma\delta)$ in error, and the grid replacement contributes at most $\theta_3O(\rho\delta)$. Since the weights lie in $[0,1]$, Equation~\eqref{eq-convex-combination-value} therefore gives
\postdisplaypenalty=10000
\[
\frac{\Cost(P,S,\phi_S^*)}{\OPT}
\le1+\left(3-\frac1\gamma\right)\rho
+O(\rho\gamma\delta).
\]
\end{proof}

Theorem~\ref{thm-main-fractional} now follows.
\begin{proof}
Lemma~\ref{lem-cost-transfer-pi} gives
$\OPT_\pi\le(\Gamma+1)(C_T-C_\nu)$, and Lemma~\ref{lem-fair-routing} gives
\[
C_T\le\left(1+\frac{\delta^2}{9}\right)\OPT
\le(1+\delta)\OPT,
\]
where $0<\delta\le1$ implies $\delta^2/9\le\delta$. Thus both hypotheses of Lemma~\ref{lem-convex-combination} hold with $\gamma=\Gamma>3$. Applying that lemma and substituting $\delta=\epsilon/(\rho\Gamma)$,
\begin{align*}
\Cost(P,S,\phi_S^*)
&\le\left(1+\left(3-\frac1\Gamma\right)\rho
+O(\rho\Gamma\delta)\right)\OPT\\
&=\left(1+\left(3-\frac1\Gamma\right)\rho
+O\!\left(\rho\Gamma\frac{\epsilon}{\rho\Gamma}\right)\right)\OPT\\
&=\left(1+\left(3-\frac1\Gamma\right)\rho+O(\epsilon)\right)\OPT,
\end{align*}
which is Equation~\eqref{eq-main-guarantee}. Every weighted $k$-means call returns at most $k$ centers, and the fair assignment LP is solved exactly on each of them, so $|S|\le k$ and $\phi_S^*$ is exactly fair. Section~\ref{sec-running-time} proves the running-time bound, and Appendix~\ref{app-dimension-reduction} gives the high-probability guarantee for the projection and lifting. For the special case, if $\rho=1+O(\epsilon)$ then
\[
1+\left(3-\frac1\Gamma\right)\rho+O(\epsilon)
=1+\left(3-\frac1\Gamma\right)\bigl(1+O(\epsilon)\bigr)+O(\epsilon)
=4-\frac1\Gamma+O(\epsilon)
\approx3.8427+O(\epsilon).
\]
\end{proof}

\subsection{Running Time}
\label{sec-running-time}

We first analyze Algorithm~\ref{alg-fractional-fair-kmeans} directly in the input dimension $d$. Arithmetic operations outside the LP and weighted $k$-means solvers are counted explicitly; the solver times include their dependence on the encoding length of the input. Since $\delta=\epsilon/(\rho\Gamma)$, the approximate centroid construction of \citet{matouvsek2000approximate} gives, for fixed $d$,
\[
|T|=O\!\left(n\left(\frac{\rho\Gamma}{\epsilon}\right)^d
\log\frac{\rho\Gamma}{\epsilon}\right)
\]
candidates in time
\[
O\!\left(n\log n+n\left(\frac{\rho\Gamma}{\epsilon}\right)^d
\log\frac{\rho\Gamma}{\epsilon}\right).
\]
The fair center-opening LP~\eqref{eq-fair-center-opening-lp} has $(n+1)|T|$ variables and $O((n+m)|T|)$ constraints. Thus its variable and constraint counts are, respectively,
\[
O\!\left(n^2\left(\frac{\rho\Gamma}{\epsilon}\right)^d
\log\frac{\rho\Gamma}{\epsilon}\right),
\qquad
O\!\left(n(n+m)\left(\frac{\rho\Gamma}{\epsilon}\right)^d
\log\frac{\rho\Gamma}{\epsilon}\right).
\]
Once the LP is solved, one pass through its assignment matrix computes all weights $w(t)$ and centroids $\pi(t)$, and a second pass computes all weighted means $\nu_p$. These passes take
\[
O(n|T|d)
=O\!\left(n^2d\left(\frac{\rho\Gamma}{\epsilon}\right)^d
\log\frac{\rho\Gamma}{\epsilon}\right)
\]
time, even when the matrix is dense. This term also bounds the candidate construction time apart from its $O(n\log n)$ term.

The algorithm makes one weighted $k$-means call on the at most $|T|$ positive-weight centroids and $O(1/\delta)=O(\rho/\epsilon)$ calls on shifted instances of $n$ points each. Constructing one shifted instance takes $O(nd)$ time. For every returned center set, the fair assignment LP has at most $nk$ variables and $O((n+m)k)$ constraints; the point-to-center distances and assignment cost are computed in $O(nkd)$ time. Therefore all shifted-instance construction, weighted $k$-means calls, and candidate evaluation take
\[
O\!\left(\frac{\rho\Gamma}{\epsilon}
\bigl(nkd+\mathcal{T}_{\mathrm{means}}\bigr)\right)
\]
time in addition to the total LP time $\mathcal{T}_{\mathrm{LP}}$. Summing these bounds proves the direct running-time bound in Theorem~\ref{thm-main-fractional}.

For unrestricted dimension, Appendix~\ref{app-dimension-reduction} uses a projection with distortion $\delta$ and target dimension
\[
d'=O\!\left(\left(\frac{\rho\Gamma}{\epsilon}\right)^2\log n\right),
\]
using the identity map if the original dimension is smaller. Writing the dimension dependence of the centroid construction as $(1/\delta)^{O(d')}$, substitution gives
\[
\left(\frac1\delta\right)^{O(d')}
=\exp\!\left(O\!\left(
\left(\frac{\rho\Gamma}{\epsilon}\right)^2
\log n\log\frac{\rho\Gamma}{\epsilon}\right)\right)
=n^{O((\rho/\epsilon)^2\log(2\rho/\epsilon))},
\]
where the last expression uses that $\Gamma$ is a constant. The candidate set size, its construction time, and the computations of centroids, weighted means, and shifted points are all bounded by this expression. Point-to-center distances are formed as part of the assignment LPs, whose objective values provide the candidate costs; this work is included in $\mathcal{T}_{\mathrm{LP}}$. Applying the projection and lifting the output centers take
\[
O\!\left(nd\left(\frac{\rho\Gamma}{\epsilon}\right)^2\log n+nkd\right)
\]
time. Adding $\mathcal{T}_{\mathrm{LP}}$ and $O((\rho\Gamma/\epsilon)\mathcal{T}_{\mathrm{means}})$ proves the second running-time bound in Theorem~\ref{thm-main-fractional}. The projected opening LP has $n^{O((\rho/\epsilon)^2\log(2\rho/\epsilon))}$ variables and at most an additional factor of $n+m$ in its constraint count; the fair assignment LPs still have at most $nk$ variables and $O((n+m)k)$ constraints. The largest weighted $k$-means instance has $n^{O((\rho/\epsilon)^2\log(2\rho/\epsilon))}$ points. Hence all subproblems, and the total running time, are polynomial in the input size for fixed $\epsilon$ and $\rho$ under the assumed polynomial-time subroutine.

\section{\texorpdfstring{$k$}{k}-Sparse Wasserstein Barycenters}
\label{sec-wasserstein-barycenter}

Let $P^{(1)},\ldots,P^{(m)}$ be discrete probability distributions in $\mathbb{R}^d$. We keep the distribution labels on their atoms, so their union $P=\biguplus_{i=1}^mP^{(i)}$ is an indexed set of $n$ points. Write $w(q)>0$ for the mass of atom $q$, with $\sum_{q\in P^{(i)}}w(q)=1$ for each $i$. For a probability distribution $B$, let $b_s$ be its mass at $s\in\supp(B)$. A transport $F_i$ from $P^{(i)}$ to $B$ specifies the nonnegative mass $F_i(q,s)$ sent from $q$ to $s$, subject to
\begin{equation}
\sum_{s\in\supp(B)}F_i(q,s)=w(q),
\qquad
\sum_{q\in P^{(i)}}F_i(q,s)=b_s.
\label{eq-transport-constraints}
\end{equation}
The squared Wasserstein distance $\mathcal W^2(P^{(i)},B)$ is the minimum of $\sum_{q,s}F_i(q,s)\lVert q-s\rVert^2$ over such transports. The $k$-sparse Wasserstein barycenter problem minimizes
\begin{equation}
\Cost_{\mathrm{WB}}(B)=\frac1m\sum_{i=1}^m\mathcal W^2(P^{(i)},B)
\label{eq-wb-cost}
\end{equation}
over probability distributions $B$ with $|\supp(B)|\le k$. Denote the optimal value by $\OPT_{\mathrm{WB}}$.

Use the atom masses $w(q)$ as the input weights for fair $k$-means, with each distribution forming one group and $\alpha_i=\beta_i=1/m$. Every group has total weight one. Equal group proportions at a center therefore require equal transported mass from all distributions.

\begin{lemma}[Transport--assignment correspondence]
\label{lem-wb-correspondence}
For every nonempty finite center set $S$, the minimum barycenter cost over supports contained in $S$ equals $1/m$ times the minimum weighted exactly fair fractional assignment cost on $S$.
\end{lemma}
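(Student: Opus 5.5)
The plan is to prove the equality by two cost-preserving maps between feasible objects: one from barycenters supported in $S$ (with transports) to weighted exactly fair fractional assignments on $S$, and one in the reverse direction. Throughout, the weighted setting has input weights $w(q)$, groups $P^{(i)}$ that partition $P$, and $\alpha_i=\beta_i=1/m$. In this setting fairness reads $w^{(i)}(s)=w(s)/m$ for every $s\in S$ and $i\in[m]$, where $w(s)=\sum_{q\in P}w(q)\phi_S(q,s)$.

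\emph{Transport to assignment.} Let $B$ have $\supp(B)\subseteq S$, with optimal transports $F_1,\ldots,F_m$. Set $b_s=0$ for $s\in S\setminus\supp(B)$ and extend each $F_i$ by zeros. For $q\in P^{(i)}$ define $\phi_S(q,s)=F_i(q,s)/w(q)$, which is well defined because $w(q)>0$.
\begin{itemize}
\item The first constraint in Equation~\eqref{eq-transport-constraints} gives $\sum_s\phi_S(q,s)=1$.
\item The second gives $w^{(i)}(s)=\sum_{q\in P^{(i)}}F_i(q,s)=b_s$ for every $i$.
\item Since the groups partition $P$, $w(s)=\sum_i w^{(i)}(s)=m b_s$, so $w^{(i)}(s)=w(s)/m$ and the assignment is exactly fair.
\end{itemize}
Its weighted cost is $\sum_{q,s}w(q)\phi_S(q,s)\lVert q-s\rVert^2=\sum_i\sum_{q,s}F_i(q,s)\lVert q-s\rVert^2=m\,\Cost_{\mathrm{WB}}(B)$. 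Hence the minimum fair assignment cost is at most $m$ times the minimum barycenter cost over supports in $S$.

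\emph{Assignment to transport.} Let $\phi_S$ be weighted exactly fair on $S$. Define $F_i(q,s)=w(q)\phi_S(q,s)$ for $q\in P^{(i)}$, and $b_s=w(s)/m$.
\begin{itemize}
\item The masses sum to one: $\sum_s b_s=\frac1m\sum_{q}w(q)=\frac1m\cdot m=1$, because each group has total weight one and the groups partition $P$.
\item The row constraint holds because the rows of $\phi_S$ sum to one.
\item The column constraint $\sum_{q\in P^{(i)}}F_i(q,s)=w^{(i)}(s)=w(s)/m=b_s$ is exactly the fairness condition.
\end{itemize}
So $B$, restricted to the $s$ with $b_s>0$, is a probability distribution supported in $S$. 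The $F_i$ are feasible transports, so $\mathcal W^2(P^{(i)},B)\le\sum F_i(q,s)\lVert q-s\rVert^2$. Summing over $i$ gives $m\,\Cost_{\mathrm{WB}}(B)\le$ the assignment cost. This is the reverse inequality.

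Both minima are attained, since each is a feasible LP over a compact polytope once $S$ is fixed, and the fair LP is feasible whenever a barycenter exists. Combining the two inequalities gives the equality.

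The main obstacle is bookkeeping rather than substance. Two points need care. First, the fairness bounds must be read with the weighted definitions of Equation~\eqref{eq-cluster-group-weight}, and $w(s)=\sum_i w^{(i)}(s)$ relies on the groups being disjoint; this holds because the labeled atoms form a disjoint union. Second, centers with zero mass must be handled so that $\supp(B)\subseteq S$ matches assignments that leave some centers unused.
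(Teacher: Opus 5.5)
Your proof is correct and follows the paper's argument. It uses the same correspondence $F_i(q,s)=w(q)\phi_S(q,s)$ with $b_s=w(s)/m$, verifies the constraints and the cost identity in both directions, and extends transports by zero to unused centers; your explicit use of the disjointness of the labeled groups to get $w(s)=m\,b_s$ and $\sum_s b_s=1$ is exactly what the paper's step $b_s=\frac1m\sum_{q\in P}w(q)\phi_S(q,s)$ relies on implicitly.
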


\begin{proof}
Extend transports by zero to unused points of $S$. Transports and assignment fractions then correspond by
\begin{equation}
F_i(q,s)=w(q)\phi_S(q,s),\qquad q\in P^{(i)}.
\label{eq-transport-to-assignment}
\end{equation}
The transport constraints imply $\sum_s\phi_S(q,s)=1$ and
\[
\sum_{q\in P^{(i)}}w(q)\phi_S(q,s)
=b_s=\frac1m\sum_{q\in P}w(q)\phi_S(q,s),
\]
so the assignment is exactly fair. Conversely, a fair assignment defines transports by Equation~\eqref{eq-transport-to-assignment} and masses $b_s=\frac1m\sum_qw(q)\phi_S(q,s)$. These masses sum to one, and fairness gives the required column sum $b_s$ for every $i$. Zero masses are omitted from the support. Finally,
\begin{equation}
\Cost(P,S,\phi_S)
=\sum_{i=1}^m\sum_{q\in P^{(i)}}\sum_{s\in S}
F_i(q,s)\lVert q-s\rVert^2.
\label{eq-wb-cost-correspondence}
\end{equation}
Minimizing both sides over the corresponding feasible assignments and transports proves the claim, since Equation~\eqref{eq-wb-cost} averages the transport costs over $m$ distributions.
\end{proof}

\paragraph{A candidate set for weighted centroids.}
Definition~\eqref{eq-approximate-centroid-set} covers centroids of subsets, whereas a transport induces centroids with fractional weights. We approximate each weighted centroid by the mean of a sample drawn according to its weights. Duplicating the atoms allows samples with repeated atoms to be treated as subsets, to which Definition~\eqref{eq-approximate-centroid-set} applies. Let
\begin{equation}
P^{[r]}=\{q^{(\ell)}:q\in P,\ \ell\in[r]\}
\label{eq-indexed-copies}
\end{equation}
contain $r$ indexed copies of every atom, all placed at $q$. For $0<\delta\le1$, set
\begin{equation}
r=\left\lceil\frac4\delta\right\rceil,
\qquad
T=\CS_{\sqrt\delta}\bigl(P^{[r]}\bigr).
\label{eq-multiset-candidate-set}
\end{equation}

\begin{lemma}[Fractional-weight centroid approximation]
\label{lem-fractional-weight-centroid}
For any weights $\mu(q)\ge0$ that are not all zero, some $t\in T$ satisfies
\begin{equation}
\sum_{q\in P}\mu(q)\lVert q-t\rVert^2
\le(1+\delta)\min_{c\in\mathbb R^d}
\sum_{q\in P}\mu(q)\lVert q-c\rVert^2.
\label{eq-fractional-weight-centroid}
\end{equation}
\end{lemma}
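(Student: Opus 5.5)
The plan is to reduce the fractional weights to an honest subset of $P^{[r]}$ by sampling, apply Definition~\eqref{eq-approximate-centroid-set} to that subset, and conclude with the probabilistic method and the centroid identity.

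\textbf{Setup.} Normalize the weights: let $M=\sum_{q}\mu(q)>0$ and let $D$ be the distribution on $P$ with $\Pr[q]=\mu(q)/M$. Let $c^\star=\frac1M\sum_q\mu(q)\,q$ be the weighted centroid and $V=\frac1M\sum_q\mu(q)\lVert q-c^\star\rVert^2$ the weighted variance. By the centroid identity, the minimum on the right of Equation~\eqref{eq-fractional-weight-centroid} equals $MV$, attained at $c=c^\star$. For any $t$,
\[
\sum_q\mu(q)\lVert q-t\rVert^2=M\bigl(V+\lVert c^\star-t\rVert^2\bigr).
\]
So it suffices to find $t\in T$ with $\lVert t-c^\star\rVert^2\le\delta V$.

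\textbf{Sampling step.} Draw $x_1,\dots,x_r$ i.i.d.\ from $D$, with $r=\lceil 4/\delta\rceil$. An atom appears at most $r$ times, so we can realize the sample as a genuine $r$-element subset $Q\subseteq P^{[r]}$ by using distinct indexed copies $q^{(\ell)}$. Let $X=\Cen(Q)=\frac1r\sum_i x_i$. Since the draws are independent and unbiased, $\mathbb E\lVert X-c^\star\rVert^2=V/r\le\delta V/4$.

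\textbf{Applying the centroid set.} Definition~\eqref{eq-approximate-centroid-set} with parameter $\sqrt\delta$ gives $t(Q)\in T$ with
\[
\lVert t(Q)-X\rVert^2\le\frac{\delta}{9}\cdot\frac1r\sum_i\lVert x_i-X\rVert^2\le\frac\delta9\cdot\frac1r\sum_i\lVert x_i-c^\star\rVert^2.
\]
The second inequality holds because the centroid minimizes the sum of squared distances. The last average has expectation exactly $V$.

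\textbf{Combining.} By the squared triangle inequality,
\[
\lVert t(Q)-c^\star\rVert^2\le2\lVert t(Q)-X\rVert^2+2\lVert X-c^\star\rVert^2.
\]
Taking expectations, the right side is at most $\bigl(\tfrac{2\delta}{9}+\tfrac{\delta}{2}\bigr)V<\delta V$. Hence some realization $Q$ yields $t=t(Q)\in T$ with $\lVert t-c^\star\rVert^2\le\delta V$, and the displayed identity finishes the proof. The degenerate case $V=0$ is covered by the same argument, since then the bound forces $t=c^\star$.

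\textbf{Main obstacle.} The delicate point is that the approximate-centroid guarantee is relative to the sample's own spread, not to $V$. The bound must therefore be kept as one random variable whose expectation is controlled; a single good event for the mean and a separate good event for the spread need not coincide. Handling both terms inside one expectation avoids needing a union bound.
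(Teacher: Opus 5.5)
Your proposal is correct and follows the paper's proof essentially step for step: sample $r=\lceil 4/\delta\rceil$ atoms from the normalized weights, realize the sample as a subset of $P^{[r]}$ via distinct indexed copies, apply the $\sqrt\delta$-centroid-set guarantee, bound $\mathbb{E}\lVert t-c^\star\rVert^2$ inside a single expectation using the squared triangle inequality, pick a good realization, and finish with the centroid identity. The only differences are cosmetic. You bound the sample spread by the average squared distance to $c^\star$ instead of computing $\mathbb{E}\widehat\sigma^2=(1-1/r)\sigma^2$, and you absorb the case $V=0$ through the non-strict inequality instead of the paper's separate singleton argument.
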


The sampling proof appears in Appendix~\ref{app-multiset-candidate}. We next use this coverage to construct a feasible opening-LP solution directly from optimal transports.

\begin{lemma}[Candidate-set cost for barycenters]
\label{lem-wb-candidate-cost}
The weighted instance $(P,w)$ with $\alpha_i=\beta_i=1/m$ admits a feasible solution of LP~\eqref{eq-fair-center-opening-lp} of cost at most $(1+\delta)m\OPT_{\mathrm{WB}}$.
\end{lemma}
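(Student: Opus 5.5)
The plan is to imitate the proof of Lemma~\ref{lem-candidate-set-cost}, with optimal transports in place of the optimal integral fair clusters and Lemma~\ref{lem-fractional-weight-centroid} in place of the approximate centroid set property.

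First, fix an optimal $k$-sparse barycenter $B^\star$ with support $\{s_1,\ldots,s_{k'}\}$, $k'\le k$, masses $b_j>0$, and optimal transports $F_i^\star$ from $P^{(i)}$ to $B^\star$. Set $\phi(q,s_j)=F_i^\star(q,s_j)/w(q)$ for $q\in P^{(i)}$. By Lemma~\ref{lem-wb-correspondence}, this is an exactly fair assignment on $\{s_1,\ldots,s_{k'}\}$ with weighted cost
\[
\sum_{q}w(q)\sum_j\phi(q,s_j)\lVert q-s_j\rVert^2=m\OPT_{\mathrm{WB}}.
\]
For each $j$, define the weights $\mu_j(q)=w(q)\phi(q,s_j)$. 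These are not all zero, since the column sum is $mb_j>0$. The weighted cost of column $j$ is then $\sum_q\mu_j(q)\lVert q-s_j\rVert^2$.

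Second, apply Lemma~\ref{lem-fractional-weight-centroid} to each $\mu_j$ to obtain a $t_j\in T$ with
\[
\sum_q\mu_j(q)\lVert q-t_j\rVert^2\le(1+\delta)\min_c\sum_q\mu_j(q)\lVert q-c\rVert^2\le(1+\delta)\sum_q\mu_j(q)\lVert q-s_j\rVert^2,
\]
where the last step takes $c=s_j$. Next, define $\widehat\phi_T(q,t)=\sum_{j:t_j=t}\phi(q,s_j)$, set $\widehat y_t=1$ if $t=t_j$ for some $j$, and set $\widehat y_t=0$ otherwise.

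Third, check that $(\widehat\phi_T,\widehat y)$ is feasible for LP~\eqref{eq-fair-center-opening-lp}, with the sums weighted by $w(q)$ as in the weighted instance. The checks are as follows.
\begin{itemize}
\item The row sums equal $1$, because the rows of $\phi$ sum to one.
\item Each entry satisfies $\widehat\phi_T(q,t)\le1=\widehat y_t$ when $t$ is chosen, and $\widehat\phi_T(q,t)=0$ otherwise. Here $\phi(q,s_j)\le1$ holds because the total over $j$ is $1$.
\item At most $k'\le k$ candidates are opened, so $\sum_t\widehat y_t\le k$.
\item Fairness with $\alpha_i=\beta_i=1/m$ holds at each $t$: every column $j$ receives exactly $b_j$ from each group, so a merged column receives $\sum_{j:t_j=t}b_j$ from each group, which is $1/m$ of its total weight.
\end{itemize}
Summing the per-column bound over $j$ then gives
\[
\Cost\le\sum_j\sum_q\mu_j(q)\lVert q-t_j\rVert^2\le(1+\delta)m\OPT_{\mathrm{WB}}.
\]
The first inequality is an equality when the $t_j$ are distinct. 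When several $j$ share a candidate, it still holds, because the cost is linear in the merged column.

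The main obstacle is mild bookkeeping: the LP constraints must be read with the input weights. The row-sum constraint stays unweighted per point, while the fairness sums become $\sum_q w(q)\widehat\phi_T(q,t)$. The merging of columns that pick the same candidate must also be handled, as in Lemma~\ref{lem-candidate-set-cost}. No new geometric argument is needed beyond Lemma~\ref{lem-fractional-weight-centroid}.
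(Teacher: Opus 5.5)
Your proposal is correct and follows essentially the same route as the paper: redirect the mass transported to each optimal support point to the candidate $t_j$ supplied by Lemma~\ref{lem-fractional-weight-centroid}, open the selected candidates, and check weighted feasibility and cost as you do. The only differences are cosmetic: the paper first moves each support point to the weighted centroid of its transported mass, whereas you bound the minimum over $c$ by its value at $s_j$; and your first cost inequality is in fact always an equality, even when several $j$ share a candidate.
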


\begin{proof}
Fix an optimal barycenter with support $\{\widetilde s_1,\ldots,\widetilde s_h\}$, $h\le k$, and optimal transports $F_i^*$. Each $\widetilde s_j$ may be taken to be the weighted centroid of the mass transported to it: replacing a support point by that centroid preserves feasibility and does not increase the cost. Applying Lemma~\ref{lem-fractional-weight-centroid} with weights $F_i^*(q,\widetilde s_j)$ for $q\in P^{(i)}$ gives $t_j\in T$ such that
\begin{equation}
\sum_{i=1}^m\sum_{q\in P^{(i)}}
F_i^*(q,\widetilde s_j)\lVert q-t_j\rVert^2
\le(1+\delta)\sum_{i=1}^m\sum_{q\in P^{(i)}}
F_i^*(q,\widetilde s_j)\lVert q-\widetilde s_j\rVert^2.
\label{eq-wb-support-cover}
\end{equation}
Redirect the mass sent to $\widetilde s_j$ to $t_j$ and open each selected candidate:
\[
\widehat\phi_T(q,t)=\frac1{w(q)}\sum_{j:t_j=t}F_i^*(q,\widetilde s_j)
\quad(q\in P^{(i)}),
\qquad
\widehat y_t=\mathbf1\{t=t_j\text{ for some }j\}.
\]
The transport row sums give $\sum_t\widehat\phi_T(q,t)=1$; also $0\le\widehat\phi_T(q,t)\le\widehat y_t\le1$ and $\sum_t\widehat y_t\le h\le k$. For every group $i$,
\[
\sum_{q\in P^{(i)}}w(q)\widehat\phi_T(q,t)
=\sum_{j:t_j=t}\sum_{q\in P^{(i)}}F_i^*(q,\widetilde s_j)
=\sum_{j:t_j=t}b_{\widetilde s_j}.
\]
This quantity is independent of $i$, so each group contributes exactly $1/m$ of the total weight at $t$. Thus $(\widehat\phi_T,\widehat y)$ is feasible. Its cost satisfies
\begin{align*}
\Cost(P,T,\widehat\phi_T)
&=\sum_{j=1}^h\sum_{i=1}^m\sum_{q\in P^{(i)}}
F_i^*(q,\widetilde s_j)\lVert q-t_j\rVert^2\\
&\le(1+\delta)\sum_{j=1}^h\sum_{i=1}^m\sum_{q\in P^{(i)}}
F_i^*(q,\widetilde s_j)\lVert q-\widetilde s_j\rVert^2\\
&=(1+\delta)m\OPT_{\mathrm{WB}},
\end{align*}
where the inequality sums Equation~\eqref{eq-wb-support-cover} over $j$.
\end{proof}

\paragraph{Algorithm and guarantee.}
Run Algorithm~\ref{alg-fractional-fair-kmeans} with atom weights $w(q)$, group bounds $\alpha_i=\beta_i=1/m$, and the candidate set of Equation~\eqref{eq-multiset-candidate-set}. Costs, group weights, and centroid computations use these input weights; the assignment rows still sum to one. The auxiliary points $\nu_q$ and shifted points $q_\lambda$ are defined as before and keep the weight $w(q)$. On the returned center set $S$, solve the fixed-support transport LP of Appendix~\ref{app-fixed-support-wb} to obtain the barycenter.

\begin{theorem}[Wasserstein barycenter]
\label{thm-wasserstein-barycenter}
Suppose that a polynomial-time $\rho$-approximate algorithm for weighted Euclidean $k$-means is available. For every $\epsilon>0$, there is a randomized algorithm, with running time polynomial in the encoding length of the labeled atoms and their rational weights, that with high probability returns a distribution $B$ with $|\supp(B)|\le k$ and
\begin{equation}
\Cost_{\mathrm{WB}}(B)
\le\left(1+\left(3-\frac1\Gamma\right)\rho
+O(\epsilon)\right)\OPT_{\mathrm{WB}}.
\label{eq-wb-main-guarantee}
\end{equation}
\end{theorem}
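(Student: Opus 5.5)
The plan is to rerun the analysis of Section~\ref{sec-analysis} on the weighted instance $(P,w)$ with $\alpha_i=\beta_i=1/m$, replacing $\OPT$ everywhere by $m\OPT_{\mathrm{WB}}$. Lemma~\ref{lem-convex-combination} then gives a fair fractional assignment on $S$ of cost at most $(1+(3-1/\Gamma)\rho+O(\epsilon))\,m\OPT_{\mathrm{WB}}$. Lemma~\ref{lem-wb-correspondence} converts this into a barycenter supported in $S$ whose cost is $1/m$ of that value. The fixed-support transport LP attains the minimum over supports contained in $S$, so its output is at least as good.

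The key steps are as follows. First, set $\delta=\epsilon/(\rho\Gamma)$, so that the candidate set of Equation~\eqref{eq-multiset-candidate-set} is used. Lemma~\ref{lem-wb-candidate-cost} then shows the optimum of LP~\eqref{eq-fair-center-opening-lp} is at most $(1+\delta)m\OPT_{\mathrm{WB}}$. Second, Lemma~\ref{lem-fair-routing} goes through verbatim with weights, because its proof only sums columns and applies the centroid identity. This gives $C_T\le(1+\delta)m\OPT_{\mathrm{WB}}$ and $\Cost(P,S,\phi_S^*)\le C_T+\Cost(\pi(T),S)$. Third, Lemma~\ref{lem-ahmadian-mean} already covers rational weights. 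The same holds for Lemma~\ref{lem-cost-transfer-pi}, where each sum over $p$ now carries $w(p)$. Together these give $\OPT_\pi\le(\Gamma+1)(C_T-C_\nu)$.

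The comparison solution needs more care. In place of the integral $\phi_{\Sopt}$, I would use the fractional assignment $\widetilde\phi(q,\widetilde s_j)=F_i^*(q,\widetilde s_j)/w(q)$ obtained from optimal transports, with each $\widetilde s_j$ the weighted centroid of its column. This assignment is fair, its cost is $m\OPT_{\mathrm{WB}}$, and the centroid identity holds column by column. I would then redefine $C_e=\sum_{q,j}w(q)\widetilde\phi(q,\widetilde s_j)(\nu_q-q)^\top(q-\widetilde s_j)$.

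The second bound of Lemma~\ref{lem-spi-bounds} follows by averaging over the columns of $\phi_T^*$ and $\widetilde\phi$ jointly. The cross term collapses because $\sum_t\phi_T^*(q,t)(\pi(t)-q)=\nu_q-q$ does not depend on $j$.

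For Lemma~\ref{lem-slambda-bound}, the shifted cost is computed the same way as $m\OPT_{\mathrm{WB}}+\lambda^2C_\nu-2\lambda C_e$. The routing then sends each column $\widetilde s_j$ wholesale to $\mathcal N(\widetilde s_j,S_\lambda)$. The bound on $\dist^2(\widetilde s_j,S_\lambda)$ becomes a weighted average over the mass of column $j$ instead of over $|\widetilde C_j|$ points, and fairness is preserved because columns are summed.

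The Cauchy--Schwarz step also extends. It now reads $|C_e|\le\sqrt{C_\nu\cdot m\OPT_{\mathrm{WB}}}$, using $\sum_j\widetilde\phi(q,\widetilde s_j)=1$ and Jensen. With this, Lemma~\ref{lem-convex-combination} applies with $\OPT$ replaced by $m\OPT_{\mathrm{WB}}$.

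The main obstacle is making sure that every use of integrality of $\phi_{\Sopt}$ in Section~\ref{sec-analysis} survives the fractional optimal transport. The cross-term expansion and the Cauchy--Schwarz bound on $C_e$ are the delicate points. I would handle them by working with the product weights $w(q)\widetilde\phi(q,\widetilde s_j)\phi_T^*(q,t)$ throughout.

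The remaining points are routine. The high-probability guarantee comes from Appendix~\ref{app-dimension-reduction} and Appendix~\ref{app-multiset-candidate}. Polynomial running time follows because $|P^{[r]}|=O(n/\delta)$ and the analysis of Section~\ref{sec-running-time} carries over.
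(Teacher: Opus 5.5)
Your proposal is correct and matches the paper's proof: the paper handles the fractional optimal transport by splitting each atom $q\in P^{(i)}$ into co-located pieces of mass $F_i^*(q,\widetilde s_j)$ that share the row $\phi_T^*(q,\cdot)$, which is exactly your product-weight bookkeeping with $w(q)\widetilde\phi(q,\widetilde s_j)$, and then applies Lemma~\ref{lem-convex-combination} with comparison cost $m\OPT_{\mathrm{WB}}$ followed by Lemma~\ref{lem-wb-correspondence}. One small slip: Appendix~\ref{app-multiset-candidate} is a probabilistic-method existence argument rather than a randomized step of the algorithm, so the high-probability claim rests only on the random projection of Appendix~\ref{app-dimension-reduction}.
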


\begin{proof}
Lemmas~\ref{lem-wb-candidate-cost} and~\ref{lem-fair-routing}, together with the weighted opening-LP bound of Lemma~\ref{lem-ahmadian-mean} and the transfer argument of Lemma~\ref{lem-cost-transfer-pi}, give
\[
C_T\le(1+\delta)m\OPT_{\mathrm{WB}},
\qquad
\OPT_\pi\le(\Gamma+1)(C_T-C_\nu).
\]

To justify the remaining comparison bounds, fix optimal transports as in Lemma~\ref{lem-wb-candidate-cost}. View each atom $q\in P^{(i)}$ as co-located pieces of masses $F_i^*(q,\widetilde s_j)$, all with assignment row $\phi_T^*(q,\cdot)$. Their masses sum to $w(q)$, so splitting preserves the weighted costs, centroids, and means. The pieces sent to each $\widetilde s_j$ form an exactly fair weighted cluster with that centroid, and their total comparison cost is $m\OPT_{\mathrm{WB}}$. Thus the centroid and routing arguments of Lemmas~\ref{lem-spi-bounds} and~\ref{lem-slambda-bound}, and the Cauchy--Schwarz estimate in Lemma~\ref{lem-convex-combination}, apply to these pieces.

Applying Lemma~\ref{lem-convex-combination} with comparison cost $m\OPT_{\mathrm{WB}}$ and $\gamma=\Gamma$, followed by Lemma~\ref{lem-wb-correspondence}, yields
\begin{align*}
\Cost_{\mathrm{WB}}(B)
&\le\frac1m\Cost(P,S,\phi_S^*)\\
&\le\frac1m\left(1+\left(3-\frac1\Gamma\right)\rho
+O(\rho\Gamma\delta)\right)m\OPT_{\mathrm{WB}}\\
&=\left(1+\left(3-\frac1\Gamma\right)\rho
+O(\epsilon)\right)\OPT_{\mathrm{WB}},
\end{align*}
where $\delta=\epsilon/(\rho\Gamma)$. The support has at most $k$ points; Appendix~\ref{app-dimension-reduction} supplies the running-time and high-probability details.
\end{proof}

\section{Conclusion}
\label{sec-conclusion}

We gave a sub-$4$ approximation for fair $k$-means with fractional assignments. Carrying the fairness constraints and a fractional budget of $k$ centers on a single relaxation is what makes the integrality gap of the center-opening LP available, and the gap is what bounds the cost of merging the relaxation's clusters. The weighted centroid set and the shifted instances supply the candidate centers, and the three resulting cost bounds combine into the guarantee $1+(3-1/\Gamma)\rho+O(\epsilon)$, which is $4-1/\Gamma+O(\epsilon)$ with a PTAS as the subroutine. The refined bound of \citet{grandoni2022refined} lowers the constant a little further; cost-nonincreasing roundings deliver integral assignments with bounded additive violation; and a multiset candidate set carries the guarantees over to $k$-sparse Wasserstein barycenters.

Since $\Gamma$ enters only through the merge step, any improvement to the integrality gap of the Euclidean center-opening LP transfers directly to the ratio proved here, as Corollary~\ref{cor-grandoni} illustrates. The transfer has its limits: Lemma~\ref{lem-convex-combination} applies for $\gamma>3$, and over that range $1+(3-1/\gamma)\rho$ stays above $1+\frac83\rho$, so reaching smaller ratios requires a different treatment of the merge step rather than a better gap.

\section*{Acknowledgments}

Part of this work was carried out with an AI agent of our own design. The authors derived the relaxation, the two bounds for $S_\pi$ in Lemma~\ref{lem-spi-bounds}, and the convex-combination argument, which together give an approximation ratio of $1+4\Gamma/(\Gamma+1)+O(\epsilon)\approx4.456+O(\epsilon)$ when $\rho=1+O(\epsilon)$. The third bound, the one obtained from the shifted instances $P_\lambda$ in Lemma~\ref{lem-slambda-bound}, was found with the agent's participation; adding it to the combination is what lowers the ratio to $4-1/\Gamma+O(\epsilon)\approx3.8427+O(\epsilon)$. All statements and proofs in this paper were verified by the authors, who take responsibility for their correctness.

\appendix

\section{Dimension Reduction and Running Time}
\label{app-dimension-reduction}

An approximate centroid set has size exponential in the dimension, so for high-dimensional inputs we first apply a projection. Set the distortion to $\eta=\delta=\epsilon/(\rho\Gamma)<1/2$. By the Johnson--Lindenstrauss lemma~\citep{johnson1984extensions}, a random linear map $A:\mathbb{R}^d\to\mathbb{R}^{d'}$ with $d'=O(\delta^{-2}\log n)$ satisfies, with high probability,
\begin{equation}
(1-\eta)\lVert p-q\rVert^2
\le \lVert A p-A q\rVert^2
\le(1+\eta)\lVert p-q\rVert^2
\qquad\forall p,q\in P.
\label{eq-jl-pairwise}
\end{equation}
A guarantee about pairs of points suffices to control every fractional clustering, because the centroid identity turns the optimal centroid cost of an assignment matrix into an expression in pairwise distances alone:
\begin{equation}
\sum_{s:w(s)>0}\min_{c\in\mathbb R^d}
\sum_{p\in P}\phi_S(p,s)\lVert p-c\rVert^2
=
\sum_{s:w(s)>0}\frac1{2w(s)}
\sum_{p,q\in P}\phi_S(p,s)\phi_S(q,s)
\lVert p-q\rVert^2.
\label{eq-fractional-pairwise-cost}
\end{equation}
So \eqref{eq-jl-pairwise} preserves the optimal centroid cost of every assignment matrix to within a factor $1\pm\eta$, and because the projection alters neither group membership nor the fairness of an assignment, it preserves $\OPT$ to within the same factor.

Run Algorithm~\ref{alg-fractional-fair-kmeans} on $A(P)$ and retain its assignment matrix $\phi_S$. Lift each center of positive weight to the original-space centroid $\frac1{w(s)}\sum_p\phi_S(p,s)p$ of its cluster, which by linearity $A$ maps to the centroid of the projected cluster; discard zero-weight centers. Let $\widehat S$ denote the lifted centers and $\OPT_A$ the optimal integral exactly fair cost on $A(P)$. Equation~\eqref{eq-fractional-pairwise-cost} bounds the lifted cost by $1/(1-\eta)$ times the projected cost, while applying the upper distortion bound to an optimal original-space assignment gives $\OPT_A\le(1+\eta)\OPT$. Hence
\begin{align*}
\Cost(P,\widehat S,\phi_S)
&\le\frac{1}{1-\eta}\Cost(A(P),S,\phi_S)\\
&\le\frac{1}{1-\eta}
\left(1+\left(3-\frac1\Gamma\right)\rho+O(\rho\Gamma\delta)\right)\OPT_A\\
&\le\frac{1+\eta}{1-\eta}
\left(1+\left(3-\frac1\Gamma\right)\rho+O(\rho\Gamma\delta)\right)\OPT.
\end{align*}
Since $\eta=\delta<1/2$,
\[
\frac{1+\eta}{1-\eta}=1+\frac{2\delta}{1-\delta}\le1+4\delta.
\]
The main factor is $O(\rho)$, so multiplying by $1+4\delta$ contributes $O(\rho\delta)$, in addition to the existing $O(\rho\Gamma\delta)$ error. With $\Gamma$ constant and $\rho\Gamma\delta=\epsilon$, the lifted bound is therefore
\[
\Cost(P,\widehat S,\phi_S)
\le\left(1+\left(3-\frac1\Gamma\right)\rho+O(\epsilon)\right)\OPT.
\] The fairness constraints stay exact, since the lifting leaves the assignment matrix untouched. If the original dimension is already below the target dimension, we use the identity map and no distortion occurs.

In dimension $d'=O(\log n)$ and at fixed accuracy, the approximate centroid set of \citet{matouvsek2000approximate} has polynomial size and construction time. Section~\ref{sec-running-time} counts one opening LP, $O(1/\delta)$ fair assignment LPs, and $O(1/\delta)$ weighted $k$-means calls, all of polynomial size. Together with a finite-precision JL map, these give the running-time and high-probability claims of Theorem~\ref{thm-main-fractional}. For Theorem~\ref{thm-wasserstein-barycenter}, apply the same map to the $n$ labeled atoms and use the weighted form of \eqref{eq-fractional-pairwise-cost}; feasibility of transports is a property of the masses and not of the coordinates, and the $r=O(1/\delta)$ copies in \eqref{eq-indexed-copies} leave the candidate set polynomial. In fixed dimension the local-search PTAS of \citet{cohen2019local} provides $\rho=1+O(\epsilon)$, with rational weights handled by the usual multiplicity argument.

\section{Strengthening via the LP-Relative Bound of Grandoni et al.}
\label{app-grandoni}

The constant $\Gamma$ enters the analysis only through Lemma~\ref{lem-ahmadian-mean}, which bounds $\OPT_\nu$ by $\Gamma$ times the value $C_T-C_\nu$ of the feasible opening-LP solution constructed there. This appendix replaces $\Gamma$ in that lemma by the refined constant
\[
\Gamma_0=
\left(
1+\sqrt{
\frac{2+\sqrt[3]{3-2\sqrt2}+\sqrt[3]{3+2\sqrt2}}{2}}
\right)^2
=6.129023734395501\ldots
\]
of \citet{grandoni2022refined}. The algorithm and the rest of the analysis are unchanged.

The result of \citet{grandoni2022refined} takes the following form. For every $\tau>0$ there is a sample-size bound depending only on $\tau$ such that the statement below holds for every finite unweighted demand multiset $D\subset\mathbb{R}^d$. Let $\mathcal F_\tau(D)$ be the set of means of all nonempty with-replacement samples of $D$ of at most that size. Then any feasible solution of LP~\eqref{eq-standard-opening-lp} with demands $D$ and facilities $\mathcal F_\tau(D)$ can be rounded to at most $k$ facilities of $\mathcal F_\tau(D)$ of cost at most $\Gamma_0+\tau$ times the LP objective.

The one obstacle is that this result restricts the facilities to $\mathcal F_\tau(D)$, a domain determined by the demands, and the facilities $\pi(t)$ of Lemma~\ref{lem-ahmadian-mean} are not means of samples of the demands $\nu_p$. The fix is to make them demands too: adding the points $\pi(t)$ to the demand set puts each of them in the facility domain as the mean of a sample of size one, and giving them negligible weight relative to the $\nu_p$ keeps their own service cost from affecting the bound. Two preparations are needed before the argument, one to make the facility locations distinct and one to serve the added demands.

\paragraph{Coalescing coincident centroids.}
The facility domain is a set of locations, while two candidates $t\ne t'$ with positive weight may satisfy $\pi(t)=\pi(t')$. Partition the positive-weight candidates into classes $\mathcal E$ of equal location, and replace the columns and opening variables of each class by
\[
\phi_T^*(p,\mathcal E)=\sum_{t\in\mathcal E}\phi_T^*(p,t),
\qquad
y_{\mathcal E}=\min\Bigl\{1,\sum_{t\in\mathcal E}y_t\Bigr\}.
\]
Coalescing preserves everything the argument uses. The row sums remain one; $0\le\phi_T^*(p,\mathcal E)\le y_{\mathcal E}$ follows from the linking constraints of the members; $\sum_{\mathcal E}y_{\mathcal E}\le\sum_t y_t\le k$; and the fairness inequalities of a class are the sums of those of its members. Since all members share a location, the weighted centroid of a coalesced column is that location, so $C_T$, $C_\nu$, and every $\nu_p$ are unaffected. For the remainder of this appendix, $T$ indexes the coalesced columns and $\phi_T^*$, $y_t$, $w(t)$, and $\pi(t)$ refer to their assignment, openings, weights, and centroids. In particular the points $\pi(t)$, $t\in T$, are now distinct, with $w(t)>0$ throughout.

\begin{lemma}[Refined opening-LP bound]
\label{lem-grandoni-mean}
$\OPT_\nu\le\Gamma_0\bigl(C_T-C_\nu\bigr)$, and consequently $\OPT_\pi\le(\Gamma_0+1)\bigl(C_T-C_\nu\bigr)$.
\end{lemma}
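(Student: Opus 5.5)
The plan is to build an unweighted demand multiset $D$ that contains many copies of each $\nu_p$ and a single copy of each $\pi(t)$. Then we apply the rounding result of \citet{grandoni2022refined} with facility domain $\mathcal F_\tau(D)$, and let both the copy count and $\tau$ tend to their limits.

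First, the proof must handle rational weights. As in Lemma~\ref{lem-ahmadian-mean}, scale the input weights so that every demand becomes a unit multiplicity. Fix a large integer $N$ and let $D_N$ consist of $N$ co-located copies of each $\nu_p$, together with one copy of each distinct point $\pi(t)$, $t\in T$. After the coalescing step these points are distinct, with $w(t)>0$. Each $\pi(t)$ is the mean of a sample of size one from $D_N$, so $\pi(t)\in\mathcal F_\tau(D_N)$ for every $\tau$.

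Next we build a feasible solution of LP~\eqref{eq-standard-opening-lp} with demands $D_N$ and facilities $\mathcal F_\tau(D_N)$. Every copy of $\nu_p$ takes the row $\phi_T^*(p,\cdot)$, placed on the facilities $\pi(t)$. The single demand at $\pi(t)$ is assigned fully to the facility $\pi(t)$. This violates $\phi\le y$ unless $y_t=1$, and that is the main obstacle. To get around it, I would first try the following adjustment. Open each $\pi(t)$ to the extent $y'_t=\min\{1,y_t+\eta\}$ for a tiny $\eta$. Route the $\pi(t)$-demand fractionally: a fraction $y_t$ goes to itself, and the remainder goes to a small, fixed set of other open facilities. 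The total opening is then at most $k+|T|\eta$, which exceeds the budget.

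A cleaner route avoids the extra opening entirely. Serve the demand at $\pi(t)$ with the same column-derived fractions that some $p$ with $\phi_T^*(p,t)>0$ uses. Concretely, give the demand $\pi(t)$ the row $\phi_T^*(p_t,\cdot)$ of a fixed such point $p_t$. This is feasible, since $\phi_T^*(p_t,t')\le y_{t'}$. Its cost is $\sum_{t'}\phi_T^*(p_t,t')\lVert\pi(t)-\pi(t')\rVert^2$, which is some fixed constant $K$ independent of $N$. The LP value is therefore at most $N(C_T-C_\nu)+K$. The rounding of \citet{grandoni2022refined} then gives at most $k$ facilities $S_N$ with
\[
N\,\Cost(P_\nu,S_N)\le\Cost(D_N,S_N)\le(\Gamma_0+\tau)\bigl(N(C_T-C_\nu)+K\bigr).
\]
Dividing by $N$ and using $\OPT_\nu\le\Cost(P_\nu,S_N)$, we let $N\to\infty$ and then $\tau\to0$. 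This yields $\OPT_\nu\le\Gamma_0(C_T-C_\nu)$. Note that the sample-size bound depends only on $\tau$ and not on $D$, so the argument is uniform in $N$.

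Finally, the consequence follows exactly as in Lemma~\ref{lem-cost-transfer-pi}. That lemma gives $\OPT_\pi\le C_T-C_\nu+\OPT_\nu$, and substituting the new bound gives $(\Gamma_0+1)(C_T-C_\nu)$. Coalescing leaves $C_T$, $C_\nu$, $\nu_p$, and $\Cost(\pi(T),S)$ unchanged, because coincident centroids with merged weights contribute identically. So the bound holds for the original quantities.
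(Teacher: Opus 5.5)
Your proof is correct and matches the paper's argument. The paper replicates each $\nu_p$ $J$ times, adds one copy of each coalesced $\pi(t)$ so that it lies in $\mathcal F_\tau$ as a size-one sample mean, serves that added demand by a feasible fractional row whose cost does not depend on the replication count, and then divides and lets $J\to\infty$ and $\tau\downarrow0$. The one difference is that the paper serves $\pi(t)$ by the column-weighted average of rows, $\frac{1}{w(t)}\sum_p\phi_T^*(p,t)\phi_T^*(p,t')$, while you use the single row $\phi_T^*(p_t,\cdot)$. Both satisfy the linking constraints, so either works; in fact any row would do, and your condition $\phi_T^*(p_t,t)>0$ is not needed.
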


\begin{proof}
Fix $\tau>0$. The added demand at $\pi(t)$ needs an assignment of its own, and the relaxation already suggests one: serve it in the same proportions in which the cluster of $t$ is served. Define a matrix $\phi_{\pi(T)}$ with rows and columns indexed by $T$ by
\[
\phi_{\pi(T)}(t,t')=\frac1{w(t)}
\sum_{p\in P}\phi_T^*(p,t)\phi_T^*(p,t')
\qquad(t,t'\in T),
\]
so that $\phi_{\pi(T)}(t,t')$ is the fraction of the cluster of $t$ that is also assigned to $t'$. Since the rows of $\phi_T^*$ sum to one and $\phi_T^*(p,t')\le y_{t'}$,
\begin{equation}
\sum_{t'\in T}\phi_{\pi(T)}(t,t')
=\frac1{w(t)}\sum_{p\in P}\phi_T^*(p,t)=1,
\qquad
0\le\phi_{\pi(T)}(t,t')
\le\frac{y_{t'}}{w(t)}\sum_{p\in P}\phi_T^*(p,t)
=y_{t'}.
\label{eq-grandoni-auxiliary-linking}
\end{equation}

The relative weight of the added demands is controlled by replication. Fix an integer $J\ge1$ and let $D_J$ be the demand multiset holding $J$ co-located copies of every $\nu_p$ and a single copy of every $\pi(t)$; each $\pi(t)$ is then the mean of a sample of size one from $D_J$ and so lies in $\mathcal F_\tau(D_J)$. Build a solution of LP~\eqref{eq-standard-opening-lp} on $(D_J,\mathcal F_\tau(D_J))$ as follows: open $\pi(t)$ to the extent $y_t$ and every other facility to the extent $0$; assign each copy of $\nu_p$ to $\pi(t')$ by $\phi_T^*(p,t')$; and assign the demand $\pi(t)$ to $\pi(t')$ by $\phi_{\pi(T)}(t,t')$. The rows of the first kind are feasible by \eqref{eq-induced-opening-feasibility} and those of the second by \eqref{eq-grandoni-auxiliary-linking}, while the budget $\sum_t y_t\le k$ comes from LP~\eqref{eq-fair-center-opening-lp}. Write
\[
C_{\mathrm{aux}}=
\sum_{t,t'\in T}\phi_{\pi(T)}(t,t')
\lVert\pi(t)-\pi(t')\rVert^2
\]
for the cost of serving the added demands, a quantity independent of $J$. By \eqref{eq-weighted-mean-identity} the objective of this solution is $J(C_T-C_\nu)+C_{\mathrm{aux}}$, so the LP-relative result yields a set $S$ of at most $k$ facilities whose cost on $D_J$ is at most $(\Gamma_0+\tau)\bigl(J(C_T-C_\nu)+C_{\mathrm{aux}}\bigr)$. Discarding the nonnegative cost of serving the demands $\pi(t)$ and dividing by $J$,
\[
\OPT_\nu
\le\Cost(P_\nu,S)
\le(\Gamma_0+\tau)
\left(C_T-C_\nu+\frac{C_{\mathrm{aux}}}J\right).
\]
For fixed $\tau>0$, the preceding bound holds for every $J$, while $C_{\mathrm{aux}}$ is independent of $J$. Taking $J\to\infty$ gives
\[
\OPT_\nu\le(\Gamma_0+\tau)(C_T-C_\nu).
\]
Since this holds for every $\tau>0$, taking $\tau\downarrow0$ yields
$\OPT_\nu\le\Gamma_0(C_T-C_\nu)$. Lemma~\ref{lem-cost-transfer-pi} then gives
\postdisplaypenalty=10000
\[
\OPT_\pi\le C_T-C_\nu+\OPT_\nu
\le C_T-C_\nu+\Gamma_0(C_T-C_\nu)
=(\Gamma_0+1)(C_T-C_\nu).
\]
\end{proof}

Both $D_J$ and $\mathcal F_\tau(D_J)$ are devices of this proof alone, and the algorithm never constructs them.

\begin{corollary}[Strengthened guarantees]
\label{cor-grandoni}
Under the assumptions of Theorem~\ref{thm-main-fractional}, Algorithm~\ref{alg-fractional-fair-kmeans} returns at most $k$ centers and an exactly fair fractional assignment with
\[
\Cost(P,S,\phi_S^*)
\le
\left(1+\left(3-\frac1{\Gamma_0}\right)\rho
+O(\epsilon)\right)\OPT,
\]
which equals $\bigl(4-1/\Gamma_0+O(\epsilon)\bigr)\OPT=\bigl(3.836841878\ldots+O(\epsilon)\bigr)\OPT$ when $\rho=1+O(\epsilon)$. The integral assignments of Corollaries~\ref{cor-overlapping-rounding} and~\ref{cor-disjoint-rounding} and the barycenter of Theorem~\ref{thm-wasserstein-barycenter} satisfy the same bound with $\OPT$ replaced by $\OPT_{\mathrm{WB}}$ in the latter case.
\end{corollary}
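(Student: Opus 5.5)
The corollary should follow by rerunning the proof of Theorem~\ref{thm-main-fractional} with $\gamma=\Gamma_0$ in place of $\Gamma$ when invoking Lemma~\ref{lem-convex-combination}. The algorithm is unchanged, so the output still has at most $k$ centers and an exactly fair fractional assignment, obtained by solving the fair assignment LP exactly.

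\textbf{Hypotheses of Lemma~\ref{lem-convex-combination}.} The second hypothesis, $C_T\le(1+\delta)\OPT$, comes from Lemma~\ref{lem-fair-routing} exactly as before. The first hypothesis, $\OPT_\pi\le(\Gamma_0+1)(C_T-C_\nu)$, is Lemma~\ref{lem-grandoni-mean}.

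One point needs care. Lemma~\ref{lem-grandoni-mean} is stated for the coalesced columns, and the algorithm never coalesces. I will note that coalescing changes none of the quantities the lemma references:
\begin{itemize}
\item $C_T$, $C_\nu$ and every $\nu_p$ are unchanged, as observed in the appendix.
\item $\OPT_\pi$ is unchanged, since merging co-located weighted points leaves every $\Cost(\pi(T),S)$ the same.
\end{itemize}
Hence the bound transfers to the uncoalesced instance that the algorithm actually uses.

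\textbf{Applying the lemma.} Since $\Gamma_0\approx6.129>3$, Lemma~\ref{lem-convex-combination} applies and gives ratio $1+(3-1/\Gamma_0)\rho+O(\rho\Gamma_0\delta)$. With $\delta=\epsilon/(\rho\Gamma)$ the error term is $O(\rho\Gamma_0\epsilon/(\rho\Gamma))=O(\epsilon)$, because both constants are absolute. The case $\rho=1+O(\epsilon)$ gives $4-1/\Gamma_0+O(\epsilon)$ by direct arithmetic.

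\textbf{High-dimensional case.} The dimension-reduction argument of Appendix~\ref{app-dimension-reduction} only multiplies the guarantee by $(1+\eta)/(1-\eta)\le1+4\delta$. It therefore carries over verbatim, adding another $O(\epsilon)$.

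\textbf{Integral assignments.} Corollaries~\ref{cor-overlapping-rounding} and~\ref{cor-disjoint-rounding} keep $S$ fixed and never increase the cost. They inherit the new bound directly.

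\textbf{Barycenters.} For Theorem~\ref{thm-wasserstein-barycenter}, the proof feeds $\OPT_\pi\le(\Gamma+1)(C_T-C_\nu)$ into Lemma~\ref{lem-convex-combination} with comparison cost $m\OPT_{\mathrm{WB}}$. I will replace this input by Lemma~\ref{lem-grandoni-mean} applied in weighted form, checking each step:
\begin{itemize}
\item Coalescing is weight-agnostic.
\item The auxiliary matrix $\phi_{\pi(T)}$ uses the weighted $w(t)$, and the linking bound \eqref{eq-grandoni-auxiliary-linking} still holds with input weights included.
\item Rational weights are handled by the replication used in Lemma~\ref{lem-ahmadian-mean}. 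Multiply the weights by a common integer $L$ and by $J$, so that every $\nu_p$ appears with integral multiplicity and Grandoni's unweighted-multiset statement applies.
\end{itemize}
Dividing by $L$ and $m$ then gives the stated bound with $\OPT_{\mathrm{WB}}$.

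\textbf{Main obstacle.} The hardest step is the weighted barycenter case. I must verify that the replication with two scalings $L$ and $J$ still leaves each $\pi(t)$ in $\mathcal F_\tau(D)$ as a size-one sample. This holds because a single copy of $\pi(t)$ is kept as a demand, so the limit $J\to\infty$ proceeds as in the appendix.
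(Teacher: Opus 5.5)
Your proposal is correct and follows essentially the same route as the paper. You feed Lemma~\ref{lem-grandoni-mean} and Lemma~\ref{lem-fair-routing} into Lemma~\ref{lem-convex-combination} with $\gamma=\Gamma_0>3$, absorb $\rho\Gamma_0\delta=(\Gamma_0/\Gamma)\epsilon$ into $O(\epsilon)$, let the cost-nonincreasing roundings inherit the bound, and handle barycenters through the rational-weight replication of Lemma~\ref{lem-ahmadian-mean}. Your extra checks are sound, and the paper leaves them implicit: coalescing leaves $C_T$, $C_\nu$, and $\OPT_\pi$ unchanged, so the bound transfers to the algorithm's uncoalesced columns, and the dimension-reduction factor $(1+\eta)/(1-\eta)$ carries over unchanged.
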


\begin{proof}
Lemma~\ref{lem-grandoni-mean} gives $\OPT_\pi\le(\Gamma_0+1)(C_T-C_\nu)$, and Lemma~\ref{lem-fair-routing} still gives $C_T\le(1+\delta)\OPT$. Lemma~\ref{lem-convex-combination} with $\gamma=\Gamma_0>3$ therefore yields
\begin{align*}
\Cost(P,S,\phi_S^*)
&\le\left(1+\left(3-\frac1{\Gamma_0}\right)\rho
+O(\rho\Gamma_0\delta)\right)\OPT\\
&=\left(1+\left(3-\frac1{\Gamma_0}\right)\rho
+O(\epsilon)\right)\OPT,
\end{align*}
since the algorithm's choice $\delta=\epsilon/(\rho\Gamma)$ gives
$\rho\Gamma_0\delta=(\Gamma_0/\Gamma)\epsilon=O(\epsilon)$.
For $\rho=1+O(\epsilon)$, the factor becomes
$1+(3-1/\Gamma_0)(1+O(\epsilon))+O(\epsilon)=4-1/\Gamma_0+O(\epsilon)$.
Each rounding of Corollaries~\ref{cor-overlapping-rounding} and~\ref{cor-disjoint-rounding} returns an integral assignment $\phi'_S$ satisfying
$\Cost(P,S,\phi'_S)\le\Cost(P,S,\phi_S^*)$, so it inherits this cost bound.
For barycenters, the replication argument in Lemma~\ref{lem-ahmadian-mean} applies the refined opening-LP bound to rational weights. The weighted assignment cost satisfies the same bound with $m\OPT_{\mathrm{WB}}$ in place of $\OPT$. Lemma~\ref{lem-wb-correspondence} divides that cost by $m$ when passing to the barycenter, giving the stated guarantee.
\end{proof}

\section{Proof of Lemma~\ref{lem-fractional-weight-centroid}}
\label{app-multiset-candidate}

Both sides of Equation~\eqref{eq-fractional-weight-centroid} scale linearly with the weights, so normalize them to $\sum_q\mu(q)=1$. Write
\[
c=\sum_{q\in P}\mu(q)q,
\qquad
\sigma^2=\sum_{q\in P}\mu(q)\lVert q-c\rVert^2
\]
for their weighted centroid and variance. If $\sigma^2=0$, every positive-weight atom is at $c$, and Definition~\eqref{eq-approximate-centroid-set} applied to a singleton gives $c\in T$, proving the claim. Assume $\sigma^2>0$.

Draw $r$ independent atoms, choosing $q$ with probability $\mu(q)$, and let $Q=\{q_1,\ldots,q_r\}$ be the resulting unit-weight indexed sample. Its sample variance is
\[
\widehat\sigma^2=\frac1r\sum_{\ell=1}^r\lVert q_\ell-\Cen(Q)\rVert^2.
\]
Every realization, including one that repeats an atom, can be represented by $r$ distinct indexed copies in $P^{[r]}$. Thus the definition of $T=\CS_{\sqrt\delta}(P^{[r]})$ gives a candidate $t\in T$ with
\begin{equation}
\lVert t-\Cen(Q)\rVert^2\le\frac\delta9\widehat\sigma^2.
\label{eq-multiset-sample-cover}
\end{equation}
Since the draws are independent and $\mathbb E[q_\ell]=c$, the sample mean and variance satisfy
\begin{align*}
\mathbb E\lVert\Cen(Q)-c\rVert^2
&=\frac1{r^2}\sum_{\ell=1}^r\mathbb E\lVert q_\ell-c\rVert^2
=\frac{\sigma^2}{r},\\
\mathbb E\widehat\sigma^2
&=\frac1r\sum_{\ell=1}^r\mathbb E\lVert q_\ell-c\rVert^2
-\mathbb E\lVert\Cen(Q)-c\rVert^2
=\left(1-\frac1r\right)\sigma^2.
\end{align*}
Choose a candidate satisfying Equation~\eqref{eq-multiset-sample-cover} for each realization. The squared triangle inequality gives
\begin{align*}
\mathbb E\lVert t-c\rVert^2
&\le2\mathbb E\lVert t-\Cen(Q)\rVert^2
+2\mathbb E\lVert\Cen(Q)-c\rVert^2\\
&\le\frac{2\delta}{9}\mathbb E\widehat\sigma^2+\frac{2\sigma^2}{r}
\le\left(\frac{2\delta}{9}+\frac2r\right)\sigma^2
<\delta\sigma^2,
\end{align*}
where $r\ge4/\delta$ implies $2/r\le\delta/2$ and $2\delta/9<\delta/2$. Some realization therefore has $\lVert t-c\rVert^2<\delta\sigma^2$. Applying the centroid identity to that candidate,
\begin{align*}
\sum_{q\in P}\mu(q)\lVert q-t\rVert^2
&=\sigma^2+\lVert c-t\rVert^2\\
&\le(1+\delta)\sigma^2
=(1+\delta)\min_z\sum_{q\in P}\mu(q)\lVert q-z\rVert^2.
\end{align*}
Rescaling the weights proves Equation~\eqref{eq-fractional-weight-centroid} in its original form.

\section{Fixed-Support Wasserstein Barycenter}
\label{app-fixed-support-wb}

Given a support set $S$ with $|S|\le k$, the barycenter mass vector $b=(b_s)_{s\in S}$ and the transports $F_i$ from every atom $q\in P^{(i)}$ to every $s\in S$ are found by the linear program
\begin{equation}
\begin{aligned}
\min_{F_1,\ldots,F_m,b}\quad
&\frac1m\sum_{i=1}^m\sum_{q\in P^{(i)}}
\sum_{s\in S}F_i(q,s)\lVert q-s\rVert^2\\
\text{s.t.}\quad
&\sum_{s\in S}F_i(q,s)=w(q),
&&\forall i\in[m],\ q\in P^{(i)},\\
&\sum_{q\in P^{(i)}}F_i(q,s)=b_s,
&&\forall i\in[m],\ s\in S,\\
&\sum_{s\in S}b_s=1,\\
&F_i(q,s)\ge0,\ b_s\ge0,
&&\forall i\in[m],\ q\in P^{(i)},\ s\in S.
\end{aligned}
\label{eq-fixed-support-wb-lp}
\end{equation}
These are the transport constraints \eqref{eq-transport-constraints}, with a single mass vector $b$ shared by all $m$ input distributions; that sharing is what makes $b$ the mass vector of one common barycenter, so the optimal value is $\min\{\Cost_{\mathrm{WB}}(B):\supp(B)\subseteq S\}$. The LP has $O(|S|\,|P|)$ variables and is solvable in polynomial time; related fixed-support formulations and algorithms appear in \citet{claici2018stochastic,cuturi2014fast,cuturi2016smoothed,lin2020fixed}. Discarding support points of zero mass leaves a barycenter on at most $|S|\le k$ points.

\bibliography{iclr2025_conference,journal_additions}

\begin{thebibliography}{48}
\providecommand{\natexlab}[1]{#1}
\providecommand{\url}[1]{\texttt{#1}}
\expandafter\ifx\csname urlstyle\endcsname\relax
  \providecommand{\doi}[1]{doi: #1}\else
  \providecommand{\doi}{doi: \begingroup \urlstyle{rm}\Url}\fi

\bibitem[Agueh and Carlier(2011)]{agueh2011barycenters}
Martial Agueh and Guillaume Carlier.
\newblock Barycenters in the wasserstein space.
\newblock \emph{SIAM Journal on Mathematical Analysis}, 43\penalty0
  (2):\penalty0 904--924, 2011.

\bibitem[Ahmadian et~al.(2020)Ahmadian, Norouzi-Fard, Svensson, and
  Ward]{ahmadian2020better}
Sara Ahmadian, Ashkan Norouzi-Fard, Ola Svensson, and Justin Ward.
\newblock Better guarantees for {$k$}-means and euclidean {$k$}-median by
  primal-dual algorithms.
\newblock \emph{SIAM Journal on Computing}, 49\penalty0 (4):\penalty0
  FOCS17--97--FOCS17--156, 2020.
\newblock \doi{10.1137/18M1171321}.

\bibitem[Alelyani et~al.(2018)Alelyani, Tang, and Liu]{alelyani2018feature}
Salem Alelyani, Jiliang Tang, and Huan Liu.
\newblock Feature selection for clustering: A review.
\newblock \emph{Data Clustering}, pages 29--60, 2018.

\bibitem[Altschuler and Boix-Adsera(2021)]{altschuler2021wasserstein}
Jason~M Altschuler and Enric Boix-Adsera.
\newblock Wasserstein barycenters can be computed in polynomial time in fixed
  dimension.
\newblock \emph{Journal of Machine Learning Research}, 22\penalty0
  (44):\penalty0 1--19, 2021.

\bibitem[Anand et~al.(2026)Anand, Charikar, Cohen-Addad, Gao, Grandoni, Lee,
  Sharma, and van Wijland]{anand2026spectral}
Aditya Anand, Moses Charikar, Vincent Cohen-Addad, Ruiquan Gao, Fabrizio
  Grandoni, Euiwoong Lee, Amatya Sharma, and Ernest van Wijland.
\newblock Spectral dual fitting for {$k$}-means.
\newblock \emph{arXiv preprint arXiv:2607.14654}, 2026.
\newblock URL \url{https://arxiv.org/abs/2607.14654}.

\bibitem[Anderes et~al.(2016)Anderes, Borgwardt, and
  Miller]{anderes2016discrete}
Ethan Anderes, Steffen Borgwardt, and Jacob Miller.
\newblock Discrete wasserstein barycenters: Optimal transport for discrete
  data.
\newblock \emph{Mathematical Methods of Operations Research}, 84:\penalty0
  389--409, 2016.

\bibitem[Backhoff-Veraguas et~al.(2022)Backhoff-Veraguas, Fontbona, Rios, and
  Tobar]{backhoff2022bayesian}
Julio Backhoff-Veraguas, Joaquin Fontbona, Gonzalo Rios, and Felipe Tobar.
\newblock Bayesian learning with wasserstein barycenters.
\newblock \emph{ESAIM: Probability and Statistics}, 26:\penalty0 436--472,
  2022.

\bibitem[Bandyapadhyay et~al.(2024)Bandyapadhyay, Fomin, and
  Simonov]{DBLP:journals/jcss/BandyapadhyayFS24}
Sayan Bandyapadhyay, Fedor~V. Fomin, and Kirill Simonov.
\newblock On coresets for fair clustering in metric and euclidean spaces and
  their applications.
\newblock \emph{J. Comput. Syst. Sci.}, 142:\penalty0 103506, 2024.

\bibitem[Bera et~al.(2019)Bera, Chakrabarty, Flores, and
  Negahbani]{bera2019fair}
Suman Bera, Deeparnab Chakrabarty, Nicolas Flores, and Maryam Negahbani.
\newblock Fair algorithms for clustering.
\newblock \emph{Advances in Neural Information Processing Systems}, 32, 2019.

\bibitem[Bhattacharya et~al.(2018)Bhattacharya, Jaiswal, and
  Kumar]{bhattacharya2018faster}
Anup Bhattacharya, Ragesh Jaiswal, and Amit Kumar.
\newblock Faster algorithms for the constrained k-means problem.
\newblock \emph{Theory of computing systems}, 62:\penalty0 93--115, 2018.

\bibitem[B{\"o}hm et~al.(2021)B{\"o}hm, Fazzone, Leonardi, Menghini, and
  Schwiegelshohn]{bohm2021algorithms}
Matteo B{\"o}hm, Adriano Fazzone, Stefano Leonardi, Cristina Menghini, and
  Chris Schwiegelshohn.
\newblock Algorithms for fair k-clustering with multiple protected attributes.
\newblock \emph{Operations Research Letters}, 49\penalty0 (5):\penalty0
  787--789, 2021.

\bibitem[Bonneel et~al.(2015)Bonneel, Rabin, Peyr{\'e}, and
  Pfister]{bonneel2015sliced}
Nicolas Bonneel, Julien Rabin, Gabriel Peyr{\'e}, and Hanspeter Pfister.
\newblock Sliced and radon wasserstein barycenters of measures.
\newblock \emph{Journal of Mathematical Imaging and Vision}, 51:\penalty0
  22--45, 2015.

\bibitem[Borgwardt and Patterson(2021)]{borgwardt2021computational}
Steffen Borgwardt and Stephan Patterson.
\newblock On the computational complexity of finding a sparse wasserstein
  barycenter.
\newblock \emph{Journal of Combinatorial Optimization}, 41\penalty0
  (3):\penalty0 736--761, 2021.

\bibitem[Braverman et~al.(2022)Braverman, Cohen-Addad, Jiang, Krauthgamer,
  Schwiegelshohn, Toftrup, and Wu]{braverman2022power}
Vladimir Braverman, Vincent Cohen-Addad, H-C~Shaofeng Jiang, Robert
  Krauthgamer, Chris Schwiegelshohn, Mads~Bech Toftrup, and Xuan Wu.
\newblock The power of uniform sampling for coresets.
\newblock In \emph{2022 IEEE 63rd Annual Symposium on Foundations of Computer
  Science (FOCS)}, pages 462--473. IEEE, 2022.

\bibitem[Chang et~al.(2017)Chang, Wang, Meng, Xiang, and Pan]{chang2017deep}
Jianlong Chang, Lingfeng Wang, Gaofeng Meng, Shiming Xiang, and Chunhong Pan.
\newblock Deep adaptive image clustering.
\newblock In \emph{Proceedings of the IEEE international conference on computer
  vision}, pages 5879--5887, 2017.

\bibitem[Charikar et~al.(2026)Charikar, Cohen-Addad, Gao, Grandoni, Lee, and
  van Wijland]{charikar2026nonmonotone}
Moses Charikar, Vincent Cohen-Addad, Ruiquan Gao, Fabrizio Grandoni, Euiwoong
  Lee, and Ernest van Wijland.
\newblock A {$(4+\epsilon)$}-approximation for euclidean {$k$}-means via
  non-monotone dual-fitting.
\newblock In \emph{Proceedings of the 58th Annual ACM Symposium on Theory of
  Computing}, pages 1881--1891. ACM, 2026.
\newblock \doi{10.1145/3798129.3800894}.

\bibitem[Chen et~al.(2019)Chen, Fain, Lyu, and
  Munagala]{chen2019proportionally}
Xingyu Chen, Brandon Fain, Liang Lyu, and Kamesh Munagala.
\newblock Proportionally fair clustering.
\newblock In \emph{International Conference on Machine Learning}, pages
  1032--1041. PMLR, 2019.

\bibitem[Chierichetti et~al.(2017)Chierichetti, Kumar, Lattanzi, and
  Vassilvitskii]{chierichetti2017fair}
Flavio Chierichetti, Ravi Kumar, Silvio Lattanzi, and Sergei Vassilvitskii.
\newblock Fair clustering through fairlets.
\newblock \emph{Advances in neural information processing systems}, 30, 2017.

\bibitem[Claici et~al.(2018)Claici, Chien, and Solomon]{claici2018stochastic}
Sebastian Claici, Edward Chien, and Justin Solomon.
\newblock Stochastic wasserstein barycenters.
\newblock In \emph{International Conference on Machine Learning}, pages
  999--1008. PMLR, 2018.

\bibitem[Cohen-Addad et~al.(2019)Cohen-Addad, Klein, and
  Mathieu]{cohen2019local}
Vincent Cohen-Addad, Philip~N Klein, and Claire Mathieu.
\newblock Local search yields approximation schemes for k-means and k-median in
  euclidean and minor-free metrics.
\newblock \emph{SIAM Journal on Computing}, 48\penalty0 (2):\penalty0 644--667,
  2019.

\bibitem[Cohen-Addad et~al.(2025)Cohen-Addad, Jiang, Yang, Zhang, and
  Zhou]{cohenaddad2025sliding}
Vincent Cohen-Addad, Shaofeng H.-C. Jiang, Qiaoyuan Yang, Yubo Zhang, and
  Samson Zhou.
\newblock Fair clustering in the sliding window model.
\newblock In \emph{The Thirteenth International Conference on Learning
  Representations}, 2025.
\newblock URL \url{https://openreview.net/forum?id=VGQugiuCQs}.

\bibitem[Coleman and Andrews(1979)]{coleman1979image}
Guy~Barrett Coleman and Harry~C Andrews.
\newblock Image segmentation by clustering.
\newblock \emph{Proceedings of the IEEE}, 67\penalty0 (5):\penalty0 773--785,
  1979.

\bibitem[Cuturi and Doucet(2014)]{cuturi2014fast}
Marco Cuturi and Arnaud Doucet.
\newblock Fast computation of wasserstein barycenters.
\newblock In \emph{International conference on machine learning}, pages
  685--693. PMLR, 2014.

\bibitem[Cuturi and Peyr{\'e}(2016)]{cuturi2016smoothed}
Marco Cuturi and Gabriel Peyr{\'e}.
\newblock A smoothed dual approach for variational wasserstein problems.
\newblock \emph{SIAM Journal on Imaging Sciences}, 9\penalty0 (1):\penalty0
  320--343, 2016.

\bibitem[Ding and Xu(2020)]{ding2020unified}
Hu~Ding and Jinhui Xu.
\newblock A unified framework for clustering constrained data without locality
  property.
\newblock \emph{Algorithmica}, 82\penalty0 (4):\penalty0 808--852, 2020.

\bibitem[Duppala et~al.(2025)Duppala, Luque, Dickerson, and
  Esmaeili]{duppala2025robust}
Sharmila Duppala, Juan Luque, John~P. Dickerson, and Seyed~A. Esmaeili.
\newblock Robust fair clustering with group membership uncertainty sets.
\newblock In \emph{Proceedings of the 28th International Conference on
  Artificial Intelligence and Statistics}, volume 258 of \emph{Proceedings of
  Machine Learning Research}, pages 2350--2358. PMLR, 2025.
\newblock URL \url{https://proceedings.mlr.press/v258/duppala25a.html}.

\bibitem[Friggstad et~al.(2019)Friggstad, Rezapour, and
  Salavatipour]{friggstad2019local}
Zachary Friggstad, Mohsen Rezapour, and Mohammad~R Salavatipour.
\newblock Local search yields a ptas for k-means in doubling metrics.
\newblock \emph{SIAM Journal on Computing}, 48\penalty0 (2):\penalty0 452--480,
  2019.

\bibitem[Funk et~al.(2026)Funk, Hennes, Hillebrand, and
  Sturm]{funk2026constant}
Nicole Funk, Annika Hennes, Johanna Hillebrand, and Sarah Sturm.
\newblock Constant-factor approximations for doubly constrained fair
  {$k$}-center, {$k$}-median and {$k$}-means.
\newblock In \emph{20th Scandinavian Symposium on Algorithm Theory (SWAT
  2026)}, volume 370 of \emph{Leibniz International Proceedings in
  Informatics}, pages 19:1--19:19. Schloss Dagstuhl -- Leibniz-Zentrum f{\"u}r
  Informatik, 2026.
\newblock \doi{10.4230/LIPIcs.SWAT.2026.19}.

\bibitem[Ghadiri et~al.(2021)Ghadiri, Samadi, and Vempala]{ghadiri2021socially}
Mehrdad Ghadiri, Samira Samadi, and Santosh Vempala.
\newblock Socially fair k-means clustering.
\newblock In \emph{Proceedings of the 2021 ACM Conference on Fairness,
  Accountability, and Transparency}, pages 438--448, 2021.

\bibitem[Glassman et~al.(2014)Glassman, Singh, and Miller]{glassman2014feature}
Elena~L Glassman, Rishabh Singh, and Robert~C Miller.
\newblock Feature engineering for clustering student solutions.
\newblock In \emph{Proceedings of the first ACM conference on Learning@ scale
  conference}, pages 171--172, 2014.

\bibitem[Grandoni et~al.(2022)Grandoni, Ostrovsky, Rabani, Schulman, and
  Venkat]{grandoni2022refined}
Fabrizio Grandoni, Rafail Ostrovsky, Yuval Rabani, Leonard~J. Schulman, and
  Rakesh Venkat.
\newblock A refined approximation for euclidean {$k$}-means.
\newblock \emph{Information Processing Letters}, 176:\penalty0 106251, 2022.
\newblock \doi{10.1016/j.ipl.2022.106251}.

\bibitem[Huang et~al.(2019)Huang, Jiang, and Vishnoi]{huang2019coresets}
Lingxiao Huang, Shaofeng Jiang, and Nisheeth Vishnoi.
\newblock Coresets for clustering with fairness constraints.
\newblock \emph{Advances in neural information processing systems}, 32, 2019.

\bibitem[Jain(2010)]{DBLP:journals/prl/Jain10}
Anil~K. Jain.
\newblock Data clustering: 50 years beyond k-means.
\newblock \emph{Pattern Recognit. Lett.}, 31\penalty0 (8):\penalty0 651--666,
  2010.

\bibitem[Johnson and Lindenstrauss(1984)]{johnson1984extensions}
William~B. Johnson and Joram Lindenstrauss.
\newblock Extensions of lipschitz mappings into a hilbert space.
\newblock In \emph{Conference in Modern Analysis and Probability}, volume~26 of
  \emph{Contemporary Mathematics}, pages 189--206. American Mathematical
  Society, 1984.
\newblock \doi{10.1090/conm/026/737400}.

\bibitem[Lee et~al.(2017)Lee, Schmidt, and Wright]{lee2017improved}
Euiwoong Lee, Melanie Schmidt, and John Wright.
\newblock Improved and simplified inapproximability for k-means.
\newblock \emph{Information Processing Letters}, 120:\penalty0 40--43, 2017.

\bibitem[Lin et~al.(2020)Lin, Ho, Chen, Cuturi, and Jordan]{lin2020fixed}
Tianyi Lin, Nhat Ho, Xi~Chen, Marco Cuturi, and Michael Jordan.
\newblock Fixed-support wasserstein barycenters: Computational hardness and
  fast algorithm.
\newblock \emph{Advances in neural information processing systems},
  33:\penalty0 5368--5380, 2020.

\bibitem[Mahajan et~al.(2012)Mahajan, Nimbhorkar, and
  Varadarajan]{DBLP:journals/tcs/MahajanNV12}
Meena Mahajan, Prajakta Nimbhorkar, and Kasturi~R. Varadarajan.
\newblock The planar k-means problem is np-hard.
\newblock \emph{Theor. Comput. Sci.}, 442:\penalty0 13--21, 2012.
\newblock \doi{10.1016/J.TCS.2010.05.034}.
\newblock URL \url{https://doi.org/10.1016/j.tcs.2010.05.034}.

\bibitem[Matou{\v{s}}ek(2000)]{matouvsek2000approximate}
Ji{\v{r}}{\'\i} Matou{\v{s}}ek.
\newblock On approximate geometric k-clustering.
\newblock \emph{Discrete \& Computational Geometry}, 24\penalty0 (1):\penalty0
  61--84, 2000.

\bibitem[Metelli et~al.(2019)Metelli, Likmeta, and
  Restelli]{metelli2019propagating}
Alberto~Maria Metelli, Amarildo Likmeta, and Marcello Restelli.
\newblock Propagating uncertainty in reinforcement learning via wasserstein
  barycenters.
\newblock \emph{Advances in Neural Information Processing Systems}, 32, 2019.

\bibitem[Micha and Shah(2020)]{micha2020proportionally}
Evi Micha and Nisarg Shah.
\newblock Proportionally fair clustering revisited.
\newblock In \emph{47th International Colloquium on Automata, Languages, and
  Programming (ICALP 2020)}. Schloss Dagstuhl-Leibniz-Zentrum f{\"u}r
  Informatik, 2020.

\bibitem[Nugent and Meila(2010)]{nugent2010overview}
Rebecca Nugent and Marina Meila.
\newblock An overview of clustering applied to molecular biology.
\newblock \emph{Statistical methods in molecular biology}, pages 369--404,
  2010.

\bibitem[Rabin et~al.(2012)Rabin, Peyr{\'e}, Delon, and
  Bernot]{rabin2012wasserstein}
Julien Rabin, Gabriel Peyr{\'e}, Julie Delon, and Marc Bernot.
\newblock Wasserstein barycenter and its application to texture mixing.
\newblock In \emph{Scale Space and Variational Methods in Computer Vision:
  Third International Conference, SSVM 2011, Ein-Gedi, Israel, May 29--June 2,
  2011, Revised Selected Papers 3}, pages 435--446. Springer, 2012.

\bibitem[Ronan et~al.(2016)Ronan, Qi, and Naegle]{ronan2016avoiding}
Tom Ronan, Zhijie Qi, and Kristen~M Naegle.
\newblock Avoiding common pitfalls when clustering biological data.
\newblock \emph{Science signaling}, 9\penalty0 (432):\penalty0 re6--re6, 2016.

\bibitem[Schmidt et~al.(2020)Schmidt, Schwiegelshohn, and
  Sohler]{schmidt2020fair}
Melanie Schmidt, Chris Schwiegelshohn, and Christian Sohler.
\newblock Fair coresets and streaming algorithms for fair k-means.
\newblock In \emph{Approximation and Online Algorithms: 17th International
  Workshop, WAOA 2019, Munich, Germany, September 12--13, 2019, Revised
  Selected Papers 17}, pages 232--251. Springer, 2020.

\bibitem[Song et~al.(2025)Song, Mo, and Ding]{song2025relax}
Shihong Song, Guanlin Mo, and Hu~Ding.
\newblock Relax and merge: A simple yet effective framework for solving fair
  {$k$}-means and {$k$}-sparse wasserstein barycenter problems.
\newblock In \emph{The Thirteenth International Conference on Learning
  Representations}, 2025.

\bibitem[Yang and Ding(2024)]{yang2024approximate}
Qingyuan Yang and Hu~Ding.
\newblock Approximate algorithms for k-sparse wasserstein barycenter with
  outliers.
\newblock \emph{Proceedings of the Thirty-Third International Joint Conference
  on Artificial Intelligence, {IJCAI-24}}, pages 5316--5325, 8 2024.
\newblock \doi{10.24963/ijcai.2024/588}.
\newblock URL \url{https://doi.org/10.24963/ijcai.2024/588}.
\newblock Main Track.

\bibitem[Yuan et~al.(2023)Yuan, Wei, Lv, and Wen]{yuan2023index}
Zhe Yuan, Zhewei Wei, Fangrui Lv, and Ji-Rong Wen.
\newblock Index-free triangle-based graph local clustering.
\newblock \emph{Frontiers of Computer Science}, 18\penalty0 (3):\penalty0
  183404, 2023.
\newblock ISSN 2095-2236.
\newblock \doi{10.1007/s11704-023-2768-7}.
\newblock URL \url{https://doi.org/10.1007/s11704-023-2768-7}.

\bibitem[Zhang et~al.(2023)Zhang, Fan, Tao, Jiang, and Hou]{Zhang2023}
Jing Zhang, Ruidong Fan, Hong Tao, Jiacheng Jiang, and Chenping Hou.
\newblock Constrained clustering with weak label prior.
\newblock \emph{Frontiers of Computer Science}, 18, 12 2023.
\newblock \doi{10.1007/s11704-023-3355-7}.

\end{thebibliography}

\end{document}